\newcommand{\parhead}[1]{\medskip \noindent {\bfseries\boldmath\ignorespaces {#1}}\hskip 0.9em}
\newcommand{\shift}{\mathsf{Shift}}
\newcommand{\Permutation}[1]{\wh{\calS}_{#1}}
\newcommand{\edgefactor}{\textsf{edge-factor}}
\newcommand{\Decode}{\mathsf{Decode}}
\newcommand{\FAIL}{\mathsf{FAIL}}
\newcommand{\into}{\mathrm{in}}
\newcommand{\out}{\mathrm{out}}
\newcommand{\nBR}{\operatorname{nBR}}
\newcommand{\Fresh}{\mathsf{F}}
\newcommand{\NI}{\mathsf{N}}
\newcommand{\Ret}{\mathsf{R}}
\newcommand{\Highmult}{\mathsf{H}}
\newcommand{\Paired}{\mathsf{P}}
\newcommand{\fresh}{\mathfrak{f}}
\newcommand{\n}{\mathfrak{n}}
\newcommand{\ret}{\mathfrak{r}}
\newcommand{\highmult}{\mathfrak{h}}
\newcommand{\paired}{\mathfrak{p}}
\begin{document}

\title{Certifying Euclidean Sections and Finding Planted Sparse Vectors Beyond the $\sqrt{n}$ Dimension Threshold}

\author{Venkatesan Guruswami\thanks{Simons Institute for the Theory of Computing, and Departments of EECS and Mathematics, UC Berkeley.  Email: \url{venkatg@berkeley.edu}. Research supported in part by a Simons Investigator Award and NSF grant CCF-2211972.}
\and Jun-Ting Hsieh\thanks{Carnegie Mellon University. Email: \url{juntingh@cs.cmu.edu}. Supported by NSF CAREER Award \#2047933.}
\and Prasad Raghavendra\thanks{UC Berkeley. Email: \url{raghavendra@berkeley.edu}.
Supported by NSF CCF-2342192.}
}
\date{April 2024}

\maketitle

\begin{abstract}
    We consider the task of certifying that a random $d$-dimensional subspace $X$ in $\R^n$ is well-spread --- every vector $x \in X$ satisfies $c\sqrt{n} \|x\|_2 \leq \|x\|_1 \leq \sqrt{n}\|x\|_2$.
    In a seminal work, Barak et.~al.~\cite{BarakBHKSZ12} showed a polynomial-time certification algorithm when $d \leq O(\sqrt{n})$.
    On the other hand, when $d \gg \sqrt{n}$, the certification task is information-theoretically possible but there is evidence that it is computationally hard~\cite{MaoW21,ChendO22}, a phenomenon known as the information-computation gap.

    In this paper, we give subexponential-time certification algorithms in the $d \gg \sqrt{n}$ regime.
    Our algorithm runs in time $\exp(\wt{O}(n^{\eps}))$ when $d \leq \wt{O}(n^{\frac{1+\eps}{2}})$, establishing a smooth trade-off between runtime and the dimension.

    Our techniques naturally extend to the related planted problem, where the task is to recover a sparse vector planted in a random subspace.
    Our algorithm achieves the same runtime and dimension trade-off for this task.
\end{abstract}

\thispagestyle{empty}
\setcounter{page}{0}
\newpage

\section{Introduction}

For any vector $x \in \R^n$, we know that $\|x\|_2 \leq \|x\|_1 \leq \sqrt{n} \|x\|_2$.
Intuitively, if the upper bound is approximately tight, i.e., $\|x\|_1 \geq c \sqrt{n} \|x\|_2$, then $x$ is ``dense'' or ``incompressible''.
We say that a subspace $X \subseteq \R^n$ is \emph{well-spread} if every $x\in X$ is dense.
More formally, we define the \emph{distortion} of $X$, denoted $\Delta(X)$, as follows,
\begin{equation*}
    \Delta(X) \coloneqq \sup_{x\in X:\ x\neq 0} \frac{\sqrt{n} \|x\|_2}{\|x\|_1} \mper
    \numberthis \label{eq:distortion}
\end{equation*}
The distortion always satisfies $1 \leq \Delta(X) \leq \sqrt{n}$.
Subspaces with large dimension $d$ and small distortion are called good \emph{Euclidean sections} of $\ell_1^n$.
In particular, they provide embeddings of $\ell_2^d$ into $\ell_1^n$ with only a small blow-up in the dimension, thus such subspaces have many applications including error-correcting codes over the reals~\cite{CandesT05,GuruswamiLW08}, compressed sensing~\cite{KashinT07,Donoho06}, high-dimensional nearest-neighbor search~\cite{Indyk06}, and oblivious regression~\cite{dOrsiLNSS21}.

Therefore, there has been a long line of work on constructing good Euclidean sections, including explicit constructions~\cite{Indyk06,Indyk07,GuruswamiLR10} and constructions requiring few random bits (a.k.a.\ partially derandomized constructions) \cite{LovettS08,GuruswamiLW08,IndykS10}.
However, all such constructions suffer in either the dimension or the distortion (see \cite{IndykS10} and references therein).

On the other hand, it is well known that a fully random subspace of $\R^n$ of dimension $d = \Omega(n)$ (for instance, the column space of an $n \times d$ matrix with i.i.d.\ Gaussian or $\pm 1$ entries) has $O(1)$ distortion with high probability \cite{FigielLM77,Kashin77,GarnaevG84}.
However, randomized constructions suffer from the drawback that the output is not guaranteed to \emph{always} satisfy the desired properties.
Thus, \emph{certifying} randomized constructions is often considered as a functional proxy when explicit constructions are hard to come by, as is the case for Euclidean sections.


This motivates the algorithmic task of certifying that a random subspace is well-spread.
With $\exp(d)$ time, this problem is trivial as we can brute-force search over an $\eps$-net in $\R^d$.
The best known non-trivial result is by Barak et.\ al.~\cite{BarakBHKSZ12}, who showed that given a random matrix $A \in \R^{n\times d}$ with i.i.d.\ Gaussian entries and $d \lesssim \sqrt{n}$, there is a polynomial-time algorithm certifying that $\colspan(A)$ has $O(1)$ distortion (with high probability over $A$)\footnote{This was implicitly proved in \cite{BarakBHKSZ12} and the result holds generally for matrices with sub-gaussian entries. An explicit proof was given in \cite{ChendO22}.}.
However, no efficient certification algorithm is known when $d \gg \sqrt{n}$
(note that the problem is harder for larger $d$).

More recently, ``evidence'' of computational hardness was established for the $d \gg \sqrt{n}$ regime in the form of lower bounds against low-degree polynomials~\cite{MaoW21,ChendO22}.
This suggests that there is an \emph{information-computation gap} for this problem with the computational threshold at $d \sim \sqrt{n}$, i.e., any polynomial-time algorithm for $d \gg \sqrt{n}$ would require significant breakthroughs (see e.g.\ \cite{Hopkins18,KuniskyWB19} for expositions of the low-degree hardness framework).

The lower bounds by \cite{MaoW21,ChendO22} leave open the possibility of \emph{subexponential}-time certification algorithms when $d$ is between $\sqrt{n}$ and $n$.
This sets the stage for our first result.

\begin{mtheorem}[Informal \Cref{thm:spread}] \label{thm:certification}
    Fix $\eps \in (0,1)$,
    and let $d, n \in \N$ such that $d = O(n^{\frac{1+\eps}{2}} / \log n)$.
    Let $A \sim \calN(0,1)^{n \times d}$.
    Then, there is a certification algorithm that runs in $2^{\wt{O}(n^{\eps})}$ time and, with probability $1- o(1)$ over $A$, certifies that $\colspan(A)$ has $O(1)$ distortion.
\end{mtheorem}

Notice that \Cref{thm:certification} establishes a smooth trade-off between the dimension $d$ and the runtime.
Specifically, when $\eps = 0$, we have $d \leq \wt{O}(\sqrt{n})$ and the runtime is polynomial, matching the best known algorithm by \cite{BarakBHKSZ12}.
When $\eps$ increases to $1$, the dimension $d$ increases to $n$ while the runtime increases to exponential.

We note that the phenomenon of information-computation gap is widespread across a variety of certification and inference problems.
For these problems, there is a parameter regime (often called the ``hard'' regime) in which it is conjectured that no polynomial-time algorithm exists.
In many cases, smooth trade-offs were established between runtime and the problem parameters, similar to that of \Cref{thm:certification}.
Examples of such trade-offs include runtime vs.\ the number of constraints in refuting random constraint satisfaction problems~\cite{RaghavendraRS17,GuruswamiKM22,HsiehKM23}, runtime vs.\ the approximation factor in polynomial optimization or tensor PCA~\cite{BhattiproluGGLT17,WeinAM19,HsiehKPT24}, runtime vs.\ the sparsity parameter in certifying the restricted isometry property (RIP) for random matrices~\cite{KoiranZ14,DingKWB21}, and runtime vs.\ the sparsity of the hidden vector in sparse PCA~\cite{dOrsiKNS20,DingKWB23}.

\paragraph{Finding planted sparse vector in a random subspace.}
For average-case optimization problems, techniques for certification can often be adapted to solving the \emph{search} problem for the related \emph{planted} model; this is called the \emph{proofs-to-algorithms} paradigm~\cite{FlemingKP19} in the literature.
For the problem of certifying distortion of a random subspace (in which there is no sparse vector), the corresponding planted problem is to find a \emph{sparse} vector planted in a random subspace.

We first formally define the planted sparse vector problem:

\begin{model}[Planted sparse vector problem]
    \label{model:planted-dist}
    Fix an unknown unit vector $v\in \R^n$, and let $d \leq n \in \N$.
    Let $\wt{A}$ be a random $n \times d$ matrix sampled as follows: (1) let $A$ be the random matrix such that the first column is $v$ and the other $d-1$ columns are i.i.d.\ $\calN(0, \frac{1}{n}\Id_n)$ vectors;
    (2) let $R \in \R^{d \times d}$ be an arbitrary unknown rotation matrix;
    (3) set $\wt{A} = A R$.

    The task is that given $\wt{A}$, output a unit vector $\wh{v} \in \R^n$ such that $\angles{\wh{v}, v}^2 \geq 1 - o(1)$.
\end{model}

For concreteness and comparison to prior works, we will focus on the special case where $v$ has noisy Bernoulli-Rademacher entries:

\begin{definition}[Noisy Bernoulli-Rademacher distribution~\cite{ChendO22}] \label{def:nBR}
    Given parameter $\rho \in (0,1)$ and $\sigma \in [0, 1/\sqrt{1-\rho})$, we define $\nBR(\rho,\sigma)$ to be the random variable such that
    \begin{equation*}
        x = \begin{cases}
            \calN(0, \sigma^2/n) & \text{with probability $1- \rho$,} \\
            +\frac{1}{\sqrt{\rho' n}} & \text{with probability $\rho/2$,} \\
            -\frac{1}{\sqrt{\rho' n}} & \text{with probability $\rho/2$,}
        \end{cases}
    \end{equation*}
    where $\rho' \coloneqq \frac{\rho}{1 - (1-\rho)\sigma^2}$.
\end{definition}

The parameter $\rho'$ is set such that $\E_{x\sim \nBR(\rho,\sigma)}[x^2] = 1/n$, so with high probability a vector $v \sim \nBR(\rho,\sigma)^n$ will have $\|v\|_2 \in (1 \pm o(1))$.
For concreteness, one can think of $\rho,\sigma$ as $\frac{1}{\polylog(n)}$.
This distribution is used in the hardness result of \cite{ChendO22}, and is the noisy version of the ``noiseless'' Bernoulli-Rademacher distribution considered in \cite{MaoW21,DiakonikolasK22,ZadikSWB22}\footnote{Surprisingly, \cite{DiakonikolasK22,ZadikSWB22} showed that one can recover $v$ even when $n = d+1$ if $v$ is a \emph{noiseless} Bernoulli-Rademacher vector, but their algorithms break down if one adds an inverse-polynomial amount of noise to $v$.}
and used as the ``spike'' in related sparse recovery problems like sparse PCA~\cite{JohnstoneL09,AminiW08,dOrsiKNS20,DingKWB23}.

The problem of finding the planted sparse vector was introduced by Spielman, Wang, and Wright~\cite{SpielmanWW12} in the context of dictionary learning.
This problem has since received a lot of attention due to various applications in learning theory and optimization (see e.g.~\cite{DemanetH13}).
Barak et.~al.~\cite{BarakKS14} was the first example of the proofs-to-algorithms paradigm for this problem.
Specifically, their algorithm uses the certification algorithm from \cite{BarakBHKSZ12} as a main ingredient, and thus it works when $d \ll \sqrt{n}$ (same as the polynomial-time regime for certification).
Various other algorithms have been proposed~\cite{DemanetH13,QuSW14,MaoW21,DiakonikolasK22,ZadikSWB22}, and the current best known algorithm is by Mao and Wein~\cite{MaoW21} which succeeds when $\rho d \ll \sqrt{n}$ (rather than $d \ll \sqrt{n}$ required in \cite{BarakKS14,HopkinsSSS16}).

On the other hand, for $d \gg \sqrt{n}$ (the hard regime for certification), the lower bounds by \cite{MaoW21,ChendO22} mentioned earlier also apply to the planted problem, as their hardness results are for the (easier) \emph{detection} problem.
Specifically, Chen and d'Orsi~\cite{ChendO22} showed that when $\rho d \geq \wt{\Omega}(\sqrt{n})$, all $\polylog(n)$-degree polynomials fail to distinguish between $A \sim \calN(0,1/n)^{n\times d}$ and $\wt{A}$ sampled from \Cref{model:planted-dist} with planted vector $v \sim \nBR(\rho,\sigma)^n$ for $\rho \gg 1/n$ and $\sigma \leq 1/\polylog(n)$.

Thus, it is natural to consider adapting our techniques for subexponential-time certification to the planted problem in the hard regime.
This is our second result.

\begin{mtheorem}[Informal \Cref{thm:planted-sparse-vector}]
\label{thm:planted-main}
    Let $t \leq d \leq n \in \N$ and $\rho \leq \frac{1}{\polylog(n)}$ such that $t \geq \frac{\rho d^2}{n} \polylog(n)$.
    There is a randomized algorithm with running time $2^{\wt{O}(t)}$ with the following guarantee:
    Given $\wt{A} \in \R^{n\times d}$ drawn from \Cref{model:planted-dist} with planted vector $v \sim \nBR(\rho,\frac{1}{\polylog(n)})^n$, with high probability, the algorithm outputs a unit vector $\wh{v} \in \colspan(\wt{A})$ such that $\angles{v, \wh{v}}^2 \geq 1-o(1)$.
\end{mtheorem}

For intuition, consider parameters $\rho = 1/{\polylog(n)}$ and $d = \wt{O}(n^{\frac{1+\eps}{2}})$ for $\eps \in (0,1)$.
Then, the algorithm recovers $v$ in $2^{\wt{O}(n^\eps)}$ time, the same dimension vs.\ runtime trade-off as established in \Cref{thm:certification}.
Our algorithm in fact works for more general sparse vectors $v$;
see \Cref{thm:planted-sparse-vector} and \Cref{rem:assumption-on-v} for details and discussions on the assumptions of $v$ that we require.


\begin{remark}
    One can consider the harder planted problem where you are given an arbitrary orthogonal basis of the subspace instead of a ``Gaussian basis'' like \Cref{model:planted-dist}.
    In fact, both \cite{HopkinsSSS16} and \cite{MaoW21} started with \Cref{model:planted-dist}, then with some extra work, they showed that the algorithms are robust to exchanging the Gaussian basis for an arbitrary orthogonal basis.
    We leave this as a future work, though we believe similar techniques (like matrix perturbation analysis) used in \cite{HopkinsSSS16,MaoW21} may apply to our case as well.
    On the other hand, all hardness results~\cite{MaoW21,ChendO22} are proved for \Cref{model:planted-dist}.
\end{remark}




\section{Technical Overview}
\label{sec:technical-overview}

\paragraph{Notation.}
For an integer $N$, we will use $[N]$ to denote the set $\{1,2,\dots, N\}$.
For a vector $x \in \R^n$, we use $\|x\|_p$ to denote its $\ell_p$ norm, and for any $S\subseteq [n]$, we write $x_S \in \R^{|S|}$ to denote the vector restricted to coordinates in $S$.
For a matrix $M \in \R^{n\times d}$, we use $\|M\|_2$ to denote its operator (spectral) norm: $\sup_{x\neq 0} \frac{\|Mx\|_2}{\|x\|_2}$.
Moreover, for any $S \subseteq [n]$, we write $M_S \in \R^{|S|\times d}$ to denote the submatrix of $M$ obtained by selecting rows according to $S$.

\paragraph{Organization.}
\Cref{sec:BBH} describes the certification algorithm for $d \leq O(\sqrt{n})$ by \cite{BarakBHKSZ12} using the $2$-to-$4$ norm of $A$.
\Cref{sec:beyond-sqrt-n} explains the barrier of going beyond $\sqrt{n}$ as well as our strategy to bypass it by removing outlier entries of $Ax$.
This motivates a key ingredient in our analysis, which is the elementary symmetric polynomial $P_t(x)$ explained in \Cref{sec:symmetric-polynomials}.
In \Cref{sec:trace-method}, we make a brief detour to explain the trace moment method and the analysis of \cite{Tao12}.
In \Cref{sec:upper-bound-Pt}, we upper bound $P_t(x)$ by bounding the spectral norm of a related random matrix (which we can certify) using the trace method, and explain the crucial idea that symmetrization lowers the spectral norm.
Finally, in \Cref{sec:planted-overview}, we give an overview of how we adapt our techniques to the planted problem.

\subsection{2-to-4 norm as a proxy for sparsity: certification algorithm of \texorpdfstring{\cite{BarakBHKSZ12}}{[BBH+12]}}
\label{sec:BBH}

Consider a matrix $A \in \R^{n \times d}$ with i.i.d.\ $\calN(0,1)$ entries (where $d \ll n$), and let $a_1, a_2,\dots, a_n \in \R^d$ be its rows.
To certify that $\colspan(A)$ has small distortion (as defined in \Cref{eq:distortion}), one needs to certify that $Ax$ is ``sparse'' --- $\|Ax\|_2 / \|Ax\|_1 \leq O(1/\sqrt{n})$ --- for all $x\in \R^d$.
It is well known that $A$ has singular values between $\sqrt{n} (1 \pm o(1))$ with high probability (see \Cref{fact:gaussian-matrix}), so $\|Ax\|_2 \approx \sqrt{n}$ for all unit vectors $x$.
On the other hand, as intuition, consider a random unit vector $x$.
Each $|\angles{a_i,x}|$ is roughly $\Omega(1)$, hence
$\|Ax\|_1 = \sum_{i=1}^n |\angles{a_i, x}| \geq \Omega(n)$.
So, for a random $x$ we have $\|Ax\|_2 / \|Ax\|_1 \leq O(1/\sqrt{n})$, as desired.

The difficult part is to certify a lower bound on $\|Ax\|_1$ for \emph{all} unit vectors $x$.
Thus, in many applications, it is more tractable to consider an alternative proxy for sparsity --- the \emph{$2$-to-$4$ norm}.
Intuitively, vectors with small $4$-norm compared to the $2$-norm are considered well-spread.
The quantity $\max_{x\neq 0} \|Ax\|_4/ \|x\|_2$, denoted $\|A\|_{2\to 4}$, is called the \emph{$2$-to-$4$ norm} of $A$, and upper bounds on $\|A\|_{2\to 4}$ are called \emph{hypercontractivity} inequalities.

Barak et.~al.~\cite[Theorem 7.1]{BarakBHKSZ12} showed that with high probability,
\begin{equation*}
    \|A\|_{2\to 4}^4 \leq n \cdot \parens*{3 + O(1) \cdot \max\parens*{\frac{d}{\sqrt{n}}, \frac{d^2}{n}}} \mper
\end{equation*}
Moreover, this can be efficiently certified via a natural SDP relaxation for maximizing $\|Ax\|_4^4 = \sum_{i=1}^n \angles{a_i,x}^4$, a degree-4 polynomial.
Therefore, when $d \leq O(\sqrt{n})$, we have that $\sum_{i=1}^n \angles{a_i, x}^4 \leq O(n)$ for all unit vectors $x$.
Note that this bound matches the case when most $|\angles{a_i,x}|$ are roughly $\Theta(1)$, which is the case for a random $x$.
Combined with standard tools (\Cref{lem:spread-distortion} and \cite[Proposition 3.4]{ChendO22}) and the fact that $\|Ax\|_2 \approx \sqrt{n}$, this implies that $\colspan(A)$ has $O(1)$ distortion.

We now give a brief overview of the proof by \cite{BarakBHKSZ12}, which is our starting point.
The main idea is to write $\|Ax\|_4^4$ as follows,
\begin{equation*}
    \|Ax\|_4^4 = \sum_{i=1}^n \angles{a_i,x}^4
    = (x^{\otimes 2})^\top \parens*{\sum_{i=1}^n a_i^{\otimes 4}} (x^{\otimes 2}) \mcom
\end{equation*}
where we view $x^{\otimes 2}$ as a $d^2$-dimensional vector and $\sum_{i=1}^n a_i^{\otimes 4}$ as a $d^2 \times d^2$ random matrix.
The main idea of \cite{BarakBHKSZ12} is that by standard matrix concentration inequalities (like the Hanson-Wright inequality~\cite{Ver20} and results from \cite{AdamczakLPT11}), $\sum_{i=1}^n a_i^{\otimes 4}$ concentrates around its mean $\E\sum_{i=1}^n a_i^{\otimes 4} = n \Sigma$ where $\Sigma = \E[a_i^{\otimes 4}]$, again viewed as a $d^2 \times d^2$ matrix.

The naive approach is to bound the operator norm $\|\Sigma\|_2$, then we have $\|Ax\|_4^4 \leq n \cdot \|\Sigma\|_2$.
Unfortunately, there is a rank-$1$ component $ww^\top$ in $\Sigma$ where $w_{ij} = \delta_{ij}$, and $\|ww^\top\|_2 = n$.
The crucial observation is to ``shift'' the entries of $\Sigma$ using the symmetry of $x^{\otimes 2}$ to ``break'' the rank-$1$ component.
Then, the shifted matrix $\shift(\Sigma)$ satisfies $(x^{\otimes 2})^\top \Sigma (x^{\otimes 2}) = (x^{\otimes 2})^\top \shift(\Sigma) (x^{\otimes 2}) \leq \|\shift(\Sigma)\|_2 \leq O(1)$.
This implies that $\|Ax\|_4^4 \leq O(n)$, as desired.
As we will explain later, this symmetrization technique to decrease the matrix norm is central to our analysis and is also the key idea behind the results of \cite{RaghavendraRS17,BGL-random17,BhattiproluGGLT17}.



\subsection{Going beyond \texorpdfstring{$\sqrt{n}$}{sqrt(n)}}
\label{sec:beyond-sqrt-n}

When $d \gg \sqrt{n}$, we face the immediate problem that $\|A\|_{2\to4}^4 \leq n$ is no longer true.
One can easily see this by considering $x = a_1 / \|a_1\|_2$.
In this case, we have $\angles{a_1, x} = \|a_1\|_2 \approx \sqrt{d}$ since $a_1$ is a $d$-dimensional Gaussian vector, and $|\angles{a_i, x}| \approx \Theta(1)$ for $i \neq 1$.
Then, $\|Ax\|_4^4 = \sum_{i=1}^n \angles{a_i,x}^4 \geq \Omega(d^2 + n) = \Omega(d^2) \gg n$.

More generally, for any $S \subseteq [n]$ of size $t \ll d$, the submatrix $A_S \in \R^{t\times d}$ has maximum singular value $\approx \sqrt{d}$, i.e., there is an $x$ such that $\|A_S x\|_2^2 \approx d$.
Suppose the vector $A_S x$ is roughly equally distributed, then we have $\|A_S x\|_4^4 \geq (\frac{d}{t})^2 t = \frac{d^2}{t}$, which means that $\|Ax\|_4^4 \geq \frac{d^2}{t}$.

\parhead{Removing the top entries of $Ax$.}
These counter-examples give an important insight: $\|Ax\|_4^4$ seems to be dominated by just a few very large entries in $Ax$.
In the setting of \Cref{thm:certification}, we have $d = n^{\frac{1+\eps}{2}}$, and in the example above, $\|Ax\|_4^4 \geq \frac{d^2}{t} \gg n$ when $t \ll n^{\eps}$.
One might guess that after removing the top $n^{\eps}$ entries of $Ax$, the resulting vector has small $4$-norm.

We show that this is indeed the case:
\begin{lemma}[Informal \Cref{lem:2-to-4-excluding-top}]
\label{lem:excluding-top-informal}
    For any unit vector $x\in \R^d$, let $T_x \subseteq [n]$, $|T_x| = n - n^{\eps}$ be the set of indices excluding the top $n^{\eps}$ entries of $Ax$.
    Then, $\|A_{T_x} x\|_4^4 \leq O(n)$ with high probability over $A$.
    Moreover, there is an algorithm that certifies this in $2^{\wt{O}(n^{\eps})}$ time.
\end{lemma}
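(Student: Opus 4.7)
My plan is to follow the BBH template from \Cref{sec:BBH} lifted to a higher tensor degree $k \approx t = n^\eps$. The immediate difficulty is that $T_x$ depends combinatorially on $x$ through a thresholding step, so $\|A_{T_x} x\|_4^4$ is not itself a polynomial in $x$. I would therefore first introduce a symmetric-polynomial proxy $P_t(x)$ in the coordinates $y_i = \langle a_i, x\rangle$ of $Ax$ --- the elementary-symmetric object set up in \Cref{sec:symmetric-polynomials} --- with the algebraic property that
\[
\|A_{T_x} x\|_4^4 \;\leq\; \frac{1}{C_t}\,P_t(x) \qquad \text{for every unit } x,
\]
where $C_t$ is an explicit combinatorial constant. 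Morally, $P_t$ averages a ``bottom-$(n-t)$'' type sum over all $\binom{n}{t}$ choices of excluded indices, so that the true minimizing choice of excluded set (which is exactly $T_x$) is dominated up to $C_t$ by the average. This algebraic surrogate is what lets us bypass the combinatorial dependence of $T_x$ on $x$.

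Since $P_t$ has degree $2k$ with $k = O(t)$, it admits a quadratic-form representation $P_t(x) = (x^{\otimes k})^\top M_t (x^{\otimes k})$ for an explicit $d^k \times d^k$ random matrix $M_t$ built from the rows $a_1,\ldots,a_n$ of $A$. In particular $\max_{\|x\|_2 = 1} P_t(x) \leq \|M_t\|_2$, so it suffices to certify $\|M_t\|_2 \leq O(n \cdot C_t)$ --- a plain spectral-norm bound on an explicit random matrix.

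To obtain this spectral bound I would run the trace moment method: bound $\E \operatorname{Tr}(M_t^{2s})$ for large even $s$ and convert via Markov's inequality. The naive moment count will be dominated by ``diagonal'' rank-one pieces analogous to the $ww^\top$ component identified in \Cref{sec:BBH}. To kill those I would apply the symmetrization trick from BBH --- using that $x^{\otimes k}$ is invariant under coordinate permutations, so maximizing $P_t(x)$ is the same as maximizing over symmetric tensors of the shifted operator --- and work instead with $\shift(M_t)$. The remaining combinatorics is an enumeration of closed walks of length $2s$ on $[n]\times[d]$, with vertices stratified into the fresh/returning/paired/high-multiplicity types hinted at by the preamble macros, paralleling the analyses in \cite{RaghavendraRS17,BGL-random17,BhattiproluGGLT17}. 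The cancellations produced by the shift should yield $\|\shift(M_t)\|_2 \leq O(n \cdot C_t)$ with probability $1-o(1)$, which gives the uniform bound $\|A_{T_x}x\|_4^4 \leq O(n)$ over all unit $x$.

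The certification algorithm is then transparent: write down $\shift(M_t)$ as an explicit polynomial function of $A$, and compute its operator norm. Since $d \leq n^{(1+\eps)/2}$ and $k = \wt{O}(n^\eps)$, the matrix $\shift(M_t)$ has size $d^k = 2^{\wt{O}(n^\eps)}$, and eigenvalue computation runs in $2^{\wt{O}(n^\eps)}$ time. I expect the main obstacle to be the combinatorial/probabilistic bookkeeping in the trace step: without symmetrization the low-rank contributions in $\E \operatorname{Tr}(\shift(M_t)^{2s})$ dwarf the target bound, and carefully identifying which closed walks survive the shift (and with what weight) is delicate. A secondary obstacle is choosing $P_t$ together with its normalizing constant $C_t$ so that the algebraic upper bound in Step 1 lines up exactly with the spectral bound coming out of the trace analysis in Step 3.
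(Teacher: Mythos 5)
Your outline follows the right overall skeleton (elementary symmetric proxy $P_t$, quadratic form in $x^{\otimes O(t)}$, trace moment method to certify a spectral bound, $n^{O(t)}$-time certification by computing an operator norm), but two of the steps as you have written them would fail. First, the claimed algebraic relation $\|A_{T_x}x\|_4^4 \leq \tfrac{1}{C_t} P_t(x)$ for a fixed constant $C_t$ cannot hold: the left side has degree $4$ in $x$ and the right has degree $4t$, and taking, say, $\langle a_i,x\rangle^4 = a$ for exactly $t+1$ indices and $0$ elsewhere gives $\|A_{T_x}x\|_4^4 = a$ while $P_t(x) = (t+1)!\,a^t$, which is $\ll a$ for small $a$, so no nontrivial $C_t$ exists. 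The paper instead raises to the $t$-th power, expands $(\sum_{i\in T_x}\langle a_i,x\rangle^4)^t$, bounds the multilinear part by $P_t(x)$ via $S\subseteq T_x \subseteq [n]$, and controls the non-multilinear terms using the additional certified bound $\langle a_i,x\rangle^2 \leq \tau = O(d/t)$ for $i\in T_x$, which comes from an exhaustive singular-value check $\sigma_{\max}(A_S)\leq(1+o(1))\sqrt d$ over all $|S|=t$ (this check, \ref{item:submatrix-singular-value} of \Cref{lem:2-to-4-excluding-top}, is an ingredient your proposal omits entirely and is also reused in the proof of \Cref{thm:spread}). The relation you actually want is $(\|A_{T_x}x\|_4^4)^t \leq 2^{O(t)}\cdot O(n + \tfrac{d^2}{t}\log^2 n)^t$, followed by a $t$-th root.

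Second, and more fundamentally, you conflate the $\shift$ operation of \Cref{def:shift-entries} (which zeroes certain diagonal-type entries of each $a_i^{\otimes 4}$, inherited from \cite{BarakBHKSZ12}) with the permutation symmetrization $\wt{M} = \E_{\Pi,\Pi'\sim\Permutation{2t}}[\Pi M \Pi']$ introduced in \Cref{sec:symmetrization-lowers-trace}. These are distinct and both are necessary. Even after applying $\shift$, the matrix $M$ of \Cref{eq:M-matrix} still has $\|M\|_2 \geq \Omega(d)^{2t} \gg n^t$ (see the discussion before \Cref{eq:wt-M-matrix} and \Cref{rem:trace-lower-bound}), so your proposed certificate ``compute the operator norm of $\shift(M_t)$'' would output a useless bound. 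The key quantitative gain of a factor $\approx t^{t\ell}$ comes from averaging over random permutations, which the trace analysis captures through the decoding-success probability (\Cref{lem:decoding-success-probability}): randomizing the within-block matching of $2t$ circle vertices makes each block succeed only with probability $\approx 1/(2t-1)!!$. Your reference to the fresh/return/paired/high-multiplicity leg types and \cite{RaghavendraRS17,BGL-random17,BhattiproluGGLT17} suggests you have the right picture of the walk-encoding machinery, but you attribute the crucial norm reduction to the wrong operation; without $\wt{M}$ in place of $\shift(M_t)$, the trace calculation simply does not give the $O(n + \tfrac{d^2}{t}\log^2 n)^{t\ell}$ bound of \Cref{lem:trace-bound}.
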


Given \Cref{lem:excluding-top-informal}, \Cref{thm:certification} follows almost immediately.
The proof is given in \Cref{sec:certification}.

The first part that $\|A_{T_x}x\|_4^4 \leq O(n)$ can in fact be proved via a probabilistic argument.
The challenge is to certify it.
At a high-level, we need a ``proxy'' for $\|Ax\|_4^4$ that is robust to $t \coloneqq n^{\eps}$ large entries of $Ax$ (obviously without knowing where the large entries are because $T_x$ depends on $x$).
In light of this, we turn our attention to the \emph{elementary symmetric polynomials}.

\subsection{Elementary symmetric polynomials}
\label{sec:symmetric-polynomials}

We define
\begin{equation*}
    Q_t(z) \coloneqq t! \sum_{S \subseteq[n]: |S|=t} \prod_{i\in S} z_i^4 \mcom
\end{equation*}
and for $A$ with rows $a_1,a_2,\dots,a_n\in \R^d$,
\begin{equation*}
    P_t(x) \coloneqq Q_t(Ax) = t! \sum_{S\subseteq [n]: |S|=t} \prod_{i\in S} \angles{a_i, x}^4 \mcom
\end{equation*}
where we omit the dependence on $a_1,\dots,a_n$ for simplicity.

The scaling $t!$ here is because $Q_t(z)$ is exactly the ``multilinear'' terms when one expands out $\|z\|_4^{4t} = (\sum_{i=1}^n z_i^4)^t$, and thus our target upper bound is $O(n)^t$.
Moreover, $P_t(x)$ can be computed in $n^{O(t)}$ time.

The following is our key lemma.

\begin{lemma}[Informal \Cref{lem:sum-of-distinct-products}]
\label{lem:Pt-bound-informal}
    With high probability over $A$, there is an algorithm that runs in $n^{O(t)}$ time and certifies that for all unit vectors $x\in \R^d$,
    \begin{equation*}
        P_t(x) \leq O\parens*{n + \frac{d^2}{t} \log^2 n}^t \mper
    \end{equation*}
\end{lemma}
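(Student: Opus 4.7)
The plan is to reinterpret $P_t(x)$ as a quadratic form in $x^{\otimes 2t}$, and then bound the spectral norm of the underlying matrix after exploiting the permutation symmetry of $x^{\otimes 2t}$, in direct analogy with (but at a much higher tensor power than) the $\shift$ technique of \cite{BarakBHKSZ12}.

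Using $\angles{a_i, x}^4 = \angles{a_i^{\otimes 2}, x^{\otimes 2}}^2$, I would first write
\begin{equation*}
    P_t(x) \;=\; \sum_{\substack{(i_1,\ldots,i_t) \in [n]^t \\ \text{distinct}}} \angles{a_{i_1}^{\otimes 2} \otimes \cdots \otimes a_{i_t}^{\otimes 2},\, x^{\otimes 2t}}^2 \;=\; \angles{x^{\otimes 2t},\, M\, x^{\otimes 2t}},
\end{equation*}
where $M$ is the PSD $d^{2t} \times d^{2t}$ matrix obtained by summing the rank-one tensors $(a_{i_1}^{\otimes 2} \otimes \cdots \otimes a_{i_t}^{\otimes 2})(a_{i_1}^{\otimes 2} \otimes \cdots \otimes a_{i_t}^{\otimes 2})^\top$ over all ordered distinct $t$-tuples. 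A direct bound $P_t(x) \leq \|M\|_2$ is hopelessly loose, because $M$ carries large-norm rank-one ``diagonal'' components, analogous to the $ww^\top$ component in the $t=1$ case. The fix is the $\shift$ trick: since $x^{\otimes 2t}$ is invariant under any permutation $\pi$ of its $2t$ tensor slots, we have $P_t(x) \leq \|\shift(M)\|_2$, where $\shift(M) \coloneqq \frac{1}{(2t)!}\sum_\pi P_\pi^\top M P_\pi$ is the symmetrization of $M$ under slot permutations. The gain is that $\shift$ distributes the problematic diagonal mass over many permutation copies, producing a matrix with substantially smaller spectral norm.

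Next I would bound $\|\shift(M)\|_2$ by the trace moment method: for even $k = O(\log n)$,
\begin{equation*}
    \|\shift(M)\|_2^{k} \;\leq\; \mathrm{tr}\bigl(\shift(M)^k\bigr),
\end{equation*}
so it suffices to bound $\E\,\mathrm{tr}(\shift(M)^k)$ and conclude a high-probability statement via Markov. Expanding this trace yields a sum over closed walks of length $k$, each of which records a sequence of edge labels (distinct $t$-tuples in $[n]$) together with permutation-induced identifications among the $2t$ tensor slots at consecutive matrix copies. Taking Gaussian expectation forces edge labels to ``pair up'', and one can classify walk shapes using the vertex/edge types $\Fresh, \NI, \Ret, \Highmult, \Paired$ declared in the preamble and enumerate each shape. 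The target is
\begin{equation*}
    \E\,\mathrm{tr}\bigl(\shift(M)^k\bigr) \;\leq\; \parens*{n + \tfrac{d^2}{t}\log^2 n}^{tk},
\end{equation*}
which gives the claimed spectral bound after taking $k$-th roots; the $\log^2 n$ factor comes from the $k=O(\log n)$ trace length combined with moment-concentration slack. Heuristically, the $n$ term captures walks whose dominant contributions are free row sums over $[n]$ (the $\Fresh$ type), while the $d^2/t$ term captures column-index pairings in $[d]$ on edges whose mass has been spread out by the $\shift$ averaging, the $1/t$ saving arising from normalizing by $(2t)!$ permutations.

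The algorithmic half follows by a standard sum-of-squares argument: the spectral norm bound $\|\shift(M)\|_2 \leq C$ translates into a degree-$O(t)$ SoS proof of $C\cdot\|x\|_2^{4t} - P_t(x) \geq 0$, which can be found and verified by an SDP of size $n^{O(t)}$. The main technical obstacle will be the trace-method combinatorics, especially establishing the $1/t$ savings on the $d^2$ term: this requires showing that only a $1/t$ fraction of the $(2t)!$ permutations introduced by $\shift$ place mass in diagonal-heavy configurations. Dovetailing this saving with the walk enumeration for arbitrary mixtures of $\Fresh/\Highmult$ and $\Ret/\Paired$ vertices, and controlling cross-type interactions so that they do not inflate the bound, is where the bulk of the technical work will lie.
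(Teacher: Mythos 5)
Your reduction of $P_t(x)$ to a quadratic form in $x^{\otimes 2t}$ and the use of the trace moment method with the leg taxonomy is the right skeleton and matches the paper, but you have conflated two distinct uses of the symmetry of $x^{\otimes 2t}$ and only kept one of them, and the one you dropped is load-bearing. What you call $\shift$ is the full permutation averaging $\wt{M}=\E_{\Pi,\Pi'}[\Pi M \Pi']$; this is indeed in the paper and is what earns the $1/t$ saving on the $d^2$ term. But the paper's $\shift$ (Definition~\ref{def:shift-entries}) is a different operation: an \emph{entry-level} reshuffle inside each $a_i^{\otimes 4}$ block that zeroes out entries $\shift(a_i^{\otimes 4})_{(j_1,j_1),(j_3,j_4)}$ with $j_1\notin\{j_3,j_4\}$ and rescales certain others, while preserving the quadratic form. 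This is applied \emph{before} the permutation averaging and is what licenses requirement~\ref{item:square-neighbors} of \Cref{def:valid-labeling} in the walk combinatorics.

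Your plan omits this entirely: your $M$ is the unshifted $t!\sum_S \bigotimes_{i\in S} a_i^{\otimes 4}$, and your $\shift(M)$ is just $\wt{M}$. That does not work. Permutation averaging dilutes the ``two circle vertices on each side of a square get the same label'' contributions by roughly $(1/(2t-1)!!)^{2\ell}\approx t^{-2t\ell}$, but it cannot kill them, and \Cref{rem:lower-bound-without-shifting} shows they still give $\E\tr(\wt{M}^\ell)\geq \Omega\parens*{\frac{dn}{t^2}}^{t\ell}$, which is much larger than $O\parens*{n+\frac{d^2}{t}}^{t\ell}$ throughout the interesting parameter range (e.g., $d=n^{(1+\eps)/2}/\log n$, $t=n^\eps$, for which $\frac{dn}{t^2}=n^{(3-3\eps)/2}/\log n \gg n$ when $\eps$ is small). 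The reason averaging cannot subsume the entry-level shift is that the trace of $\wt{M}^\ell$ is a nonnegative sum over these bad configurations; only a pre-averaging modification of the entries (valid because it fixes the quadratic form against the symmetric vector $x^{\otimes 2t}$) can actually remove them rather than merely diluting. So the proposal, as written, would bound $\|\wt{M}\|_2$ by something strictly larger than the claimed $O(n+\frac{d^2}{t}\log^2 n)^t$; you need to first replace $a_i^{\otimes 4}$ by $\shift(a_i^{\otimes 4})$, and correspondingly handle the resulting $\Paired_2$ legs and the $\mu_4$ edge-factor bookkeeping, before running the symmetrized trace argument.
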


When $d \leq \wt{O}(n^{\frac{1+\eps}{2}})$ and $t = n^{\eps}$, we get $P_t(x) \leq O(n)^t$.
This is what we expect for a typical $x$, where most $|\angles{a_i,x}|$ are $O(1)$ and $P_t(x) \approx n^t \cdot O(1)^{4t} = O(n)^t$.

A careful reader may notice that an upper bound on $Q_t(z)$ does not imply that the vector $z$ is dense.
For example, $Q_t(z) = 0$ whenever $z$ has less than $t$ nonzero entries.
Thus, \Cref{lem:Pt-bound-informal} does not immediately imply that $Ax$ is dense for all $x$.
However, notice that we are allowed $n^{O(t)}$ time, so we can exhaustively search over all subsets $S \subseteq [n]$ of size $t$ and ``check'' that no $Ax$ is concentrated on $t$ coordinates.
This comes down to computing the maximum singular value of $A_S$ for each $S$.

The overall analysis (proof of \Cref{lem:excluding-top-informal} from \Cref{lem:Pt-bound-informal}) is as follows.
For any unit vector $x\in \R^d$, since $\ol{T}_x$ is the set with the $t$ largest $\angles{a_i,x}^4$, and $\|A_{\ol{T}_x} x\|_2^2 \leq \sigma_{\max}(A_{\ol{T}_x})^2 \leq d(1+o(1))$ verified by exhaustive search, it follows that $\angles{a_i,x}^2 \leq \frac{d}{t}(1+o(1)) \coloneqq \tau$ for all $i\in T_x$.
Then, we consider $\|A_{T_x} x\|_4^{4t} = (\sum_{i\in T_x} \angles{a_i,x}^4)^t$.
For the multilinear term in the expansion of $\|A_{T_x} x\|_4^{4t}$, we have
\begin{equation*}
    t! \sum_{S \subseteq T_x: |S|=t} \prod_{i\in S} \angles{a_i,x}^4
    \leq t! \sum_{S \subseteq [n]: |S|=t} \prod_{i\in S} \angles{a_i,x}^4
    = P_t(x) \leq O(n)^t \mper
\end{equation*}
Note that here we changed the summation of $S \subseteq T_x$ to $S \subseteq [n]$ (since $T_x \subseteq [n]$ and all terms are non-negative).
Crucially, we get an upper bound that does not depend on $T_x$, thus allowing us to use \Cref{lem:Pt-bound-informal}.

For the rest of the terms in $(\sum_{i\in T_x} \angles{a_i,x}^4)^t$, we use combinations of \Cref{lem:Pt-bound-informal} and $\angles{a_i,x}^2\leq \tau$ to bound, which ultimately proves \Cref{lem:excluding-top-informal}.

\subsection{Overview of the trace moment method}
\label{sec:trace-method}

In this section, we make a detour and give an overview of the trace moment method.
Readers familiar with the analysis of random matrix norm bounds by Tao~\cite{Tao12} may skip directly to \Cref{sec:upper-bound-Pt}.
The starting point of the trace moment method is the inequality
\begin{equation*}
    \frac{1}{N} \cdot \tr(M^{\ell}) \leq \|M\|_2^\ell \leq \tr(M^{\ell}) \mcom
\end{equation*}
which holds for any $N \times N$ symmetric matrix $M$ and any even integer $\ell$.
Expanding $\tr(M^\ell)$ and taking expectation (when $M$ is a random matrix), we get
\begin{equation*}
    \E\bracks*{\|M\|_2^\ell} \leq \E\bracks*{\tr(M^\ell)}
    = \E \sum_{i_1,i_2,\dots,i_\ell \in [N]} M_{i_1,i_2} M_{i_2,i_3} \cdots M_{i_{\ell},i_1} \mper
\end{equation*}
Notice that the summation is over closed walks on $[N]$: $i_1 \to i_2 \to \cdots \to i_{\ell} \to i_1$.
We can view these as closed walks on the complete graph with $N$ vertices.

We now briefly explain Tao's analysis~\cite{Tao12} on bounding $\E[\tr(M^{\ell})]$ for the (symmetric) random sign matrix, where each entry is uniformly random $\pm 1$.
The most important observation is that if a closed walk uses any edge (in the $N$-vertex complete graph) an \emph{odd} number of times, then the expectation is zero.
Thus, we only need to consider closed walks that use each edge an even number of times, each of which contributes $1$ in the trace.

\parhead{Encoding closed walks.}
The main idea is to \emph{combinatorially} count all such ``even'' closed walks by providing an \emph{encoding}.
Following the terminology of \cite{Tao12}, we refer to a step $i_k \to i_{k+1}$ in the walk as a \emph{leg}.
We call a leg ``fresh'' if it traverses an edge that we haven't seen before, and ``return'' if it traverses a previously used edge.
Since each edge must be traversed at least twice, we can have at most $\ell/2$ fresh legs.
We will view a fresh leg as creating an ``active'' edge, while a return leg closes an active edge; all edges must be closed in the end.
Now, we can encode a closed walk as follows: (1) pick a starting vertex in $[N]$, (2) label each leg as ``fresh'' or ``return'' such that there are at most $\ell/2$ fresh legs, (3) for a fresh leg we have $N$ choices to choose the destination, and for now, assume that each return leg has a unique choice so that we don't need to specify a destination.
Then, we upper bound $\E[\tr(M^\ell)]$ by upper bounding the number of encodings:
\begin{equation*}
    \E\bracks*{\tr(M^\ell)} \leq N \cdot 2^{\ell} \cdot N^{\ell/2} = N \cdot (2\sqrt{N})^{\ell} \mper
\end{equation*}
Then, setting $\ell \gg \log N$, by Markov's inequality, with probability $1-\frac{1}{\poly(N)}$ we have
\begin{equation*}
    \|M\|_2 \leq \tr(M^\ell)^{1/\ell} \leq N^{O(1/\ell)} \cdot 2\sqrt{N} \leq (2 + o(1)) \sqrt{N} \mper
\end{equation*}
This is the desired upper bound with the correct constant factor $2$.

\parhead{Unforced return legs.}
There is an important assumption that we made in the above analysis: each return leg has a unique choice.
This is obviously not true for all walks, but we will prove that this is true for \emph{most} walks.
In \cite{Tao12}, a return leg with one unique choice is called \emph{forced}, and ones with multiple choices are called \emph{unforced}.
We need to prove that the closed walks with any unforced return legs are negligible.
The key observation is that if a return leg is unforced at vertex $v$, i.e., there are multiple active edges incident to $v$, it must be the case that there were previous fresh legs that went back to $v$.
Such legs are called \emph{non-innovative}, as they create new active edges but not new vertices, and as a result, each non-innovative leg requires only a factor $\ell$ --- the length of the walk --- to specify its destination (as opposed to $N$ for fresh legs).

The main idea is to \emph{charge} the extra information required for unforced return legs to the non-innovative legs.
We need to prove (1) an upper bound on the unforced return legs in terms of the number of non-innovative legs, and (2) the cost of a non-innovative leg (with extra information) is still $\ll N$, negligible compared to a fresh leg.
This is a key technical challenge in most prior works on (sharp) norm bounds for more complicated random matrices~\cite{Tao12,JonesPRTX22,HsiehKPX23}.



\subsection{Upper bound on \texorpdfstring{$P_t(x)$}{Pt(x)}}
\label{sec:upper-bound-Pt}
We now give an overview of the proof of \Cref{lem:Pt-bound-informal}, which is the most technical part.

We start by writing $\angles{a_i, x}^4 = \angles{a_i^{\otimes 4}, x^{\otimes 4}} = \angles{\shift(a_i^{\otimes 4}), x^{\otimes 4}}$, where $\shift$ is formally defined in \Cref{def:shift-entries} and is also done in \cite{BarakBHKSZ12} to remove the large rank-1 component in the matrix (see \Cref{rem:lower-bound-without-shifting} for a lower bound without this).
Then,
\begin{equation*}
    P_t(x) = \angles*{t! \sum_{S\subseteq[n]: |S|=t} \bigotimes_{i\in S} \shift(a_i^{\otimes 4}), x^{\otimes 4t}}
    = (x^{\otimes 2t})^\top M (x^{\otimes 2t}) \mcom
\end{equation*}
where we view $x^{\otimes 2t}$ as a vector of dimension $d^{2t}$, and
\begin{equation*}
    M \coloneqq t! \sum_{S\subseteq[n]:|S|=t} \bigotimes_{i\in S} \shift(a_i^{\otimes 4}) \in \R^{d^{2t} \times d^{2t}} \mper
\end{equation*}
Here, we view $a_i^{\otimes 4}$ as a $d^{2} \times d^{2}$ matrix.

The natural approach to bound $P_t(x)$ is to bound the spectral norm of $M$, since $P_t(x) \leq \|M\|_2 \cdot \|x^{\otimes 2t}\|_2^2 = \|M\|_2$ for any unit vector $x$.
Unfortunately, it can be shown that $\|M\|_2 \geq \Omega(d)^{2t} \gg n^t$ (see \Cref{rem:trace-lower-bound}).
In fact, using $\|M\|_2$ to bound $P_t(x)$ does not improve on \cite{BarakBHKSZ12} at all.

\parhead{Symmetrization lowers spectral norm.}
To remedy this, observe that the flattened vector $x^{\otimes 2t}$ in the quadratic form is highly symmetric.
Specifically, let $\Pi \in \R^{d^{2t} \times d^{2t}}$ be \emph{any} permutation matrix that maps an index $J = (j_1,j_2,\dots,j_{2t})\in [d]^{2t}$ to $(j_{\pi(1)}, j_{\pi(2)}, \dots, j_{\pi(2t)})$ for some permutation $\pi$ over $2t$ elements.
Then, we have that
\begin{equation*}
    \Pi x^{\otimes 2t} = x^{\otimes 2t}\mper
\end{equation*}
We will use $\Permutation{2t}$ to denote the collection of all such matrices $\Pi$.
It follows that for any $\Pi, \Pi'\in \Permutation{2t}$,
\begin{equation*}
    (x^{\otimes 2t})^\top M (x^{\otimes 2t}) = (x^{\otimes 2t})^\top (\Pi M \Pi') (x^{\otimes 2t}) \mper
\end{equation*}
In particular,
\begin{equation*}
    (x^{\otimes 2t})^\top M (x^{\otimes 2t}) = (x^{\otimes 2t})^\top \E_{\Pi,\Pi'\sim \Permutation{2t}}[\Pi M \Pi'] (x^{\otimes 2t})
    \leq \norm{\wt{M}}_2 \mcom
\end{equation*}
where $\wt{M} = \E_{\Pi,\Pi'\sim \Permutation{2t}}[\Pi M \Pi']$.

As $\wt{M}$ is the symmetrized version of $M$, one may expect that the spectral norm decreases.
Indeed, this is also the main idea behind the results of \cite{RaghavendraRS17,BGL-random17,BhattiproluGGLT17}.
Our key technical result is that the symmetrization lowers the spectral norm by a factor of roughly $t^t$, and we prove it using the trace moment method.

\begin{lemma}[Informal \Cref{lem:trace-bound}]
\label{lem:trace-informal}
    Let $\ell = \log n$. Then,
    \begin{equation*}
        \E \tr(\wt{M}^\ell) \leq d^{2t} \cdot O\parens*{n + \frac{d^2}{t}\log^2 n}^{t\ell} \mper
    \end{equation*}
\end{lemma}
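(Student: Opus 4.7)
The plan is to apply the trace moment method in the spirit of the analysis recalled in \Cref{sec:trace-method}, exploiting the tensor-product structure of $\wt M$. I would first expand
\begin{equation*}
\E\,\tr(\wt M^\ell) \;=\; \sum_{J_1,\dots,J_\ell \in [d]^{2t}} \E \prod_{k=1}^{\ell} \wt M_{J_k, J_{k+1}},
\end{equation*}
with cyclic indices, and unroll $\wt M$ as an average over pairs $(\Pi_k,\Pi_k') \in \Permutation{2t}^{2}$ of $\Pi_k M \Pi_k'$, and $M$ itself as $\sum_{(i^{(k)}_1,\dots,i^{(k)}_t)\text{ distinct}} \bigotimes_{r=1}^{t} \shift(a_{i^{(k)}_r}^{\otimes 4})$. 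Each summand becomes a monomial in the Gaussian entries of $A$; after taking expectation, only monomials in which every individual entry $a_i(j)$ appears an even number of times survive.

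Next, I would encode each surviving term as a closed walk carrying $t\ell$ row-labels in $[n]$ and $4t\ell$ column-labels in $[d]$, with the permutations $\Pi_k,\Pi_k'$ specifying how the column-labels glue to the multi-indices $J_k, J_{k+1}$. Following Tao's framework, I would classify each label occurrence as either \emph{fresh} or \emph{return}, and split the contribution of each step $k$ into a ``generic'' regime and a ``collision'' regime. In the generic regime, the $t$ row-indices of the step are fresh and the column pairings within and between blocks are forced by the Gaussian even-moment constraint combined with $\shift$ (which removes the diagonal rank-$1$ piece of $\E[a_i^{\otimes 4}]$, ruling out the worst degenerate pairings); this contributes at most $n^t$ per step. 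In the collision regime, a single repeated column slot would naively cost $d^2$, but the averaging over $\Permutation{2t}^2$ spreads the alignment uniformly across $(2t)!$ reorderings of the slots, so the effective cost of aligning one repeated pair drops from $d^2$ to $O(d^2/t)$, with $\polylog(n)$ overhead absorbing the cost of specifying the destinations of non-innovative and unforced-return legs (length-of-walk factors of order $\poly(\ell)=\polylog(n)$).

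Combining the per-step estimates yields a factor of $O(n + d^2 \log^2 n / t)^{t}$ for each of the $\ell$ steps, together with a factor of $d^{2t}$ for the choice of the root multi-index $J_1 \in [d]^{2t}$, giving the target bound $d^{2t}\cdot O(n + d^2\log^2 n/t)^{t\ell}$. The main technical obstacle is the ``mixed'' regime, where a step uses some fresh and some repeated row-indices while its column-legs interleave with legs from earlier steps: as in prior sharp trace analyses \cite{Tao12,JonesPRTX22,HsiehKPX23}, unforced return legs must be charged against non-innovative legs, and one must verify that the double permutation symmetry of $\wt M$ yields exactly the claimed $t^{t\ell}$ savings over the naive trace bound without either over- or under-counting. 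Formalizing this charging argument and controlling its overhead by $\polylog(n)$ factors, so as not to degrade the $d^2/t$ scaling, is where I expect the bulk of the combinatorial work to lie.
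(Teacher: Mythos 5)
Your roadmap is essentially the paper's: expand the trace, encode the surviving (even) monomials as closed walks with fresh/return/non-innovative/high-multiplicity leg types, charge unforced returns to non-innovative legs, and extract the $t^t$ savings from the averaging over $\Permutation{2t}$. That said, the place you flag as ``the bulk of the combinatorial work'' is exactly where the paper's argument has a specific mechanism that your sketch does not yet pin down, and I would not call the current description a correct account of it.

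The savings in the paper do not come from ``$(2t)!$ reorderings'' diluting the cost of aligning a repeated pair. The key object is the \emph{circle-to-square matching} induced by $\Pi_k$ within a single block: since the $4t$ circle-to-square legs of a block pair up onto $t$ squares, each $\Pi_k$ determines a perfect matching on $2t$ circle slots, of which there are $(2t-1)!! \ge (2t/e)^t$. The decoder succeeds only if the Fresh legs match their Paired partners (in $i!$ ways, where $i$ is the number of Fresh circle-to-square legs in that block) and all other circle-to-square legs in the block land on squares that produce consistent labels, which happens in exactly one way. That ratio $i!/(2t-1)!! \le (t/e)^{-(t-i)}$, summed over blocks and combined with the leg-count identity $\n_s+\highmult_s+\ret_s=\fresh_c+\n+\highmult$, is what gives $(t/e)^{-\frac12(\fresh_c+\n+\highmult)}$, and after this discount the $(d\log n)^{\fresh_c}$ contribution becomes $(d\log n/\sqrt{t})^{\fresh_c}$, i.e.\ the $d^2/t$ scaling. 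You also omit the Paired leg type (distinguishing $\Paired_1$ from $\Paired_2$), which is how the paper encodes the fact that two circle legs feed a single square, and which is precisely where the $\shift$ operation enters: without requirement~\ref{item:square-neighbors} forcing $\Paired_2$ pairs to return immediately, there is a $\Omega(dn/t^2)^{t\ell}$ family of walks (see \Cref{rem:lower-bound-without-shifting}) that defeats the target bound. So while the architecture of your plan is right, the identification of the matching argument, the $\Paired$ bookkeeping, and the way $\shift$ kills the degenerate paired configuration are the load-bearing details you would need to supply, and the hand-wave as written would either double-count or fail to exclude the bad $\Paired_2$-type walks.
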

Here, we need $\ell = \log n$ (as opposed to $\log(\dim(\wt{M})) = 2t\log d$) since we only need $d^{2t/\ell} \leq O(1)^t$, and saving this $t$ factor turns out to be important in our analysis.
From the discussion above, \Cref{lem:Pt-bound-informal} follows immediately since $P_t(x) = (x^{\otimes 2t})^\top \wt{M} (x^{\otimes 2t}) \leq \|\wt{M}\|_2 \leq \tr(\wt{M}^{\ell})^{1/\ell}$.
We also remark that \Cref{lem:trace-informal} is tight; see \Cref{rem:trace-lower-bound}.

\parhead{Trace method for $\wt{M}$.}
First, note that $\E \tr(\wt{M}^\ell) = \E \tr \E_{\Pi_1,\Pi_1',\dots,\Pi_\ell,\Pi_\ell'}[\Pi_1 M \Pi_1' \cdots \Pi_{\ell} M \Pi_{\ell}']$.
The trace is the sum of length-$\ell$ closed walks $I_1 \to I_2 \to \cdots \to I_\ell \to I_1$ on the indices (in $[d]^{2t}$).
For each sequence $\bpi = (\Pi_1,\Pi_1',\dots, \Pi_\ell, \Pi_\ell')$, this corresponds to a \emph{labeling} of a specific \emph{structure} defined by $\bpi$, viewed as a graph containing circle and square vertices (see \Cref{def:structure,def:valid-labeling}, and \Cref{fig:structure-examples} for examples).
The circle and square vertices receive labels in $[d]$ and $[n]$ respectively.
We require that each labeled edge (an element in $[n] \times [d]$) appears an even number of times, otherwise the expectation is zero.

We next define an encoding of the labeling such that 
(1) every valid labeling for a given structure can be encoded, and 
(2) a labeling is uniquely determined given an encoding and a structure (if decoding succeeds).
Similar to the analysis in \Cref{sec:trace-method}, we mark each leg as ``Fresh'', ``Return'', ``Non-innovative'' or ``High-multiplicity''.
We also introduce an extra ``Paired'' type since in our case a pair of two circle vertices lead to the same square vertex.
In addition, we need to handle potential unforced return legs by charging extra information to the non-innovative and high-multiplicity legs.
These are all carefully done in \Cref{sec:encoding}.

\parhead{Decoding success probability.}
We now formalize the idea that symmetrization reduces the spectral norm as follows,
\begin{equation*}
    \E \tr(\wt{M}^\ell) \leq \E_{\bpi} \sum_{\sigma:\text{ labelings}} \1(\sigma \text{ valid for } \bpi)
    \leq \sum_{\xi:\text{ encoding}} \Pr_{\bpi}[\Decode(\xi,\bpi) \text{ succeeds}] \mper
\end{equation*}
We upper bound the probability of the decoder succeeding (for any encoding $\xi$) in \Cref{lem:decoding-success-probability}.
At a high level, for any structure (\Cref{fig:structure-examples}), the circle vertices are paired and connected to a square vertex.
During decoding, each square vertex receives two labels from the two legs (coming from previous circles), and the decoding fails if there is a conflict.
Now, when we randomize the structure $\bpi$, which we can view as rewiring the edges in the structure and pairing up circle vertices within each block, the decoding succeeds only if all circle vertices are paired up in the correct way.
For each block, there are $2t$ circle vertices, and there are $(2t-1)!! \geq \Omega(t)^t$ number of ways to group them into pairs.
Thus, for all $\ell$ blocks, we get a factor $t^{t\ell}$ improvement.
We defer the details to \Cref{sec:decoding}, and the proof of \Cref{lem:trace-informal} is completed in \Cref{sec:proof-of-trace-bound}.

\subsection{Finding planted sparse vector}
\label{sec:planted-overview}

It turns out that our upper bound on $P_t(x)$ can be used for the related planted problem.
More specifically, the proof of \Cref{lem:Pt-bound-informal} shows a stronger statement:
the upper bound on $P_t(x)$ exhibits a degree-$4t$ Sum-of-Squares (SoS) proof, meaning that any pseudo-distribution $\mu$ satisfies that $\pE_{\mu}[P_t(x)]$ is small (see \Cref{sec:sos} for background on SoS proofs and pseudo-distributions).

On the other hand, suppose $\wt{A} = AR$ is drawn from \Cref{model:planted-dist} with a planted sparse vector $v$ such that $Q_t(v)$ is large, and let $\wt{P}_t(x) \coloneqq Q_t(\wt{A}x)$.
Then, we must have $\max_{\mu} \pE_{\mu}[\wt{P}_t(x)] \geq Q_t(v)$, where the maximum is over all pseudo-distributions satisfying the unit sphere constraint.
This is because taking $x = r_1 \coloneqq R^\top e_1$ (the first row of $R$) gives $\wt{P}_t(r_1) = Q_t(v)$.
Moreover, the pseudo-distribution that maximizes $\wt{P}_t(x)$ can be computed in $n^{O(t)}$ time using the SoS algorithm.

Thus, the main intuition is that this pseudo-distribution $\mu$ must have significant support on vectors close to $r_1$, otherwise $\pE_{\mu}[\wt{P}(x)]$ cannot be large.
We formalize this in \Cref{lem:large-y1} and prove that $\pE_{\mu}[\angles{r_1,x}^{2t}] \geq e^{-o(t)} = (1-o(1))^t$.

Now, we use a rounding algorithm of \cite{BarakKS15} (\Cref{lem:rounding-powers}) to obtain a list of $2^{\wt{O}(t)}$ unit vectors with the guarantee that one of them, say $\wh{x}$, is close to $r_1$, which means that $\wh{v} = \wt{A} \wh{x}$ is close to $v$.
The next challenge is to identify a ``good'' $\wh{x}$ among this list.
The natural idea is to take the vector which is the ``most compressed'', i.e., the one whose top $\rho n$ entries have the largest norm.
\Cref{lem:selecting-the-vector} proves that this indeed works, which completes the proof.

\section{Certifying Spread of a Random Subspace}
\label{sec:certification}

Recall from \Cref{eq:distortion} that the distortion of a subspace $X$ is defined as $\Delta(X) = \sup_{x\in X: x\neq 0} \frac{\sqrt{n} \|x\|_2}{\|x\|_1}$.
In the following, we state an equivalent notion of spreadness.

\begin{definition}[Spreadness property of a subspace] \label{def:spread-subspace}
    A subspace $X \subseteq \R^n$ is $(t,\eps)$-spread if for every $x\in X$ and every $S\subseteq [n]$ with $|S| \leq t$, we have
    \begin{equation*}
        \norm*{x_{\ol{S}}}_2 \geq \eps \cdot \norm{x}_2 \mper
    \end{equation*}
\end{definition}

In other words, $X$ is $(t,\eps)$-spread if any vector $x\in X$ still has large ($\geq \eps$ fraction) norm after removing any $t$ coordinates.
The relationship between the distortion and spread of a subspace was proved in \cite{GuruswamiLR10}:

\begin{lemma}[Lemma 2.11 of \cite{GuruswamiLR10}]  \label{lem:spread-distortion}
    Let $X\subseteq \R^n$ be a subspace.
    \begin{enumerate}[(1)]
        \item If $X$ is $(t,\eps)$-spread, then $\Delta(X) \leq \sqrt{\frac{n}{t}} \cdot \eps^{-2}$.
        \item Conversely, $X$ is $\parens*{\frac{n}{2\Delta(X)^2}, \frac{1}{4\Delta(X)}}$-spread.
    \end{enumerate}
\end{lemma}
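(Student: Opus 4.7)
\textbf{Proof plan for \Cref{lem:spread-distortion}.} Both directions follow from essentially a single observation: spreadness controls $\|x\|_1$ via the peelings of top coordinates, and conversely $\|x\|_1$ being large forces the mass to be distributed across many coordinates.

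\emph{Part (1), spread $\Rightarrow$ low distortion.} Fix a nonzero $x\in X$ and let $S\subseteq[n]$, $|S|=t$, index the top $t$ coordinates of $x$ in absolute value. Write $\|x\|_1=\|x_S\|_1+\|x_{\ol S}\|_1$ and bound each piece separately. Since every entry in $S$ dominates every entry outside $S$, I have $\|x_S\|_1\ge t\cdot\|x_{\ol S}\|_\infty$. For the tail, I combine the trivial inequality $\|x_{\ol S}\|_2^2\le\|x_{\ol S}\|_\infty\cdot\|x_{\ol S}\|_1$ with the spread hypothesis $\|x_{\ol S}\|_2\ge\eps\|x\|_2$, to get $\|x_{\ol S}\|_1\ge\eps^2\|x\|_2^2/\|x_{\ol S}\|_\infty$. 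Adding the two estimates and applying AM--GM in $\|x_{\ol S}\|_\infty$ yields $\|x\|_1\ge 2\eps\sqrt{t}\,\|x\|_2$, hence $\Delta(X)\le\sqrt{n/t}/(2\eps)$, which is stronger than the stated $\sqrt{n/t}\cdot\eps^{-2}$ whenever $\eps\le 1$.

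\emph{Part (2), low distortion $\Rightarrow$ spread.} Set $\Delta=\Delta(X)$, $t=n/(2\Delta^2)$, and fix $x\in X$ together with an arbitrary $S\subseteq[n]$ with $|S|\le t$. By definition of $\Delta$, $\|x\|_1\ge\sqrt{n}/\Delta\cdot\|x\|_2$. Cauchy--Schwarz gives $\|x_S\|_1\le\sqrt{|S|}\,\|x_S\|_2\le\sqrt{t}\,\|x\|_2$, so the tail satisfies
\[
\|x_{\ol S}\|_1\;\ge\;\Big(\tfrac{\sqrt{n}}{\Delta}-\sqrt{t}\Big)\|x\|_2\;=\;\Big(1-\tfrac{1}{\sqrt{2}}\Big)\tfrac{\sqrt{n}}{\Delta}\|x\|_2.
\]
Another application of Cauchy--Schwarz in the opposite direction, $\|x_{\ol S}\|_1\le\sqrt{n-|S|}\,\|x_{\ol S}\|_2\le\sqrt{n}\,\|x_{\ol S}\|_2$, then gives $\|x_{\ol S}\|_2\ge(1-1/\sqrt{2})\,\|x\|_2/\Delta\ge\|x\|_2/(4\Delta)$, which is exactly the claimed $(t,\eps)$-spreadness with $\eps=1/(4\Delta)$.

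\emph{Where the work is.} There is no genuine obstacle: the argument is entirely elementary and amounts to Cauchy--Schwarz plus the layer-cake-style identity $\|y\|_2^2\le\|y\|_\infty\|y\|_1$. The only care needed is to track constants so that the threshold $(t,\eps)=(n/(2\Delta^2),\,1/(4\Delta))$ comes out exactly as stated in Part (2), which just requires verifying $1-1/\sqrt{2}\ge 1/4$.
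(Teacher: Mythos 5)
The paper does not reprove this lemma; it is cited directly as Lemma 2.11 of \cite{GuruswamiLR10}, so there is no internal proof for me to compare against. That said, your proof is correct and is essentially the standard argument one finds in \cite{GuruswamiLR10}-style expositions, so it is worth spelling out briefly what you did.

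\textbf{Part (1).} The decomposition $\|x\|_1=\|x_S\|_1+\|x_{\ol S}\|_1$ with $S$ the top-$t$ coordinates, the bound $\|x_S\|_1\ge t\|x_{\ol S}\|_\infty$, the interpolation $\|x_{\ol S}\|_2^2\le\|x_{\ol S}\|_\infty\|x_{\ol S}\|_1$ combined with the spread hypothesis, and the AM--GM over $\|x_{\ol S}\|_\infty$ are all exactly right, and the degenerate case $\|x_{\ol S}\|_\infty=0$ is ruled out by the spread hypothesis when $\eps>0$ and $x\neq 0$. You end up with the cleaner bound $\Delta(X)\le\frac{1}{2\eps}\sqrt{n/t}$, which dominates the stated $\eps^{-2}\sqrt{n/t}$ for $\eps\le 1$; the weaker constant in the statement is just what \cite{GuruswamiLR10} chose to record, not a gap in your argument.

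\textbf{Part (2).} Lower-bounding $\|x\|_1$ via the definition of $\Delta$, upper-bounding $\|x_S\|_1$ by Cauchy--Schwarz, subtracting, and converting $\|x_{\ol S}\|_1$ back to $\|x_{\ol S}\|_2$ by another Cauchy--Schwarz is the canonical route, and your constant check $1-1/\sqrt{2}\ge 1/4$ is correct. Nothing to fix.

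In short: the proof is correct, elementary, and in Part (1) slightly sharper than the cited statement. Since the paper merely quotes this lemma from \cite{GuruswamiLR10}, your write-up would serve as a self-contained replacement if one wanted to avoid the external citation.
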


In light of \Cref{lem:spread-distortion}, to prove that $\Delta(X) \leq O(1)$, it suffices to prove that $X$ is $(\Omega(n),\Omega(1))$-spread.

We now state our main result for certification.

\begin{theorem}[Formal version of \Cref{thm:certification}] \label{thm:spread}
    Let $d, n \in \N$ and $\eps \in (0,1)$ such that $d = n^{\frac{1+\eps}{2}} / \log n$ and $n^{\eps} \log^6 n \leq n$.
    Let $A \sim \calN(0,1)^{n \times d}$.
    Then, there is a certification algorithm that runs in $n^{O(n^{\eps})}$ time and, with probability $1- \frac{1}{\poly(n)}$ over $A$, certifies that $\colspan(A)$ is $(\alpha n, 1/2)$-spread, where $\alpha$ is a universal constant.
\end{theorem}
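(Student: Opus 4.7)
The plan is to set $t = n^{\eps}$ and combine \Cref{lem:Pt-bound-informal} (the uniform certificate $P_t(x)\le O(n)^t$) with a brute-force spectral check over $t$-subsets of rows to obtain a pointwise bound $\|A_{T_x}x\|_4^4 \le O(n)$ on $Ax$ after dropping its $t$ largest entries, and then to convert this $\ell_4$ control into the claimed $(\alpha n,1/2)$-spreadness via Cauchy--Schwarz. The algorithm runs three routines on $A$ in time $n^{O(n^{\eps})}$: (i) the $n^{O(t)}$-time certifier of \Cref{lem:Pt-bound-informal}, producing a certificate $P_t(x) \le O(n + \tfrac{d^2}{t}\log^2 n)^t = O(n)^t$ for every unit $x\in\R^d$, where the second equality uses $\tfrac{d^2}{t}\log^2 n = n$ for the given $d,t$; (ii) an enumeration over every $S\subseteq [n]$ with $|S|\le t$ together with a spectral-norm computation certifying $\sigma_{\max}(A_S)^2 \le d(1+o(1))$; and (iii) a polynomial-time check that the singular values of $A$ lie in $\sqrt{n}(1\pm o(1))$. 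Routines (ii) and (iii) succeed with probability $1-1/\poly(n)$ by standard Gaussian singular-value concentration together with the union bound $\binom{n}{t}\cdot e^{-\Omega(d)} \le 1/\poly(n)$; this is where the hypothesis $n^\eps\log^6 n \le n$ is used.

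Given these certificates, I would derive $\|A_{T_x}x\|_4^4 \le O(n)$, where $T_x\subseteq [n]$ is the complement of the top-$t$ set of $i\mapsto \angles{a_i,x}^2$. Since $\sum_{i\in\overline{T_x}}\angles{a_i,x}^2 \le \sigma_{\max}(A_{\overline{T_x}})^2 \le d(1+o(1))$ by (ii), pigeonholing gives $\angles{a_i,x}^2 \le \tau \coloneqq d(1+o(1))/t$ for every $i\in T_x$. Expanding $\bigl(\sum_{i\in T_x}\angles{a_i,x}^4\bigr)^t$ as a multinomial and grouping by the support size $s\le t$ of the multi-index, the $s=t$ (all-distinct) piece is at most $P_t(x) \le O(n)^t$ after enlarging $S\subseteq T_x$ to $S\subseteq [n]$. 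For each $s<t$, the pointwise bound $\angles{a_i,x}^{4(m_i-1)} \le \tau^{2(m_i-1)}$ yields $\prod_{i\in S}\angles{a_i,x}^{4 m_i} \le \tau^{2(t-s)}\prod_{i\in S}\angles{a_i,x}^4$, so the total contribution at support size $s$ is bounded by $s^t\tau^{2(t-s)} \cdot P_s(x)/s! \le e^s (s\tau^2)^{t-s}\,P_s(x)$. Using $s\tau^2 \le d^2/t = n/\log^2 n$ together with the worst-case bound $P_s(x) \le O(n^{1+\eps}/s)^s$ from \Cref{lem:Pt-bound-informal}, a direct calculation shows each $s<t$ contribution is at most $O(n)^t/\log^2 n$, so the total over all $s$ remains $O(n)^t$. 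Taking $t$-th roots gives $\|A_{T_x}x\|_4^4 \le O(n)$.

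Finally, to verify $(\alpha n,1/2)$-spreadness, fix $y = Ax$ with $\|x\|_2=1$ and any $S\subseteq[n]$ with $|S|\le \alpha n$. Splitting $y_S = y_{S\cap T_x}+y_{S\cap \overline{T_x}}$ and applying Cauchy--Schwarz, $\|y_{S\cap T_x}\|_2^2 \le |S|^{1/2}\,\|y_{T_x}\|_4^2 \le \sqrt{\alpha n}\cdot O(\sqrt{n}) = O(\sqrt{\alpha})\,n$, while $\|y_{S\cap\overline{T_x}}\|_2^2 \le \sigma_{\max}(A_{\overline{T_x}})^2 \le d = o(n)$ because $\eps<1$. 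Combined with $\|y\|_2^2 \ge (1-o(1))n$ from routine (iii), this yields $\|y_{\overline S}\|_2^2 \ge (1-O(\sqrt{\alpha})-o(1))\|y\|_2^2 \ge \tfrac14\|y\|_2^2$ for $\alpha$ a sufficiently small universal constant, i.e., $\|y_{\overline S}\|_2 \ge \tfrac12\|y\|_2$, matching \Cref{def:spread-subspace}. The main technical obstacle is the combinatorial balancing in the second paragraph: the elementary symmetric bound $P_s(x)$ is loose for small $s$ and the pointwise bound $\angles{a_i,x}^2 \le \tau$ is loose on its own, and it is only by pairing them through $s\tau^2\le n/\log^2 n$ that the non-multilinear contributions drop below $O(n)^t$; the conversion from the resulting $\ell_4$ bound to spreadness in the final paragraph is then standard.
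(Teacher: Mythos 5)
Your proposal is correct and follows essentially the same route as the paper: set $t=n^{\eps}$, combine the certified bound $P_t(x)\le O(n)^t$ with brute-force verification of $\sigma_{\max}(A_S)^2\le O(d)$ over all $t$-subsets $S$, expand $\bigl(\sum_{i\in T_x}\langle a_i,x\rangle^4\bigr)^t$ by support size and balance the symmetric-polynomial bound against the pointwise bound $\tau$, then convert $\|A_{T_x}x\|_4^4 \le O(n)$ into $(\alpha n,1/2)$-spreadness via Cauchy--Schwarz. The only cosmetic differences are in the multi-index overcount (you use $s^t/s!$ where the paper uses $t!\binom{t-1}{t-s}/s!$) and in how loosely you state the per-$s$ contribution bound, but both accountings close correctly to $O(n)^t$ after absorbing $t$ into the $O(1)^t$ factor.
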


We remark that our algorithm works not just for Gaussian matrices but also matrices with more general random variables; see \Cref{lem:trace-bound} for the requirements.


\subsection{Removing the top entries of \texorpdfstring{$Ax$}{Ax}}

As discussed in \Cref{sec:beyond-sqrt-n}, the barrier of the certification algorithm of \cite{BarakBHKSZ12} at $d \gg \sqrt{n}$ is that if $x$ is highly correlated with some row $a_i$ in $A$ (i.e., $|\angles{a_i,x}|$ is large), then $\|Ax\|_4^4$ can be much larger than $n$.
Our intuition is that the number of such large entries in $Ax$ must be very small.
Thus, if we remove the top few entries of $Ax$, then the $4$-norm is at most $O(n)$, the desired bound.

The next lemma states that this is indeed the case and that we can certify this.

\begin{lemma} \label{lem:2-to-4-excluding-top}
    Let $d, n \in \N$ and $\eps \in (0,1)$ such that $d = n^{\frac{1+\eps}{2}} / \log n$ and $n^{\eps} \log^6 n \leq n$.
    Let $a_1,\dots,a_n \in \R^d$ be random vectors with i.i.d.\ $\calN(0,1)$ entries.
    Then, there is a certification algorithm that runs in $n^{O(n^{\eps})}$ time and, with probability $1- \frac{1}{\poly(n)}$ over $a_1,\dots,a_n$, certifies the following:
    \begin{enumerate}[(1)]
        \item For any $S \subseteq[n]$ of size $|S| = n^{\eps}$, $\sum_{i\in S} \angles{a_i,x}^2 \leq (1+o_n(1)) \cdot d$ for all unit vectors $x\in \R^d$.
        \label{item:submatrix-singular-value}

        \item For any unit vector $x \in \R^d$,
        \begin{equation*}
            \sum_{i\in T_x} \angles{a_i, x}^4 \leq O(n) \mper
        \end{equation*}
        where $T_x \subseteq [n]$, $|T_x| = n-n^{\eps}$ is the set of indices excluding the top $n^{\eps}$ with the largest $\angles{a_i, x}^2$.
        \label{item:4-norm}
    \end{enumerate}
\end{lemma}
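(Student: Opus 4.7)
My plan splits along parts (1) and (2).

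For part (1), the algorithm brute-forces over all $\binom{n}{n^\eps} = n^{O(n^\eps)}$ subsets $S \subseteq [n]$ of size $n^\eps$, computing $\|A_S\|_2^2$ via standard linear algebra in polynomial time for each $S$ and checking that it is at most $(1+o(1))\, d$. The total runtime is $n^{O(n^\eps)}$. For correctness, each $A_S \in \R^{n^\eps \times d}$ is an $n^\eps \times d$ Gaussian matrix with $n^\eps \ll d$, so standard Gaussian matrix concentration (Gordon's inequality combined with Lipschitz concentration of the spectral norm) gives $\|A_S\|_2^2 \leq d(1+o(1))$ with probability $1 - e^{-\omega(n^\eps \log n)}$; a union bound over the $\binom{n}{n^\eps}$ subsets yields failure probability $1/\poly(n)$.

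For part (2), set $t = n^\eps$. First, I would invoke \Cref{lem:sum-of-distinct-products} with each parameter $s = 1, 2, \dots, t$ (at a total cost of $\sum_{s \leq t} n^{O(s)} = n^{O(t)}$) to certify $P_s(x) \le O(n + (d^2/s) \log^2 n)^s$ for all unit $x \in \R^d$. Since $d = n^{(1+\eps)/2}/\log n$ gives $d^2 \log^2 n = n^{1+\eps} = nt$, these bounds specialize to $P_s(x) \le O(nt/s)^s$, which is $O(n)^s$ at the endpoint $s = t$. From part (1) applied to the top-$t$ subset $\ol{T}_x$, the average of $\angles{a_i, x}^2$ over $i \in \ol{T}_x$ is at most $d/t \cdot (1+o(1)) =: \tau$, so $\angles{a_i, x}^2 \leq \tau$ for every $i \in T_x$.

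The heart of the proof raises the target to the $t$-th power. Writing $y_i \coloneqq \angles{a_i, x}$,
\begin{equation*}
\Bigl(\sum_{i \in T_x} y_i^4\Bigr)^{t} \;=\; \sum_{(i_1,\dots,i_t) \in T_x^t} \prod_{k=1}^t y_{i_k}^4 \;=\; \sum_{s=1}^t V_s,
\end{equation*}
where $V_s$ collects the tuples with exactly $s$ distinct indices. The multilinear piece is immediate: $V_t \leq t! \sum_{|S|=t,\, S \subseteq [n]} \prod_{i \in S} y_i^4 = P_t(x) \leq O(n)^t$. For $s < t$, the bound $y_i^2 \leq \tau$ gives $\prod_\ell y_{j_\ell}^{4 m_\ell} \leq \tau^{2(t-s)} \prod_\ell y_{j_\ell}^4$ for any multiplicity pattern $(m_1, \dots, m_s)$, so grouping by the underlying set of distinct values yields $V_s \leq \tau^{2(t-s)} \cdot \mathrm{St}(t, s) \cdot P_s(x)$, where $\mathrm{St}(t, s)$ denotes the Stirling number of the second kind (the $s!$ factor from surjections cancels against the $s!$ in $P_s/s! \geq e_s$). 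Using the estimate $\mathrm{St}(t, t-r) \leq t^{2r}/(2^r r!)$ (a partition into $t - r$ blocks is encoded by $r$ binary merges modulo order) together with the identity $t\tau = d$, the $s = t - r$ contribution collapses to $V_{t-r} \leq O(n)^t \cdot (d^2 e/(2n))^r / r!$ (the $e^r$ absorbs the slack $(t/(t-r))^{t-r} \leq e^r$), and summing over $r$ produces at most $O(n)^t \cdot \exp(d^2 e/(2n))$.

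The main obstacle is to argue $\exp(d^2 e/(2n)) = (1+o(1))^t$, so that extracting the $t$-th root yields $\sum_{i \in T_x} y_i^4 \leq O(n)$. This is precisely where the hypothesis bites: $d^2/n = n^\eps/\log^2 n = t/\log^2 n = o(t)$, so the exponent is $o(t)$ and the factor is $e^{o(t)} = (1+o(1))^t$. The delicate balance $\tau = d/t$, which forces $t\tau = d$, is what converts the combinatorial growth $t^{2r}$ from the Stirling estimate into $d^{2r}$; the hypothesis $d^2 \leq nt/\log^2 n$ then makes the resulting exponential controllable. Without either ingredient, the bound would degrade super-polynomially in $n$.
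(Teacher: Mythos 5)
Your proposal is correct and follows essentially the same strategy as the paper: part (1) via brute-force over subsets with Gaussian concentration, and part (2) by raising $\sum_{i\in T_x}\angles{a_i,x}^4$ to the $t$-th power, grouping tuples by their support, and combining the certified bounds $P_s(x) \leq O(n + (d^2/s)\log^2 n)^s$ with the pointwise bound $\angles{a_i,x}^2\leq\tau=d/t\,(1+o(1))$. The only divergence is in the combinatorial count of ordered tuples with a given support of size $s=t-r$: the paper uses the crude bound $t!\binom{t-1}{t-s}$ and ends at $O(n)^t\cdot 2^{2t}$, whereas you use the exact surjection count $s!\,\mathrm{St}(t,s)$ together with $\mathrm{St}(t,t-r)\leq t^{2r}/(2^r r!)$, which gives the tighter $O(n)^t\cdot\exp(O(d^2/n))=O(n)^t\cdot e^{o(t)}$ — a cosmetic improvement, since both collapse to $O(n)$ after taking the $t$-th root.
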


Proving \ref{item:submatrix-singular-value} is straightforward by applying the following standard matrix concentration result and a union bound.

\begin{fact}[See e.g.~\cite{Ver20}] \label{fact:gaussian-matrix}
    Let $m \leq n$, and let $A \in \R^{n \times m}$ be a matrix with i.i.d.\ $\calN(0,1)$ entries.
    Then, for every $t \geq 0$, with probability at least $1 - 2e^{-t^2/2}$, we have $\sqrt{n} - \sqrt{m} - t \leq \sigma_{\min}(A) \leq \sigma_{\max}(A) \leq \sqrt{n} + \sqrt{m} + t$.
\end{fact}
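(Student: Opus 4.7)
The plan is to prove the two parts in sequence, with part (2) leveraging part (1) together with the bound $P_t(x) \leq O(n)^t$ supplied by \Cref{lem:sum-of-distinct-products}. Throughout, set $t := n^{\eps}$ and $\ol{T}_x := [n] \setminus T_x$ (the top-$t$ indices).

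For part (1), each submatrix $A_S \in \R^{t \times d}$ with $|S| = t$ has i.i.d.\ Gaussian entries. Applying \Cref{fact:gaussian-matrix} with deviation parameter $s = \Theta(\sqrt{t \log n})$ gives $\sigma_{\max}(A_S) \leq \sqrt{d} + \sqrt{t} + \Theta(\sqrt{t \log n})$ with probability $\geq 1 - n^{-\Omega(t)}$. The hypotheses $d = n^{(1+\eps)/2}/\log n$ and $n^{\eps}\log^6 n \leq n$ make $\sqrt{t} + \Theta(\sqrt{t\log n}) = o(\sqrt{d})$, hence $\sigma_{\max}(A_S)^2 \leq d(1+o(1))$, i.e., $\sum_{i \in S}\angles{a_i,x}^2 \leq d(1+o(1))$ for every unit $x$. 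A union bound over $\binom{n}{t} \leq n^t$ subsets keeps the total failure probability at $1/\poly(n)$. Certification is brute-force enumeration: compute $\sigma_{\max}(A_S)$ for each $S$ in $\poly(n)$ time, for a total of $n^{O(t)} = n^{O(n^{\eps})}$.

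For part (2), condition on the events of part (1) and of \Cref{lem:sum-of-distinct-products}. Fix a unit $x$, set $w_i := \angles{a_i,x}^4$, and let $F := \sum_{i \in T_x} w_i = \|A_{T_x} x\|_4^4$. Since $\ol{T}_x$ consists of the $t$ indices with largest $\angles{a_i,x}^2$, part (1) applied to $\ol{T}_x$ forces the $(t+1)$-th largest value of $\angles{a_i,x}^2$ to be at most $d(1+o(1))/t =: \tau$, so $w_i \leq \tau^2$ for every $i \in T_x$. Enlarging the index set from $T_x$ to $[n]$ gives $t! \cdot e_t(w_{T_x}) \leq t! \cdot e_t(w) = P_t(x) \leq O(n)^t$, where $e_k$ denotes the $k$-th elementary symmetric polynomial. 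The crux of the argument is the reverse-Maclaurin-type inequality
\[
 t! \cdot e_t(w) \;\geq\; \prod_{k=0}^{t-1}\bigl(p_1(w) - kM\bigr),
\]
valid for any non-negative $w$ with $w_i \leq M$, where $p_1(w) = \sum_i w_i$. I would prove it by induction on $t$ using the identity
$p_1 \cdot e_k = (k+1)\,e_{k+1} + \sum_{|S|=k}\bigl(\sum_{i\in S} w_i\bigr)\prod_{i \in S} w_i$
combined with $\sum_{i \in S} w_i \leq kM$, which yields $(k+1)\,e_{k+1} \geq (p_1 - kM)\,e_k$ and iterates to the claim.

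Applying the inequality with $M = \tau^2$ and $p_1 = F$: if $F \geq 2 t \tau^2$, every factor is $\geq F/2$ and $(F/2)^t \leq t! \cdot e_t(w_{T_x}) \leq O(n)^t$ forces $F \leq O(n)$; otherwise $F < 2t\tau^2 = O(d^2/t) = O(n)$, using $d^2/t = n/\log^2 n \leq n$ from the hypotheses. Either way $F \leq O(n)$, which is exactly part~(2). The main obstacle is the reverse-Maclaurin inequality itself, since it flips the usual direction of Maclaurin by critically exploiting the $w_i \leq M$ constraint. A naive approach that expands $F^t = (\sum w_i)^t$ into monomials and bounds non-multilinear terms via $e_k(w)$ for $k < t$ fails, because \Cref{lem:sum-of-distinct-products} only supplies $P_k(x) \leq O(n + d^2\log^2 n / k)^k$, which is much larger than $O(n)^k$ for $k \ll t$; the inductive polynomial identity sidesteps this by converting the $e_t$-bound directly into a $p_1$-bound, without passing through any intermediate $e_k$.
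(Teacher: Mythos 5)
You have proved the wrong statement. The statement in question is \Cref{fact:gaussian-matrix}, the standard bound $\sqrt{n}-\sqrt{m}-t \leq \sigma_{\min}(A) \leq \sigma_{\max}(A) \leq \sqrt{n}+\sqrt{m}+t$ on the singular values of an $n\times m$ Gaussian matrix. The paper offers no proof of this fact and simply cites it as standard; a proof would combine Gordon's min--max inequality (giving $\E\,\sigma_{\min}(A)\geq\sqrt n-\sqrt m$ and $\E\,\sigma_{\max}(A)\leq\sqrt n+\sqrt m$) with Gaussian concentration for the $1$-Lipschitz functions $A\mapsto\sigma_{\min}(A)$ and $A\mapsto\sigma_{\max}(A)$, and has nothing to do with $T_x$, $P_t$, or \Cref{lem:sum-of-distinct-products}. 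Your proposal is instead an argument for \Cref{lem:2-to-4-excluding-top}, which \emph{uses} \Cref{fact:gaussian-matrix} as an input.

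That said, read as a proof of \Cref{lem:2-to-4-excluding-top}, your route is genuinely different from the paper's and appears sound. The paper expands $\bigl(\sum_{i\in T_x}\angles{a_i,x}^4\bigr)^t$, groups monomials by their support size $s$, and invokes \Cref{lem:sum-of-distinct-products} at every scale $s\leq t$, with the pointwise bound $\angles{a_i,x}^2\leq\tau$ for $i\in T_x$ contributing a factor $\tau^{2(t-s)}$ that exactly compensates the loose estimate $P_s(x)\leq O(d^2\log^2 n/s)^s$ for $s\ll t$. You instead use \emph{only} $P_t$ at the top scale and convert $e_t$-control into $p_1$-control through the reverse-Maclaurin inequality $t!\,e_t(w)\geq\prod_{k=0}^{t-1}\bigl(p_1(w)-kM\bigr)$ for $0\leq w_i\leq M$; your Newton-identity induction $(k+1)e_{k+1}\geq(p_1-kM)e_k$ is correct, and the two-case analysis $F\gtrless 2t\tau^2$ closes cleanly. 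One inaccuracy in your commentary: the claim that the expand-and-bound approach ``fails'' because $P_k(x)\leq O(n+d^2\log^2 n/k)^k\gg O(n)^k$ for small $k$ is wrong --- that is precisely the approach the paper takes, and the $\tau^{2(t-s)}$ factor from the pointwise bound restores the target $O(n)^t$ after summing over $s$.
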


For \ref{item:4-norm}, first note that \ref{item:submatrix-singular-value} implies that $\angles{a_i, x}^2 \leq \frac{d}{n^{\eps}} (1+o_n(1))$ for all $i\in T_x$.
Then, we consider $\parens*{\sum_{i\in T_x} \angles{a_i,x}^4}^t$.
For starters, let's focus on the term $t! \sum_{S \subseteq T_x: |S|=t} \prod_{i\in S} \angles{a_i, x}^4$, i.e., the terms where the indices are distinct.
The main observation is that we can upper bound this by $P_t(x) \coloneqq t! \sum_{S \subseteq [n]: |S|=t} \prod_{i\in S} \angles{a_i, x}^4$, where we replace $T_x$ with $[n]$.
Note that this quantity crucially does not depend on $T_x$.

The bulk of our proof is then to prove the following lemma:

\begin{lemma} \label{lem:sum-of-distinct-products}
    Let $t \leq d \leq n$ be integers such that $t \log^6 n \leq n + \frac{d^2}{t} \log^2 n$.
    Let $a_1,\dots,a_n \in \R^d$ be random vectors with i.i.d.\ $\calN(0,1)$ entries.
    Then, there is a certification algorithm that runs in $n^{O(t)}$ time and, with probability $1 - \frac{1}{\poly(n)}$ over $a_1,\dots,a_n$, certifies that for all unit vectors $x\in \R^d$,
    \begin{equation*}
        P_t(x) \coloneqq t!\sum_{S \subseteq [n]: |S|=t} \prod_{i\in S} \angles*{a_i, x}^4 \leq O\parens*{n + \frac{d^2}{t}\log^2 n}^t \mper
    \end{equation*}
\end{lemma}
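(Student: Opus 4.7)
The plan is to prove \Cref{lem:sum-of-distinct-products} by realizing $P_t(x)$ as a quadratic form on the symmetric tensor $x^{\otimes 2t}$, replacing the coefficient matrix by its symmetrization in order to drastically lower its operator norm, and then bounding that operator norm by the trace moment method. Concretely, since $x^{\otimes 2}$ is symmetric, the shift operator of \cite{BarakBHKSZ12} gives $\angles{a_i, x}^4 = \angles{\shift(a_i^{\otimes 4}), x^{\otimes 4}}$, and grouping tensor factors yields
\[
P_t(x) = (x^{\otimes 2t})^\top M (x^{\otimes 2t}), \qquad M := t!\sum_{S\subseteq[n]:\,|S|=t} \bigotimes_{i\in S} \shift\parens*{a_i^{\otimes 4}} \in \R^{d^{2t}\times d^{2t}}.
\]
Because $x^{\otimes 2t}$ is invariant under any permutation of its $2t$ factors, the quadratic form is unchanged if we replace $M$ by $\wt{M} := \E_{\Pi,\Pi'\sim \Permutation{2t}}[\Pi M \Pi']$, so $P_t(x) \leq \|\wt{M}\|_2$. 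The task thus reduces to proving $\|\wt{M}\|_2 \leq O(n + (d^2/t)\log^2 n)^t$ with high probability over $A$.

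To bound the operator norm, I take $\ell := \log n$ (even) and use $\|\wt{M}\|_2^\ell \leq \tr(\wt{M}^\ell)$, so that the goal becomes
\[
\E \tr(\wt{M}^\ell) \leq d^{2t} \cdot O\parens*{n + \tfrac{d^2}{t}\log^2 n}^{t\ell}.
\]
The $d^{2t}$ prefactor is harmless after the $\ell$-th root, since for $d \leq n$ we have $d^{2t/\log n} \leq e^{2t}$. Expanding the trace gives a sum, over closed walks of length $\ell$ on $[d]^{2t}$ together with the random sequence $\bpi = (\Pi_i,\Pi_i')_{i\in[\ell]}$, of products of Gaussian entries. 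Each walk corresponds to a labeling of the bipartite structure (with circle vertices receiving labels in $[d]$ and square vertices receiving labels in $[n]$) determined by $\bpi$; by Gaussian parity only labelings in which every distinct labeled edge is traversed an even number of times survive the expectation.

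The combinatorial core of the argument is to count these surviving labelings via an encoding/decoding scheme in the spirit of \cite{Tao12}. Each leg of the walk is tagged as Fresh, Return, Non-innovative, High-multiplicity, or Paired (the last capturing that pairs of circle vertices in a block share a square neighbor), and one bounds $\E \tr(\wt{M}^\ell)$ by the number of encodings times the decoding-success probability over random $\bpi$. Unforced return legs are handled by charging extra identifying information to the non-innovative and high-multiplicity legs. The decisive gain from symmetrization is that random $\bpi$ effectively re-pairs the $2t$ circle vertices within each of the $\ell$ blocks; since there are $(2t-1)!! \geq \Omega(t)^t$ possible pairings per block, the decoder succeeds with probability at most roughly $t^{-t\ell}$. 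This is exactly the savings needed to shave the naive $O(d)^{2t}$ bound on $\|M\|_2$ down to $O(n + (d^2/t)\log^2 n)^t$.

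Finally, because the bound on $P_t(x)$ has been obtained by dominating a quadratic form in $x^{\otimes 2t}$ by the spectral norm of an explicit matrix, it is exhibited by a degree-$4t$ Sum-of-Squares proof, so the degree-$4t$ SoS relaxation of $\max_{\|x\|_2=1} P_t(x)$ produces the claimed certificate in $n^{O(t)}$ time. I expect the main obstacle to lie in the encoding/decoding bookkeeping: making the $t^{-t\ell}$ pairing savings survive the extra charges incurred by unforced return legs, high-multiplicity legs, and the shift operator (which alters the multiset of labels at each square vertex), and in particular producing the correct $\log^2 n$ exponent inside the $(d^2/t)\log^2 n$ summand. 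The side assumption $t\log^6 n \leq n + (d^2/t)\log^2 n$ in the statement is precisely what allows the various $\polylog(n)$ overheads arising from non-innovative and high-multiplicity legs to be absorbed into the target bound.
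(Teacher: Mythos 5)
Your proposal matches the paper's proof step for step: writing $P_t(x)$ as the quadratic form $(x^{\otimes 2t})^\top M(x^{\otimes 2t})$ via the shift operator, averaging over index permutations to pass to $\wt{M}$, bounding $\E\tr(\wt{M}^\ell)$ with $\ell=\log n$ via a Tao-style encoding with the same five leg types (Fresh, Paired, Return, Non-innovative, High-multiplicity), extracting the $t^{-t\ell}$ savings from the per-block re-pairing of circle vertices, and certifying in $n^{O(t)}$ time. The only cosmetic difference is that the paper computes $\norm{\wt{M}}_2$ directly rather than solving the degree-$4t$ SoS relaxation (the two are equivalent here since the spectral-norm bound is itself a degree-$4t$ SoS certificate); you also leave the encoding/decoding bookkeeping as a ``to do,'' and that bookkeeping — in particular the unforced-return accounting (\Cref{lem:pur}), the leg-count identities (\Cref{lem:encoding-params}), the decoding-success bound (\Cref{lem:decoding-success-probability}), and a preliminary truncation of the Gaussians so $|a_{ij}|\le\sqrt{C\log n}$ — is where the actual work of the paper's proof lives.
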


We will prove \Cref{lem:sum-of-distinct-products} at the end of \Cref{sec:symmetrization-lowers-trace}.
We first use \Cref{lem:sum-of-distinct-products} to prove \Cref{lem:2-to-4-excluding-top}.

\begin{proof}[Proof of \Cref{lem:2-to-4-excluding-top} from \Cref{lem:sum-of-distinct-products}]
    Let $t \coloneqq n^{\eps}$, and let $A$ be the $n \times d$ matrix with $a_1,a_2,\dots,a_n$ as columns.
    The certification algorithm is as follows,
    \begin{enumerate}[(i)]
        \item For each $S \subseteq [n]$ with $|S| = t$, verify that $\sigma_{\max}(A_S) \leq (1+o_n(1)) \sqrt{d}$, where $A_S \in \R^{t \times d}$ is the submatrix of $A$ obtained by choosing rows according to $S$.
        \item For $s = 1,2,\dots, t$, use the algorithm in \Cref{lem:sum-of-distinct-products} to certify that
        \begin{equation*}
            P_s(x) \coloneqq s! \sum_{S\subseteq [n]: |S|=s} \prod_{i\in S} \angles{a_i,x}^4 \leq O\parens*{\frac{d^2}{s}\log^2 n}^s
            \quad \text{for all unit vectors $x \in \R^d$} \mper
            \numberthis \label{eq:sum-of-distinct-products}
        \end{equation*}
    \end{enumerate}

    First, since $t \leq d/\log^2 n$, by \Cref{fact:gaussian-matrix}, with probability $1 - 2^{-2t \log n}$ we have $\sigma_{\max}(A_S) \leq \sqrt{d} + O(\sqrt{t\log n}) \leq (1+o_n(1)) \sqrt{d}$.
    Then, a union bound over all $\binom{n}{t} \leq 2^{t\log n}$ choices of $S$ shows that this holds for all $S$.
    This certifies \ref{item:submatrix-singular-value} of \Cref{lem:2-to-4-excluding-top}.

    To certify \ref{item:4-norm}, first note that $\ol{T}_x = [n] \setminus T_x$ is the set of indices with the top $t$ largest values of $\angles{a_i,x}^2$, and by \ref{item:submatrix-singular-value} we have $\sum_{i\in \ol{T}_x} \angles{a_i,x}^2 = \norm{A_{\ol{T}_x} x}_2^2 \leq d (1+o_n(1))$.
    Thus, it follows that
    \begin{equation*}
        \angles{a_i,x}^2 \leq \tau \coloneqq \frac{d}{t} (1+o_n(1)) \quad \text{for all $i\in T_x$} \mper
        \numberthis \label{eq:ai-dot-x-bound}
    \end{equation*}

    The parameters of $d,n$ and $t$ satisfy $n = \frac{d^2}{t}\log^2 n$ and $t \log^6 n \leq n + \frac{d^2}{t}\log^2 n$, the requirements for \Cref{lem:sum-of-distinct-products}.
    Thus, we can certify \Cref{eq:sum-of-distinct-products} in $n^{O(t)}$ time for all $s \leq t$.

    We now proceed to bound $\sum_{i\in T_x} \angles{a_i, x}^4$.
    We will use \Cref{eq:sum-of-distinct-products,eq:ai-dot-x-bound} to bound the $t$-th power:
    \begin{equation*}
        \parens*{\sum_{i\in T_x} \angles{a_i, x}^4}^t
        = \sum_{i_1,\dots,i_t \in T_x} \prod_{k\in[t]} \angles{a_{i_k}, x}^4
        = \sum_{S \subseteq T_x} \sum_{\substack{i_1,\dots,i_t\in S: \\ \supp(i_1,\dots,i_t) = S}} \prod_{k\in[t]} \angles{a_{i_k}, x}^4
        \mcom
    \end{equation*}
    where we group the terms according to the support of the indices.
    For any $i_1,\dots,i_t \in T_x$ with $\supp(i_1,\dots,i_t) = S$, since $\angles{a_i, x}^4 \leq \tau^2$ for $i\in T_x$ (\Cref{eq:ai-dot-x-bound}), we have
    \begin{equation*}
        \prod_{k\in[t]} \angles{a_{i_k},x}^4 \leq \tau^{2(t-|S|)} \prod_{j\in S} \angles{a_j, x}^4 \mper
    \end{equation*}
    Moreover, for any $S \subseteq T_x$ of size $s \leq t$, we claim that the number of ordered indices $(i_1,\dots,i_t)$ with support $S$ is at most $t! \cdot \binom{t-1}{t-s}$.
    To see this, we can construct $(i_1,\dots,i_t)$ by first choosing $t-s$ elements from $S$ with replacement, for which there are $\binom{s + (t-s)-1}{t-s} = \binom{t-1}{t-s}$ choices (alternatively, this is the number of ways of throwing $t$ balls into $s$ bins so that no bin is non-empty), and then there are $t!$ ways to permute the indices.
    Thus,
    \begin{equation*}
    \begin{aligned}
        \parens*{\sum_{i\in T_x} \angles{a_i, x}^4}^t
        &\leq \sum_{s=1}^t \sum_{S\subseteq T_x: |S|=s} t! \binom{t-1}{t-s} \tau^{2(t-s)}  \prod_{i\in S} \angles{a_i,x}^4 \\
        &\leq \sum_{s=1}^t \frac{t!}{s!} \binom{t-1}{t-s} \tau^{2(t-s)} \cdot s! \sum_{S\subseteq [n]: |S|=s} \prod_{i\in S} \angles{a_i,x}^4 \mper \\
    \end{aligned}
    \end{equation*}
    Note that we changed the summation of $S\subseteq T_x$ to $S \subseteq [n]$ (since $T_x \subseteq [n]$ and all terms are non-negative).
    Crucially, we get an upper bound that does not depend on $T_x$, thus allowing us to use \Cref{eq:sum-of-distinct-products}.
    Since $\frac{d^2}{s}\log^2 n \geq \frac{d^2}{t} \log^2 n = n$ and $\tau = O(\frac{d}{t})$, the above is bounded by
    \begin{equation*}
    \begin{aligned}
        \sum_{s=1}^t \frac{t!}{s!} \binom{t-1}{t-s} \cdot O\parens*{\frac{d}{t}}^{2(t-s)}
        O\parens*{\frac{d^2}{s}\log^2 n}^{s}
        &\leq O\parens*{\frac{d^2}{t}\log^2 n}^t \sum_{s=1}^t \binom{t-1}{t-s} \parens*{\frac{t}{s}}^s  \\
        &\leq O(n)^t \sum_{s=1}^t \binom{t-1}{t-s} \binom{t}{s} \\
        &\leq O(n)^t \cdot 2^{2t} \mper
    \end{aligned}
    \end{equation*}
    Thus, we have certified that $\parens*{\sum_{i\in T_x} \angles{a_i, x}^4}^t \leq O(n)^t$, completing the proof.
\end{proof}

With \Cref{lem:2-to-4-excluding-top}, the proof of \Cref{thm:spread} is straightforward.

\begin{proof}[Proof of \Cref{thm:spread} from \Cref{lem:2-to-4-excluding-top}]
    Let $x\in \R^d$ be any unit vector, let $y = Ax$, and let $S \subseteq [n]$ be any subset of size $\leq \alpha n$.
    We would like to certify that $\norm{y_{\ol{S}}}_2 \geq \frac{1}{2}\norm{y}_2$.
    Let $T_x \subseteq [n]$ be as defined in \Cref{lem:2-to-4-excluding-top}, i.e., the set of indices excluding the top $n^{\eps}$ largest $\angles{a_i,x}^2$.
    By \ref{item:submatrix-singular-value} of \Cref{lem:2-to-4-excluding-top}, we can certify that $\norm{y_{\ol{T}_x}}_2^2 = \norm{A_{\ol{T}_x} x}_2^2 \leq d (1+o_n(1))$ since $\abs*{\ol{T}_x} = n^{\eps}$ by definition.
    Moreover, by \Cref{fact:gaussian-matrix} we know that $\sigma_{\min}(A) \geq \sqrt{n} - \sqrt{d} - o_n(1)$ with high probability, and since $d = o(n)$, we have $\norm{y}_2^2 \geq n (1- o_n(1))$.
    This means that $\norm{y_{\ol{T}_x}}_2^2 \leq o_n(1) \norm{y}_2^2$.

    Let $S' \coloneqq S \cap T_x = S \setminus \ol{T}_x$.
    Then, we have 
    \begin{equation*}
        \norm*{y_{S'}}_2^2 \geq \norm{y}_2^2 - \norm*{y_{\ol{S}}}_2^2 - \norm{y_{\ol{T}_x}}_2^2 \geq (1 - o_n(1)) \norm{y}_2^2 - \norm{y_{\ol{S}}}_2^2 \mper
    \end{equation*}
    Next, by \ref{item:4-norm} of \Cref{lem:2-to-4-excluding-top}, $\sum_{i\in S'} y_i^4 \leq \sum_{i\in T_x} y_i^4 \leq B n$ for some constant $B$.
    Thus, by Cauchy-Schwarz,
    \begin{equation*}
        \norm*{y_{S'}}_2^2 = \sum_{i\in S'} y_i^2 \leq \sqrt{|S'| \sum_{i\in S'} y_i^4}
        \leq \sqrt{\alpha n} \cdot \sqrt{Bn}
        \leq \sqrt{\alpha B} \cdot \norm{y}_2^2 \cdot (1+o_n(1)) \mper
    \end{equation*}
    Thus, we can set the constant $\alpha$ such that $\norm*{y_{\ol{S}}}_2^2 \geq (1-\sqrt{\alpha B} - o_n(1)) \norm{y}_2^2 \geq \frac{1}{4} \norm{y}_2^2$.
\end{proof}

\subsection{Symmetrization lowers spectral norm}
\label{sec:symmetrization-lowers-trace}

Note that $\angles{a_i, x}^4 = \angles{a_i^{\otimes 4}, x^{\otimes 4}}$, where we may view $a_i^{\otimes 4}$ as a $4$-th order tensor or a $d^2 \times d^2$ matrix.
We start by redistributing the entries of $a_i^{\otimes 4}$.

\begin{definition} \label{def:shift-entries}
    Given a $d^2 \times d^2$ matrix $A$ indexed by tuples $(j_1, j_2), (j_3, j_4) \in [d]$, we define $\shift(A)$ to be the $d^2 \times d^2$ matrix such that
    \begin{equation*}
        \shift(A)_{(j_1,j_2),(j_3,j_4)} = 
        \begin{cases}
            0 & \text{if $j_1 = j_2 \notin \{j_3, j_4\}$ or $j_3 = j_4 \notin \{j_1,j_2\}$} \\
            \frac{3}{2} A_{(j_1,j_2),(j_3,j_4)} & \text{if $j_1 \neq j_2$, $j_3 \neq j_4$, and $|\{j_1,j_2,j_3,j_4\}| \leq 3$ } \\
            A_{j_1,j_2,j_3,j_4} & \text{otherwise} \mper
        \end{cases}
    \end{equation*}
\end{definition}
For example $\shift(A)_{(1,1),(2,2)} = 0$ and $\shift(A)_{(1,2),(1,2)} = \frac{3}{2} A_{(1,2),(1,2)}$.
This is also done in \cite{BarakBHKSZ12} to remove a large rank-$1$ component in $\Sigma = \E[a_i^{\otimes 4}]$ (recall \Cref{sec:BBH}) so that $\shift(\Sigma)$ has norm $O(1)$ even though $\|\Sigma\|_2 \geq n$.
In \Cref{rem:lower-bound-without-shifting}, we will see that without this, the spectral norm bound is false.


\begin{proposition}
    For any $a, x\in \R^d$, $\angles{a,x}^4 = \angles*{\shift(a^{\otimes 4}), x^{\otimes 4}}$.
\end{proposition}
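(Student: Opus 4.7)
The plan is to compute both sides directly and verify that the shift operation leaves the quadratic form $\langle \cdot, x^{\otimes 4}\rangle$ invariant, using the full symmetry of $x^{\otimes 4}$ in its four indices. Writing out the right-hand side as a sum over $(j_1,j_2,j_3,j_4) \in [d]^4$, we first observe that the unshifted version already gives the identity
\begin{equation*}
    \langle a^{\otimes 4}, x^{\otimes 4}\rangle = \sum_{j_1,j_2,j_3,j_4} a_{j_1}a_{j_2}a_{j_3}a_{j_4}\, x_{j_1}x_{j_2}x_{j_3}x_{j_4} = \langle a, x\rangle^4\mper
\end{equation*}
Hence it suffices to show that the ``correction'' $\langle \shift(a^{\otimes 4}) - a^{\otimes 4}, x^{\otimes 4}\rangle$ vanishes. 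By \Cref{def:shift-entries}, this correction is supported on two disjoint index sets: the ``killed'' tuples $R$, where $j_1=j_2\notin\{j_3,j_4\}$ or $j_3=j_4\notin\{j_1,j_2\}$ (contributing $-a_{j_1}a_{j_2}a_{j_3}a_{j_4}$ to the correction), and the ``boosted'' tuples $M$, where $j_1\neq j_2$, $j_3\neq j_4$, and $|\{j_1,j_2,j_3,j_4\}|\leq 3$ (contributing $+\tfrac{1}{2}a_{j_1}a_{j_2}a_{j_3}a_{j_4}$).

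Next, I would organize the cancellation by the multiset pattern of $(j_1,j_2,j_3,j_4)$. Tuples with all four indices equal, with a singleton ($aaab$), or with all four distinct ($abcd$) lie in neither $R$ nor $M$ and contribute nothing to the correction. The only patterns giving nonzero contributions are the ``two pair'' pattern $aabb$ and the ``one pair'' pattern $aabc$.
\begin{itemize}
    \item For $\{p,q\}$ with $p\neq q$, the six orderings of $(p,p,q,q)$ contribute as follows: the two orderings $(p,p,q,q)$ and $(q,q,p,p)$ lie in $R$, while the four orderings with $j_1\neq j_2$ and $j_3\neq j_4$ lie in $M$. Each of the six orderings evaluates to the common value $a_p^2 a_q^2 x_p^2 x_q^2$, so the net correction is $(-2+\tfrac{1}{2}\cdot 4)\,a_p^2 a_q^2 x_p^2 x_q^2 = 0$.
    \item For distinct $p,q,r$ with $p$ doubled, the twelve orderings of $(p,p,q,r)$ split as $4$ in $R$ (namely $j_1=j_2=p$ with $\{j_3,j_4\}=\{q,r\}$, and the symmetric case) and $8$ in $M$ (the orderings where $\{j_1,j_2\}$ is $\{p,q\}$ or $\{p,r\}$). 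Each ordering evaluates to $a_p^2 a_q a_r x_p^2 x_q x_r$, so the net contribution is $(-4+\tfrac{1}{2}\cdot 8)\,a_p^2 a_q a_r x_p^2 x_q x_r = 0$.
\end{itemize}
Summing over all choices of $\{p,q\}$ and $(p;q,r)$ yields $\langle \shift(a^{\otimes 4}) - a^{\otimes 4}, x^{\otimes 4}\rangle = 0$, which completes the proof. The only nontrivial step is the case analysis, and the numerical coefficients $0$ and $\tfrac{3}{2}$ in \Cref{def:shift-entries} are precisely those forced by the two identities above; there is no deeper obstacle.
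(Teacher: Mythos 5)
Your proof is correct and takes essentially the same approach as the paper: a case analysis over the multiset pattern of $(j_1,j_2,j_3,j_4)$, verifying that the only patterns touched by $\shift$ are $aabb$ and $aabc$, and then counting that the $0$'s and $\tfrac{3}{2}$'s balance out to preserve the sum within each pattern. Your write-up is if anything slightly more thorough — you explicitly dispose of the $aaaa$, $aaab$, and $abcd$ patterns — but there is no substantive difference from the paper's argument.
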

\begin{proof}
    Expanding $\angles{a, x}^4$, we get $\sum_{j_1,j_2,j_3,j_4} a_{j_1} a_{j_2} a_{j_3} a_{j_4} x_{j_1} x_{j_2} x_{j_3} x_{j_4}$, where we can group the terms according to the multiset $\{j_1,j_2,j_3,j_4\}$.
    The only entries that differ between $a^{\otimes 4}$ and $\shift(a^{\otimes 4})$ are the ones where the multiset $\{j_1,j_2,j_3,j_4\}$ is of the following form:
    \begin{itemize}
        \item $j_1 = j_2 \neq j_3 = j_4$: there are $6$ such terms in $a^{\otimes 4}$, and there are $4$ such terms in $\shift(a^{\otimes 4})$, each scaled by $\frac{3}{2}$.
        \item $j_1 = j_2 \neq j_3 \neq j_4$: there are $12$ such terms in $a^{\otimes 4}$, and there are $8$ such terms in $\shift(a^{\otimes 4})$, each scaled by $\frac{3}{2}$.
    \end{itemize}
    This shows that $\angles{a,x}^4 = \angles*{\shift(a^{\otimes 4}), x^{\otimes 4}}$.
\end{proof}

To prove \Cref{lem:sum-of-distinct-products}, we start by writing $t!\sum_{|S|=t} \prod_{i\in S} \angles*{a_i, x}^4 = t!\sum_{|S|=t} \prod_{i\in S} \angles*{\shift(a_i^{\otimes 4}), x^{\otimes 4}}$ as a quadratic form:
\begin{equation*}
    t!\sum_{S \subseteq [n]: |S|=t} \prod_{i\in S} \angles*{\shift(a_i^{\otimes 4}),x^{\otimes 4}}
    = \angles*{t!\sum_{S\subseteq[n]:|S|=t} \bigotimes_{i\in S} \shift(a_i^{\otimes 4}), x^{\otimes 4t}}
    = (x^{\otimes 2t})^\top M (x^{\otimes 2t}) \mcom
    \numberthis \label{eq:M-quad-form}
\end{equation*}
where we view $x^{\otimes 2t}$ as a vector of dimension $d^{2t}$, and
\begin{equation*}
    M \coloneqq t! \sum_{S\subseteq[n]:|S|=t} \bigotimes_{i\in S} \shift(a_i^{\otimes 4}) \in \R^{d^{2t} \times d^{2t}} \mper
    \numberthis \label{eq:M-matrix}
\end{equation*}
Here, we view $\shift(a_i^{\otimes 4})$ as a $d^{2} \times d^{2}$ matrix.

The natural approach to bound \Cref{eq:M-quad-form} is to bound the spectral norm of $M$.
Unfortunately, it can be shown that $\|M\|_2 \geq \Omega(d)^{2t}$.
As explained in \Cref{sec:upper-bound-Pt}, we resolve this by exploiting the symmetry of $x^{\otimes 2t}$.
Let $\Pi \in \R^{d^{2t} \times d^{2t}}$ be any permutation matrix that maps an index $J = (j_1,j_2,\dots,j_{2t})\in [d]^{2t}$ to $(j_{\pi(1)}, j_{\pi(2)}, \dots, j_{\pi(2t)})$ for some permutation $\pi$ over $2t$ elements.
Then, we have that $\Pi x^{\otimes 2t} = x^{\otimes 2t}$.

We will use $\Permutation{2t}$ to denote the collection of all such matrices $\Pi$.
It follows that
\begin{equation*}
    (x^{\otimes 2t})^\top M (x^{\otimes 2t}) = (x^{\otimes 2t})^\top \wt{M} (x^{\otimes 2t})
    \leq \norm{\wt{M}}_2 \mcom
    \quad \text{where } \wt{M} = \E_{\Pi,\Pi'\sim \Permutation{2t}}[\Pi M \Pi'] \mper
    \numberthis \label{eq:wt-M-matrix}
\end{equation*}

Our key technical result is that the symmetrization lowers the spectral norm by a factor of roughly $t^t$, and we prove it using the trace moment method, a standard technique for upper bounding spectral norm of matrices.

\begin{lemma} \label{lem:trace-bound}
    Let $a_1,\dots,a_n \in \R^d$ be random vectors with independent entries such that $\E[a_{ij}^k] = 0$ for all odd $k$, $\E[a_{ij}^2] \leq 1$, $\E[a_{ij}^4] \leq \mu_4$, and $|a_{ij}| \leq \sqrt{C\log n}$ almost surely for $\mu_4 \geq 1$ and constant $C > 0$.
    Let $t \leq d \leq n$ be integers such that $t \log^6 n \leq \mu_4 n + \frac{d^2}{t} \log^2 n$.
    Let $\wt{M}$ be the $d^{2t} \times d^{2t}$ matrix defined in \Cref{eq:wt-M-matrix}.
    Let $\ell \leq \log n$.
    Then,
    \begin{equation*}
        \E_A\bracks*{\tr(\wt{M}^{\ell})} \leq d^{2t}\cdot O\parens*{\mu_4 n + \frac{d^2}{t}\log^2 n}^{t\ell} \mper
    \end{equation*}
\end{lemma}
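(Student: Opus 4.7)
My proof plan follows the trace moment method outlined in Section \ref{sec:upper-bound-Pt}. I would begin by commuting the expectation over the random permutations $\bpi = (\Pi_1, \Pi_1', \ldots, \Pi_\ell, \Pi_\ell')$ with the trace and expanding each factor $M$ as $t!\sum_{|S_k|=t}\bigotimes_{i\in S_k}\shift(a_i^{\otimes 4})$, which rewrites $\E_A\tr(\wt{M}^\ell)$ as an expectation over $\bpi$ of a sum over closed walks $I_1 \to I_2 \to \cdots \to I_\ell \to I_1$ on $[d]^{2t}$. Each walk is encoded as a labeling of a \emph{structure} determined by $\bpi$: a graph with $\ell$ blocks, each containing $2t$ circle vertices (receiving labels in $[d]$) and $t$ square vertices (receiving labels in $[n]$), where the circle-circle connections between adjacent blocks and the circle-square pairings within a block are prescribed by $\Pi_k$ and $\Pi_k'$. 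Because the entries of $A$ have vanishing odd moments, only labelings in which every labeled square-circle edge is traversed an even number of times contribute.

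The next step is to design an encoding scheme for valid labelings, classifying each leg as Fresh, Return, Non-innovative, Paired, or High-multiplicity in the style of \cite{Tao12}. A Fresh leg discovers a new vertex and costs roughly $d$ or $n$ depending on its type; a Paired leg records the partner circle tied to a common square; a Return leg closes an active edge and is typically forced; Non-innovative and High-multiplicity legs, which revisit previously seen vertices or already-saturated edges, incur a $\polylog(n)$ penalty, relying on the almost-sure bound $|a_{ij}|\leq \sqrt{C\log n}$ to replace $\E a^{2k}$ by $(\log n)^{k-1}\E a^2$. The shift of Definition \ref{def:shift-entries} is essential here, since it kills the degenerate $j_1=j_2$ tensor components that would otherwise produce a rank-one spike swamping the trace. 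Accounting for the $d^{2t}$ choices of the initial tuple $I_1$ and for the $\ell/2$ budget of Fresh legs of each type (every edge must appear with even multiplicity), the number of encodings is at most
\begin{equation*}
    d^{2t} \cdot O\!\left(\mu_4 n + \frac{d^2}{t}\log^2 n\right)^{t\ell} \cdot t^{t\ell},
\end{equation*}
where the spurious $t^{t\ell}$ factor reflects the freedom in how the $2t$ circles per block can be grouped into pairs when writing an encoding.

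The heart of the argument is to cancel this $t^{t\ell}$ overcount via the decoding success probability:
\begin{equation*}
    \E_A\tr(\wt{M}^\ell) \leq \sum_\xi \Pr_\bpi\!\left[\Decode(\xi,\bpi)\text{ succeeds}\right],
\end{equation*}
and I would show $\Pr_\bpi[\Decode(\xi,\bpi)\text{ succeeds}] \leq ((2t-1)!!)^{-\ell} \leq (e/t)^{t\ell}$ for every fixed $\xi$. The reason is that within block $k$ the random pair $(\Pi_k, \Pi_k')$ effectively induces a uniformly random perfect matching on the $2t$ circle slots (matching each slot with the partner circle sharing a square vertex), and decoding succeeds only when this random matching agrees with the one hard-wired into $\xi$; block-wise independence supplies the product. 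Multiplying the encoding count by the decoding probability converts the $t^{t\ell}$ factor into a $1/t^{t\ell}$ gain, yielding $d^{2t}\cdot O(\mu_4 n + \frac{d^2}{t}\log^2 n)^{t\ell}$ and making precise the statement that, compared to $\|M\|_2$, symmetrization buys exactly a $1/t$ factor per row $a_i$.

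The main obstacle I anticipate is the treatment of \emph{unforced} Return legs---those with more than one active edge available to close---where the naive encoding is ambiguous. As in \cite{Tao12,JonesPRTX22,HsiehKPX23}, I would charge the extra disambiguation information to the Non-innovative or High-multiplicity legs that created the surplus of active edges, and verify that each such charge can be absorbed into the budget of the creating leg (so Non-innovative legs still cost $\wt{O}(1)$ and High-multiplicity legs still cost $\wt{O}(\mu_4 n + d^2/t)$ per appearance). Checking that this charging interacts cleanly with the permutation randomness---so the block-wise independence used in the decoding-probability argument is preserved---and that the hypothesis $t\log^6 n \leq \mu_4 n + \frac{d^2}{t}\log^2 n$ is the exact threshold at which the Non-innovative and High-multiplicity contributions are absorbed into the advertised bound, is the most delicate combinatorial step and where I expect the bulk of the work to lie.
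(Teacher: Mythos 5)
Your overall architecture matches the paper's: expand $\E\tr(\wt M^\ell)$ over random structures $\bpi$, encode valid labelings via leg types, charge unforced returns to $\NI$/$\Highmult$ legs, and bound the trace by $\sum_\xi \Pr_\bpi[\Decode(\xi,\bpi)\text{ succeeds}]\cdot\edgefactor(\xi)$. But your central quantitative claim is wrong, and in a way that matters. You assert that $\Pr_\bpi[\Decode(\xi,\bpi)\text{ succeeds}]\leq ((2t-1)!!)^{-\ell}$ uniformly over $\xi$, on the grounds that the random matching on the $2t$ circle slots in each block must coincide with ``the one hard-wired into $\xi$.'' This is false: the encoding does not hard-wire any matching. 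In particular, when a block consists entirely of $\Fresh$--$\Paired$ pairs (the dominant contribution for the $\mu_4 n$ term), \emph{any} matching of the $\Fresh$ legs with the $\Paired$ legs produces a successful decode, since $\Paired$ legs carry no destination. With $i$ such pairs in a block, the success probability in that block is $i!/(2t-1)!!$, which is $1$ when $i=t$, not $1/(2t-1)!!$. The correct bound is the encoding-dependent one in the paper's Lemma~\ref{lem:decoding-success-probability}: $(t/e)^{-\frac{1}{2}(\fresh_c+\n+\highmult)}$, where the exponent is half the number of ``non-Paired'' legs, obtained via the relation $\n_s+\highmult_s+\ret_s = \fresh_c+\n+\highmult$ from Lemma~\ref{lem:encoding-params}.

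Correspondingly, your ``spurious $t^{t\ell}$ overcount in the encoding count'' does not exist. The encoding never records how circles are grouped into pairs---that is read off from $\bpi$ during decoding---so there is no combinatorial factor to cancel against a uniform $t^{-t\ell}$. The actual mechanism is a precise \emph{trade}: the encoding cost carries a $d^{\fresh_c}$ factor for square-to-circle $\Fresh$ legs, the decoding probability contributes $(t/e)^{-\frac{1}{2}\fresh_c}$, and the identity $2\fresh_s + \fresh_c + \n + \highmult = 2t\ell$ balances $(\mu_4 n)^{\fresh_s}(d/\sqrt t)^{\fresh_c}$ against $(\mu_4 n + d^2/t)^{t\ell}$. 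Because the gain lives entirely in the non-$\Paired$ legs, it vanishes exactly on the $(\mu_4 n)^{t\ell}$ term (where $\fresh_c=\n=\highmult=0$ and the probability is $1$), which is consistent with the lower bound in Remark~\ref{rem:trace-lower-bound}. Your uniform $((2t-1)!!)^{-\ell}$ bound would make the trace bound far smaller than the actual $(\mu_4 n)^{t\ell}$ contribution, contradicting that remark, so this step cannot be repaired without moving to the encoding-dependent estimate.
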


In \Cref{rem:trace-lower-bound}, we will show that the upper bound in \Cref{lem:trace-bound} is tight up to log factors.

We can immediately complete the proof of \Cref{lem:sum-of-distinct-products} using \Cref{lem:trace-bound}.
\begin{proof}[Proof of \Cref{lem:sum-of-distinct-products} from \Cref{lem:trace-bound}]
    First, since $a_1, \dots, a_n \in \R^d$ are vectors with $\calN(0,1)$ entries, we have that $\E[a_{ij}^4] = 3$, and moreover, with probability $1 - \frac{1}{\poly(n)}$ we have that $|a_{ij}| \leq \sqrt{C \log n}$ for some constant $C > 0$ for all $i\in [n]$, $j\in [d]$.
    Thus, we can now condition on this event; the conditioned variables still satisfy $\E[a_{ij}^k] = 0$ for odd $k$, $\E[a_{ij}^2] \leq 1$ and $\E[a_{ij}^4] \leq 3$, i.e., the conditions in \Cref{lem:trace-bound}.

    Set $\ell = \log n$.
    By \Cref{lem:trace-bound} and Markov's inequality, with probability $1 - \exp(-t\ell) \geq 1 - n^{-t}$, we have that $\tr(\wt{M}^{\ell}) \leq d^{2t} \cdot O(n + \frac{d^2}{t}\log^2 n)^{t\ell}$.
    Then, $\norm{\wt{M}}_2 \leq \tr(\wt{M}^\ell)^{1/\ell} \leq O(n + \frac{d^2}{t}\log^2 n)^{t}$, using the fact that $d^{2t/\ell} \leq 2^{2t}$.

    We can construct the matrix $\wt{M} \in \R^{d^{2t} \times d^{2t}}$ and calculate its spectral norm in $n^{O(t)}$ time.
    Since $(x^{\otimes 2t})^{\top} \wt{M} (x^{\otimes 2t}) \leq \norm{\wt{M}}_2 \cdot \norm{x^{\otimes 2t}}_2^2$ and $\norm{x^{\otimes 2t}}_2 = 1$ for all unit vectors $x$, this completes the proof.
\end{proof}

\subsection{Encoding of walks in the trace}
\label{sec:encoding}

In this section, we prove \Cref{lem:trace-bound}.
Expanding $\E_A \tr(\wt{M}^\ell)$ gives
\begin{equation*}
    \E_A \tr\parens*{\wt{M}^\ell}
    = \E_A \tr \E_{\Pi_1,\Pi_1', \Pi_2, \Pi_2',\dots, \Pi_{\ell}'} \bracks*{\Pi_1 M \Pi_1' \Pi_2 M \Pi_2' \cdots \Pi_{\ell} M \Pi_{\ell}'} \mper
\end{equation*}
We will use $\bpi$ to denote the sequence of permutation matrices $(\Pi_1, \Pi_1', \Pi_2, \Pi_2', \dots, \Pi_\ell') \in (\Permutation{2t})^{2\ell}$.

The trace power of a matrix is often viewed as a sum of closed walks.
Indeed, $\tr(M^{\ell}) = \sum_{I_1, I_2,\dots, I_{\ell} \in [d]^{2t}} M_{I_1,I_2} M_{I_2,I_3} \cdots M_{I_\ell,I_1}$, which is a weighted sum of length-$\ell$ closed walks $I_1 \to I_2 \to \cdots \to I_\ell \to I_1$ on the indices.
With a specific permutation $\bpi$, it is then a sum of closed walks where the indices are permuted accordingly at each step.
See \Cref{fig:structure-examples} for an example.

\begin{definition}[Structure] \label{def:structure}
    We will use a graph with two types of vertices (circle and square) to represent a walk of length $\ell$ in the trace. The graph is divided into $\ell$ blocks, each with $t$ square vertices in the middle and $4t$ edges between square and circle vertices.
    We use $\bpi = (\Pi_1, \Pi_1', \dots, \Pi_{\ell}, \Pi_{\ell}') \in (\Permutation{2t})^{2\ell}$ to denote the \emph{structure} of the graph, where the edges in block $k$ are connected according to $\Pi_k, \Pi_k'$.

    For example, \Cref{fig:structure-examples} shows two structures with $t = 3$.
    In \Cref{fig:structure}, all permutations are identity.
    In \Cref{fig:structure-permuted}, we have permutations $\pi_1,\pi_1',\dots, \pi_\ell, \pi_\ell'$; for example, on the left side of the first block, $\pi_1(1) = 3, \pi_1(2) =2, \pi_1(3) = 5$ and so on, i.e., circle $i$ is connected to square $\ceil{\pi_1(i) / 2}$.
\end{definition}

\begin{figure}[ht!]
    \centering
    \begin{subfigure}[b]{0.45\textwidth}
        \includegraphics[width=\textwidth]{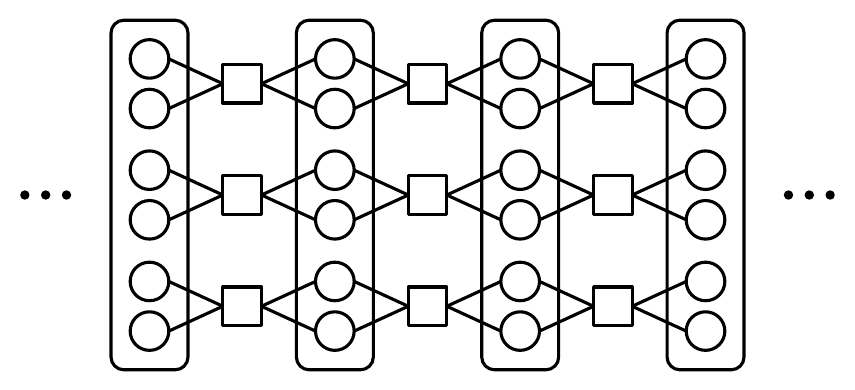}
        \caption{Identity permutations.}
        \label{fig:structure}
    \end{subfigure}
    \quad
    \begin{subfigure}[b]{0.45\textwidth}
        \includegraphics[width=\textwidth]{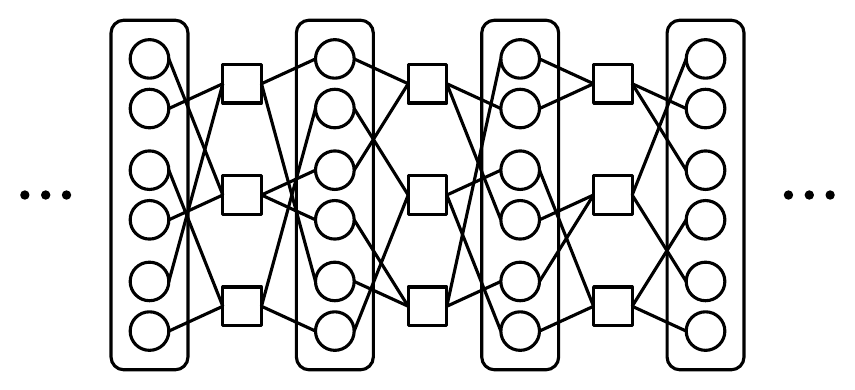}
        \caption{Different permutations.}
        \label{fig:structure-permuted}
    \end{subfigure}
    \caption{Examples of different structures of walks in the trace.
    }
    \label{fig:structure-examples}
\end{figure}

\begin{definition}[Valid labeling of a structure]
\label{def:valid-labeling}
    A \emph{labeling} $\sigma$ is a map that maps circle vertices to $[d]$ and square vertices to $[n]$.
    We say that a labeling $\sigma$ is \emph{valid} for a structure $\bpi$ if
    \begin{enumerate}[(1)]
        \item All labeled edges (i.e., elements in $[n] \times [d]$) appear even number of times.
        \label{item:edge-appears-twice}

        \item The $t$ square vertices in each block receive distinct labels.
        \label{item:t-distinct-squares}

        \item For each square vertex, if the 2 incident circle vertices on the left (resp.\ right) have the same labels $j\in [d]$, then there must be at least one circle vertex on the right (resp.\ left) that is labeled $j$.
        \label{item:square-neighbors}
    \end{enumerate}
    Moreover, we define $\edgefactor(\sigma,\bpi)$ to be the product of factors from the labeled edges, where a labeled edge appearing $k$ times gets a factor of $\E[a^k]$.
\end{definition}

Requirement~\ref{item:edge-appears-twice} is because $\E[a_{ij}^k]= 0$ for all odd $k$, and because of this, labelings that violate requirement~\ref{item:edge-appears-twice} automatically have $\edgefactor(\sigma,\bpi) = 0$.
Requirement~\ref{item:t-distinct-squares} is by definition of the $M$ matrix (\Cref{eq:M-matrix}), which is a sum over $S \subseteq [n]$.
Finally, we can impose requirement~\ref{item:square-neighbors} on the labelings because $\shift(a_i^{\otimes 4})_{(j_1,j_1), (j_2,j_3)} = 0$ for $j_1 \notin \{ j_2, j_3\}$.
Indeed, recall from \Cref{def:shift-entries} that $\shift(a_i^{\otimes 4})_{(j_1,j_1), (j_2,j_3)}$ is nonzero only if one or both of $j_2, j_3$ equal $j_1$.

\begin{remark}[Lower bound on $\E \tr(\wt{M}^\ell)$] \label{rem:trace-lower-bound}
    We claim that $\E \tr(\wt{M}^\ell) \geq \Omega(\mu_4 n + \frac{d^2}{t})^{t\ell}$, thus the upper bound in \Cref{lem:trace-bound} is tight up to log factors.

    Fix a subset $S \subseteq [n]$ and consider the labelings where the circle vertices are all distinct (i.e., $I_1, I_2, \dots, I_{\ell} \in [d]^{2t}$ collectively have distinct indices) while the square vertices in each block is $S$ (up to ordering).
    There are $\binom{d}{2t\ell} (2t\ell)! \approx d^{2t\ell}$ choices for the circle vertices (as $d \gg 2t\ell$).
    Since each circle vertex is distinct, the two adjacent square vertices (in neighboring blocks) must be the same so that each edge appears twice.
    Note that the square vertices in each block can be reordered.
    Thus, for the circle-to-square structure in each block, viewing it as a matching between $2t$ circle vertices, there is exactly one matching that results in an even labeling.
    The probability is $\frac{1}{(2t-1)!!} \geq t^{-t}$.
    Thus, these labelings contribute a lower bound of $d^{2t\ell} \cdot t^{-t\ell} = (\frac{d^2}{t})^{t\ell}$.

    Next, consider the labelings where all circle vertices are the same while the square vertices are distinct.
    There are $\approx n^{t\ell}$ choices for the square vertices.
    Moreover, there are $t\ell$ distinct edges, each appearing $4$ times, hence giving a factor of $\mu_4^{t\ell}$.
    Thus, these contribute a lower bound of $(\mu_4 n)^{t\ell}$.
\end{remark}

\begin{remark}[A higher lower bound on $\E \tr(\wt{M}^\ell)$ without requirement~\ref{item:square-neighbors}] \label{rem:lower-bound-without-shifting}
    Fix the identity permutation structure (\Cref{fig:structure}).
    Consider the labeling such that for each square vertex, the two circle vertices on each side get the same labels (thus violating requirement~\ref{item:square-neighbors}).
    Then, we assign distinct labels to all square vertices and all pairs of circle vertices, for which there are $\approx d^{t\ell} n^{t\ell}$ choices.
    After permutation, i.e., randomizing the structure, with probability $(\frac{1}{(2t-1)!!})^{2\ell} \geq t^{-2t\ell}$ each pair of circle vertices are still paired, in which case all edges appear twice, satisfying requirement~\ref{item:edge-appears-twice}.
    Thus, this gives a lower bound of $\Omega(\frac{dn}{t^2})^{t\ell}$, which is much larger than the target bound $O(n + \frac{d^2}{t})^{t\ell}$ in \Cref{lem:trace-bound}.
\end{remark}

Now, we can bound $\E \tr(\wt{M}^{\ell})$ as follows,

\begin{equation*}
    \E_{A} \tr\parens*{\wt{M}^\ell} \leq 2^{O(t\ell)} \cdot \E_{\bpi} \sum_{\sigma: \text{ labeling}} \1(\text{$\sigma$ valid for $\bpi$}) \cdot \edgefactor(\sigma,\bpi) \mper
\end{equation*}
Here, the $2^{O(t\ell)}$ factor takes care of the fact that $\shift(a^{\otimes 4})$ has some entries scaled by $3/2$.

The crucial step in bounding the above is defining an encoding of the labelings.
Following the terminology of \cite{Tao12}, we will refer to a circle-to-square or square-to-circle step in the walk (structure) as a \emph{leg}, and we will refer to an element in $[n] \times [d]$ as an \emph{edge}.
In other words, a labeling of a leg in the structure specifies an edge.
An example is shown in \Cref{fig:encoding-examples}.

\begin{remark}[Encoding requirement]
\label{rem:structure}
    We would like an encoding scheme that uses as few bits as possible such that
    (1) every valid labeling for a given structure can be encoded,
    and (2) a labeling is uniquely determined given an encoding and a structure (if decoding succeeds).
    In particular, the encoding itself does not need to identify the structure; instead, the decoding algorithm will take both the encoding and the structure to decode.
    Thus, when provided with the same encoding and two distinct structures, the decoding algorithm can successfully output two different labelings, each valid for its respective structure.
\end{remark}

\begin{definition}[Encoding of a labeling] \label{def:encoding}
    We define our encoding of a valid labeling for a structure $\bpi$ as follows.
    First, specify a starting index $J \in [d]^{2t}$.
    Then, for each of the $4t\ell$ legs ($2t$ circle to square and $2t$ square to circle in each block), specify a \emph{type} in $\{\Fresh, \Paired, \Ret, \NI, \Highmult\}$ along with additional information for decoding (namely, \textbf{destination} and \textbf{return labels} depending on the type):
    \begin{itemize}
        \item $\Fresh$ (Fresh) leg: the destination is a new vertex (not seen before), thus creating a new edge, marked as \emph{active}.
        An $\Fresh$ leg from circle to square must be paired with a $\Paired$ leg as described next.
        The new vertex is specified as follows:
        \begin{itemize}
            \item Circle to square: an element in $[n]$.
            \item Square to circle: an element in $[d]$ and a \emph{bucket index} $b\in [\log n]$ for the circle vertex.
            The bucket index indicates that the vertex will have degree between $2^b$ and $2^{b+1}$ in total.
        \end{itemize}

        \item $\Paired$ (Paired) leg: it must be from a circle to a new square vertex and must be paired with an $\Fresh$ leg.
        There is no need to specify a vertex label because the vertex is determined by the paired $\Fresh$ leg.
        The $\Paired$ legs are further split into two types:
        \begin{itemize}
            \item $\Paired_1$:
            the paired $\Fresh$ and $\Paired$ legs come from \emph{different} circle vertices.
            It marks the edge as active.

            \item $\Paired_2$:
            the paired $\Fresh$ and $\Paired$ legs come from the \emph{same} circle vertex.
            By requirement~\ref{item:square-neighbors} of \Cref{def:valid-labeling}, one or two of the next legs leaving the square vertex must be an $\Ret$ going back to the circle vertex.
        \end{itemize}
        In addition, specify a return label in $\{1,2\}$.

        \item $\Ret$ (Return) leg: it traverses an incident active edge and marks it as \emph{closed}.
        The incident edge will be chosen (during decoding) using return labels from previous legs.

        \item $\NI$ (Non-innovative) leg: the destination is an old vertex (seen before), but the edge is new and is marked as active.
        \begin{itemize}
            \item Circle to square: specify a previous square vertex (a label in $[t\ell]$), and then specify a return label in $[4\ell]$ (because square labels in a layer are distinct).
            \item Square to circle: pick a bucket index $b\in [\log n]$ and choose a vertex from the bucket by specifying an element in $[\frac{4t\ell}{2^b}]$.
            Then, specify a return label in $[2^{b+1}]$.
        \end{itemize}

        \item $\Highmult$ (High-multiplicity): it traverses an incident \emph{closed} edge and marks it as \emph{active}, indicating that the edge is traversed an odd number of times.
        \begin{itemize}
            \item Circle to square: specify an incident edge (a label in $[4t\ell]$), and then specify a return label in $[\ell]$.
            \item Square to circle: same as $\NI$.
        \end{itemize}
    \end{itemize}
    Finally, let $\xi$ be an encoding with $\highmult$ number of $\Highmult$ legs and $\paired_2$ number of $\Paired_2$ legs.
    With a slight abuse of notation, we define $\edgefactor(\xi)$ as $\mu_4^{\paired_2} (C\log n)^{\highmult}$, where $\mu_4$ and $C$ are the constants such that our random variables satisfy $\E[a_{ij}^4] \leq \mu_4$ and $|a_{ij}| \leq \sqrt{C \log n}$ almost surely.
\end{definition}

See \Cref{fig:encoding-examples} for examples of an encoding.
We note that the terms ``fresh'', ``return'', ``non-innovative'' and ``high-multiplicity'' in \Cref{def:encoding} are adopted from \cite{Tao12}\footnote{In \cite{HsiehKPX23}, a non-innovative leg is called a ``surprise''.}
(see \Cref{sec:trace-method}).
The $\Paired$ (Paired) type is introduced to handle the specific structure of our matrix, namely that there are two legs leading to a square.

\begin{remark}[$\Paired_1$ and $\Paired_2$ legs] \label{rem:P-legs}
    As each $\Paired_1$ leg creates a new edge, it should be viewed as an $\Fresh$ leg except that it does not need to specify the destination.

    On the other hand, a $\Paired_2$ leg should be viewed as an $\Highmult$ leg which is closed immediately when departing from the square.
    When two circle vertices with the same label $j\in [d]$ go to a square $i\in [n]$ via an $\Fresh$ and $\Paired_2$ leg, the edge $(\square{i}, \circle{j})$ is traversed twice.
    However, by requirement~\ref{item:square-neighbors}, one or two of the next legs leaving the square vertex must be $\Ret$ legs going back to $j$, so after this the edge is traversed either $3$ or $4$ times.
    Thus, the $\Paired_2$ leg is effectively an $\Highmult$ leg that traverses the edge the \emph{third} time, which introduces an edge factor of $\E[a_{ij}^4] \leq \mu_4$.
    See \Cref{rem:lower-bound-without-shifting} for a lower bound without requirement~\ref{item:square-neighbors}.
\end{remark}

It is helpful for readers to keep in mind that eventually the dominating terms in the trace calculation will be encodings with mostly $\Fresh$ (or $\{\Fresh,\Paired\}$ if leading to a square) and $\Ret$ legs.
In this case, most $\Ret$ legs are ``forced'', i.e., there is only one edge to choose.
In fact, the only times an $\Ret$ leg is ``unforced'' at a vertex $v$ (i.e., it needs to choose between multiple incident edges) are when there were previous $\NI$ or $\Highmult$ legs going back to $v$.
When this happens, the destination of the $\Ret$ leg is determined by one return label among those placed by previous $\NI$ and $\Highmult$ legs.
We now need to show that the return labels in our encoding are sufficient to guide the ``unforced'' return legs.

\paragraph{Unforced return legs.}
Again following the terminology in \cite{Tao12}, we say that an $\Ret$ (return) leg from a vertex $v$ is \emph{forced} if it marks the final visit to $v$ in the walk.
Consequently, following this leg, all edges incident to $v$ are closed for the remainder of the walk.
Every circle vertex has exactly $1$ forced return,
and every square vertex has $2$ (the two legs leaving $v$ in its final appearance).
All other return legs are \emph{unforced}.

In \Cref{def:encoding}, each $\NI$ and $\Highmult$ leg arriving at $v$ provides 1 return label for $v$.
The next lemma shows that this is sufficient for all unforced returns departing from $v$.

\begin{lemma}[Bounding unforced returns] \label{lem:pur}
    For each vertex $v$, the number of unforced return legs departing from $v$ is at most the number of $\NI$ and $\Highmult$ legs arriving at $v$.
    Consequently, the return labels in \Cref{def:encoding} are sufficient for all unforced returns.
\end{lemma}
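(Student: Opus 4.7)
The plan is an active-edge balance argument at each vertex $v$. Track $c(v)$, the number of currently active incident labeled edges at $v$; it starts at $0$, ends at $0$, and changes by $\pm 1$ at each leg incident to $v$. Classify each leg as an \emph{opener} (Fresh, NI, Highmult, or Paired$_1$, each raising $c(v)$ by one) or a \emph{closer} (Return or Paired$_2$, each lowering it by one). The closer role of Paired$_2$ comes directly from Remark~\ref{rem:P-legs}: a Paired$_2$ leg is the second traversal of an edge already opened by its paired Fresh leg. Since $c(v)$ begins and ends at zero, the total number of openers at $v$ equals the total number of closers at $v$, and combined with the identity ``arrivals at $v$ equal departures from $v$'', this yields two linear relations among the leg-type counts incident to $v$.

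For a circle vertex $v$, the walk structure gives $v$ degree $2$ and hence one visit, so the unique arrival is Fresh ($F_a = 1$) and no Paired leg can depart from $v$ (a Paired$_2$ leg would need a Fresh departure partner at the same one-visit circle). The balance collapses to $R_d + P_{2,d} = 1$, so the unique departure is either the single forced Return leg or a closing Paired$_2$ leg; in either case the number of unforced Return departures from $v$ is $0 \leq NI_a + H_a$. For a square vertex $v$, each square lies in exactly one block and is visited twice, giving $F_a = 1$, two arrivals, two departures, and no Paired departures; the balance equations then simplify to $R_d = 2 - R_a - P_{2,a} \leq 2$, and all $R_d$ Return departures from $v$ occur within its unique final appearance and are therefore forced. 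Again the unforced count is $0 \leq NI_a + H_a$. The starting tuple of circles in $I_1$ is handled by the parallel computation with $F_a = 0$, which also yields $0$ unforced returns.

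The main place that needs care is the bookkeeping of Paired$_2$ legs: they must appear with the correct (closer) sign in the balance equations, and miscounting them would reverse the sign of the critical correction terms and break the inequality. Once that is handled, the result is elementary, and since Definition~\ref{def:encoding} places one return label for each NI or Highmult arrival at $v$, there are always at least as many return labels available as unforced return departures to disambiguate.
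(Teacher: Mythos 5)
Your proof has a fatal flaw in its treatment of vertex degrees. You assert that ``the walk structure gives [a circle vertex] $v$ degree $2$ and hence one visit'' and that ``each square lies in exactly one block and is visited twice.'' This confuses \emph{positions} in the structure graph (\Cref{def:structure}) with \emph{labeled} vertices (elements of $[d]$ or $[n]$). The lemma is about labeled vertices: a single label $j \in [d]$ can appear at many circle positions across blocks (this is precisely why circle vertices carry a bucket index $b$, indicating degree between $2^b$ and $2^{b+1}$), and a single label $i \in [n]$ can appear as a square vertex in many blocks (the distinctness constraint of \Cref{def:valid-labeling}\ref{item:t-distinct-squares} only holds \emph{within} each block). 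Indeed, $\NI$ and $\Highmult$ legs exist exactly because labeled vertices are revisited; if your claim were true, there would be no $\NI$ or $\Highmult$ legs at all and the lemma would be vacuous. Your balance argument therefore ``proves'' the strictly false statement that the number of unforced returns is always $0$, and you never actually derive the bound $\ret_{\out}^{\text{unforced}} \leq \n_{\into} + \highmult_{\into}$ in the cases where the right-hand side is positive.

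To repair the argument, you need to run the balance at a labeled vertex $v$ that may be visited arbitrarily many times. The paper combines two identities at $v$: (i) in-degree equals out-degree, and (ii) every active edge created at $v$ (by $\Fresh$, $\Paired$, $\NI$, or $\Highmult$) must eventually be closed by a $\Ret$, giving $\ret_{\into} + \ret_{\out} = \fresh_{\into} + \fresh_{\out} + \paired_{\out} + \n_{\into} + \n_{\out} + \highmult_{\into} + \highmult_{\out}$. Subtracting yields $\ret_{\out} = \fresh_{\into} + (\paired_{\into}) + \n_{\into} + \highmult_{\into}$, and then one counts the forced returns (one for a circle, two for a square) to isolate the $\n_{\into} + \highmult_{\into}$ excess. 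Your ``opener/closer'' classification with $\Paired_2$ as a closer is a secondary concern (the paper treats all $\Paired$ legs as openers, with the immediate closure handled by \Cref{def:valid-labeling}\ref{item:square-neighbors}), but it is moot next to the degree error above.
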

\begin{proof}
    Let $\fresh_{\into}, \n_{\into}, \highmult_{\into}, \ret_{\into}$ be the number of $\Fresh, \NI, \Highmult, \Ret$ steps arriving at $v$,
    and $\fresh_{\out}, \n_{\out}, \highmult_{\out}, \ret_{\out}$ be the number of $\Fresh, \NI, \Highmult, \Ret$ steps departing from $v$.
    For circle vertices, there may be $\paired_{\out}$ out-going $\Paired$ legs.
    We know that (1) the number of in-going and out-going edges are the same: $\fresh_{\into} + \n_{\into} + \highmult_{\into} + \ret_{\into} = \fresh_{\out} + \paired_{\out} + \n_{\out} + \highmult_{\out} + \ret_{\out}$,
    and (2) all active edges need to be closed: $\ret_{\into} + \ret_{\out} = \fresh_{\into}  + \fresh_{\out} + \paired_{\out} + \n_{\into} + \n_{\out} + \highmult_{\into} + \highmult_{\out}$.
    Combining the two, we get
    \begin{equation*}
        \ret_{\into} + \ret_{\out} = 2(\fresh_{\into} + \n_{\into} + \highmult_{\into}) + \ret_{\into} - \ret_{\out}
        \implies \ret_{\out} = \fresh_{\into} + \n_{\into} + \highmult_{\into} \mper
    \end{equation*}
    For circle vertices, $\fresh_{\into} = 1$ (the first arrival at $v$), while there is $1$ forced return from $v$.
    So, there are $\n_{\into} + \highmult_{\out}$ number of unforced returns from $v$.

    Now, if $v$ is a square vertex, there is exactly $1$ incoming $\Fresh$ and $1$ $\Paired$ leg (the first arrival at $v$; all subsequent arrivals are $\NI$, $\Highmult$ or $\Ret$ by definition), meaning $\fresh_{\into} + \paired_{\into} = 2$.
    The same calculation shows that $\ret_{\out} = \fresh_{\into} + \paired_{\into} + \n_{\into} + \highmult_{\into}$.
    There are $2$ forced returns from $v$, so again there are $\n_{\into} + \highmult_{\out}$ number of unforced returns from $v$.
\end{proof}

We remark that this generalized definition of forced/unforced returns was used in \cite{HsiehKPX23} as well to bound walks that have complicated structures.
In fact, \Cref{lem:pur} is essentially the Potential-Unforced-Return (PUR) factor as defined in \cite{HsiehKPX23}.

\subsection{Decoding}
\label{sec:decoding}

Given an encoding and a structure, we will decode the labeling step by step from left to right.
Each edge is either active or closed during the decoding process, and all edges must be closed in the end (so that requirement~\ref{item:edge-appears-twice} of \Cref{def:valid-labeling} is satisfied).

\begin{mdframed}
    \begin{algorithm}[Decoding algorithm $\Decode$]
    \label{alg:decode}
    \mbox{}
      \begin{description}
      \item[Input:] A structure $\bpi \in (\Permutation{2t})^{2\ell}$, an encoding $\xi$.

      \item[Output:] A labeling of the structure, or $\FAIL$.

      \item[Operation:] \mbox{}
        We will label the walk (the graph determined by $\bpi$) one vertex at a time.
        It is convenient to view the algorithm as dynamically maintaining a separate bipartite graph between labeled square and circle vertices (elements in $[n]$ and $[d]$ respectively), where a leg in the walk may create a new vertex and/or edge, and each edge has either ``active'' or ``closed'' status.
        Each vertex stores a list of return labels.
        Moreover, each circle vertex is assigned a bucket index $b$.
        \begin{itemize}
            \item
            \begin{itemize}
                \item $\{\Fresh,\Paired\}$ legs from circle to square: label the square vertex with the specified label, and mark the edges as active.
                Place down the return label from the $\Paired$ leg.

                \item $\Fresh$ leg from square to circle: label the circle vertex with the specified label, and mark the edge as active.
                Then, label the circle vertex with the specified bucket index.

                \item $\NI$ or $\Highmult$ leg: label the vertex by the specified previous vertex, and mark the edge as active.
                Then, place down the return label.

                \item $\Ret$ leg: use a return label to choose an active edge incident to the current vertex to close.
                In the special case that it is a square to circle leg and the previous visit to the square is via $\{\Fresh, \Paired_2\}$ from two circle vertices both labeled $j$, then simply set $j$ as the destination.
                If there is no return label, there must be only one choice (otherwise output $\FAIL$).
            \end{itemize}

            \item For every square vertex, unless it is reached via $\{\Fresh,\Paired\}$ legs, it gets two labels from the two legs.
            Output $\FAIL$ if the two labels differ.
            Also output $\FAIL$ if the labels assigned to the $t$ square vertices in the same block are not distinct.

        \end{itemize}
      \end{description}
    \end{algorithm}
\end{mdframed}

\begin{figure}[ht!]
    \centering
    \begin{subfigure}[b]{0.45\textwidth}
        \includegraphics[width=\textwidth]{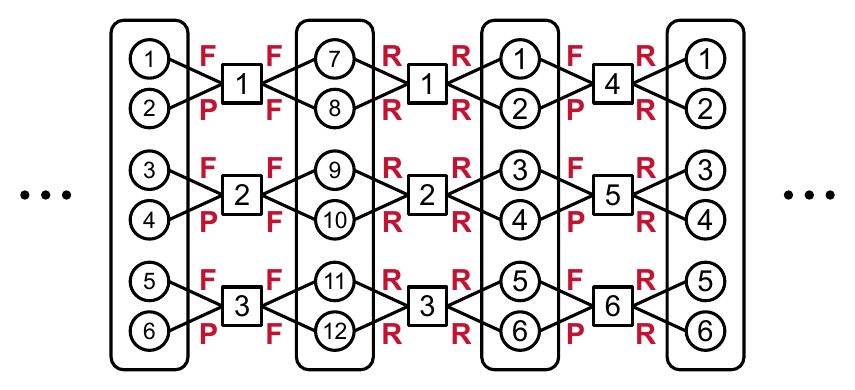}
        \caption{Decoding succeeds.}
        \label{fig:encoding-succeeds}
    \end{subfigure}
    \quad
    \begin{subfigure}[b]{0.45\textwidth}
        \includegraphics[width=\textwidth]{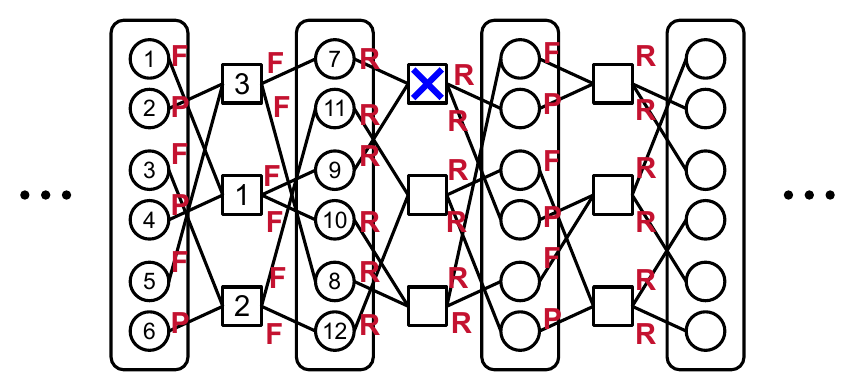}
        \caption{Decoding fails.}
        \label{fig:encoding-permuted}
    \end{subfigure}
    \caption{An example where we decode the same encoding with different structures.
    }
    \label{fig:encoding-examples}
\end{figure}

\Cref{fig:encoding-examples} shows an example where we decode the same encoding with different structures.
The first one (\Cref{fig:encoding-succeeds}) succeeds; in fact, for the part shown in \Cref{fig:encoding-succeeds}, the resulting labeling satisfies all requirements of a valid labeling (\Cref{def:valid-labeling}).
The second one fails because for the square vertex (marked with an ``x''), the two legs give conflicting labels: the first one labels it $3$ to close the edge $(\square{3}, \circle{7})$ while the second one labels it $1$ to close the edge $(\square{1}, \circle{9})$.

It is crucial that any valid labeling is captured by some encoding, so that we can bound the sum of labelings by the sum of encodings.

\begin{lemma} \label{lem:labeling-to-encoding}
    For any labeling $\sigma$ which is valid for some structure $\bpi \in (\Permutation{2t})^{2\ell}$, there is an encoding $\xi$ such that $\Decode(\xi,\bpi)$ successfully outputs $\sigma$.
    Moreover, $\edgefactor(\xi) \geq \edgefactor(\sigma,\bpi)$.
\end{lemma}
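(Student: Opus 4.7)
The plan is to build $\xi$ from $\sigma$ by walking through the structure $\bpi$ from left to right and classifying each leg by what $\sigma$ actually does. The starting index $J \in [d]^{2t}$ is set to the labels $\sigma$ assigns to the first layer of circle vertices. For each subsequent leg, inspect its destination and the status of the incident edge: if the destination vertex is new, mark $\Fresh$, except that in a circle-to-square pair whose square vertex is newly created the second leg is marked $\Paired_1$ or $\Paired_2$ according to whether the two originating circle vertices carry distinct or equal labels; if the vertex is old but the edge is new, mark $\NI$; if the edge is active, mark $\Ret$; if the edge is currently closed (so $\sigma$ reopens it), mark $\Highmult$. Requirement~\ref{item:square-neighbors} in \Cref{def:valid-labeling} ensures that after a $\Paired_2$ leg the square has two legitimate $\Ret$ legs back to the shared circle label, closing the doubly-traversed edge with final multiplicity $4$.

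The auxiliary data attached to each leg is read off from $\sigma$: $\Fresh$ legs record the true new label (and for new circle vertices a bucket index $b \in [\log n]$ with $2^b \le \deg v < 2^{b+1}$, where $\deg v$ is the final degree of $v$ in $\sigma$); $\NI$ and $\Highmult$ legs record a pointer to the previously seen vertex (via a label in $[t\ell]$ for squares, a bucket plus index in $[4t\ell/2^b]$ for circles, or an incident-edge index for circle-to-square $\Highmult$ legs). Every $\NI$, $\Highmult$, and $\Paired_2$ leg also deposits the return label that later unforced $\Ret$ legs will consume. By \Cref{lem:pur} the number of unforced returns leaving a vertex equals the number of $\NI$ and $\Highmult$ arrivals there, with $\Paired_2$ bundling its own pair of forced returns into a single label in $\{1,2\}$, and the return-label ranges in \Cref{def:encoding}, e.g.\ $[2^{b+1}]$ for a circle vertex in bucket $b$, upper bound the number of incident active edges at the moment the return label is consumed.

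With this encoding in hand, $\Decode(\xi,\bpi)$ reproduces $\sigma$ by induction on the leg index: each step is pinned down by the recorded type, destination data, or return label, and the only consistency checks that can trigger $\FAIL$ are the two conditions of \Cref{def:valid-labeling} that $\sigma$ satisfies by assumption. For the edge-factor inequality, every labeled edge has even multiplicity $k$; using $\E[a^2]\le 1$, $\E[a^4]\le \mu_4$, and $|a|\le \sqrt{C\log n}$ gives $\E[a^k]\le (C\log n)^{(k-2)/2}$ for $k\ge 4$, while each $\Paired_2$ edge contributes exactly $\E[a^4]\le \mu_4$ since its multiplicity is $4$. A non-$\Paired_2$ edge of multiplicity $k$ hosts exactly $(k-2)/2$ of the $\Highmult$ legs, so multiplying over all edges telescopes to $\mu_4^{\paired_2}(C\log n)^{\highmult} = \edgefactor(\xi)$. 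The main obstacle is the bookkeeping around $\Paired_2$: one must verify that the single $\{1,2\}$ return label is enough for the decoder to route the square's two forced subsequent returns to the correct side, and that the resulting multiplicity-$4$ edge is counted exactly once in the $\mu_4^{\paired_2}$ factor rather than double-counted against a $\Highmult$ exponent; once that special case is handled, the remaining case analysis reduces to the standard Fresh/Return/NI bookkeeping of~\cite{Tao12,HsiehKPX23}.
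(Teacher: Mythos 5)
Your plan matches the paper's proof almost exactly: classify legs greedily left to right, read off destination data and bucket indices from $\sigma$, use \Cref{lem:pur} to justify the return-label budget, and split the edge-factor analysis by whether the edge was created via an $\{\Fresh,\Paired_2\}$ pair. That is precisely the paper's route.

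The one genuine soft spot is your edge-factor claim that ``each $\Paired_2$ edge contributes exactly $\E[a^4]\le \mu_4$ since its multiplicity is $4$.'' A $\Paired_2$ edge need not have final multiplicity $4$: after the $\Fresh$, $\Paired_2$, and one or two $\Ret$ legs of the first visit, the edge can be reopened by later $\Highmult$ legs arbitrarily often, so its multiplicity is $2k$ for any $k\ge 2$. The paper handles this by declaring the $\Paired_2$ leg to be, conceptually, the \emph{third} traversal; a $\Paired_2$ edge of multiplicity $2k$ then carries $1$ $\Fresh$, $1$ $\Paired_2$, $(k-2)$ $\Highmult$ and $k$ $\Ret$ legs, yielding edge factor $\mu_4(C\log n)^{k-2}\ge\E[a^{2k}]$, and there is no double-counting against the $(C\log n)^{\highmult}$ term because the $\Highmult$ legs on that edge are exactly the traversals beyond the fourth. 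You flag this as the ``main obstacle'' but leave it unresolved; writing the general-multiplicity bookkeeping out is necessary to close the proof. A smaller imprecision: the $\{1,2\}$ return label is attached to every $\Paired$ leg ($\Paired_1$ and $\Paired_2$ alike), and its role is to order the square's two \emph{forced} returns on its final visit, not to serve later unforced returns as your wording suggests.
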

\begin{proof}
    Given a valid labeling for a structure, the $4t\ell$ leg types and the specification of vertices (a new vertex for $\Fresh$, a previous location for $\NI$, and an incident edge for $\Highmult$) are straightforward to determine.
    For circle vertices, the bucket index $b\in [\log n]$ is specified such that the total number of edges incident to the vertex is between $2^b$ and $2^{b+1}$.
    Here $b \leq \log n$ since $4t\ell \ll 2^{\log n} = n$.
    In particular, the number of circle vertices in bucket $b$ is at most $\frac{4t\ell}{2^b}$.

    The return labels are also straightforward.
    As shown in \Cref{lem:pur}, one return label for each $\NI$ or $\Highmult$ is enough to guide all unforced returns.
    Thus, we can specify the return labels arriving at a vertex $v$ according to the ordering in which active edges incident to $v$ are closed.
    A circle vertex with bucket index $b$ has at most $2^{b+1}$ incident edges, whereas a square vertex has at most $4\ell$ incident edges (due to requirement~\ref{item:t-distinct-squares} of \Cref{def:valid-labeling} that the square vertices in each block are distinct).
    For a square vertex $v$, two departing $\Ret$ legs are taken to close the final two active edges during the last visit to $v$. The return label for the $\Paired$ leg (an element in $\{1,2\}$) that initially creates $v$ is decided by the ordering of these final two $\Ret$ legs.
    Therefore, the ranges of return labels as defined in the encoding are sufficient.

    For the edge factors, consider an edge traversed $2k \geq 2$ times in the walk.
    If the edge is created by an $\{\Fresh, \Paired_2\}$ pair, then as discussed in \Cref{rem:P-legs}, we may view the $\Paired_2$ leg as the third time the edge is traversed.
    In this case, the edge will be traversed by $1$ $\Fresh$, $1$ $\Paired_2$, $(k-2)$ $\Highmult$ and $k$ $\Ret$ legs.
    Then, the edge factor of $\xi$ from this edge will be $\mu_4 \cdot (C \log n)^{k-2}$.
    This is a valid upper bound as $\E[a^{2k}] \leq \E[a^4] \cdot (\sqrt{C \log n})^{2k-4} \leq \mu_4 \cdot (C\log n)^{k-2}$.

    On the other hand, in other cases the edge will be traversed by $1$ $\Fresh$, $(k-1)$ $\Highmult$ and $k$ $\Ret$ legs, giving an edge factor of $(C\log n)^{k-1}$.
    This is also a valid upper bound as $\E[a^{2k}] \leq \E[a^2] \cdot (\sqrt{C\log n})^{2k-2} \leq (C\log n)^{k-1}$.
    Therefore, $\edgefactor(\xi)$ as defined in \Cref{def:encoding} is a valid upper bound on $\edgefactor(\sigma,\bpi)$.
\end{proof}

The next lemma states two simple but important relationships between the number of legs of each type in a valid encoding.

\begin{lemma} \label{lem:encoding-params}
    Let $\xi$ be an encoding such that $\Decode(\xi,\bpi)$ succeeds for some $\bpi$.
    Let $\fresh_s, \paired_s, \n_s, \highmult_s, \ret_s$ be the number of $\Fresh, \Paired, \NI, \Highmult, \Ret$ legs from circle to square, let $\fresh_c, \n_c, \highmult_c, \ret_c$ be the number of $\Fresh, \NI, \Highmult, \Ret$ legs from square to circle, and let $\n = \n_s + \n_c$, $\highmult = \highmult_s + \highmult_c$ and $\ret = \ret_s + \ret_c$.
    Then, we must have
    \begin{enumerate}[(1)]
        \item $\ret = 2\fresh_s + \fresh_c + \n + \highmult = 2t\ell$.

        \item $\n_s + \highmult_s + \ret_s = \fresh_c + \n + \highmult$.
    \end{enumerate}
\end{lemma}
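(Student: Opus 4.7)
The plan is to prove both identities via elementary counting arguments on the walk structure and its labeled edges. The key observations will be that (i) the structure itself dictates the total number of legs in each direction, (ii) each labeled edge carries a constrained multiplicity structure due to the evenness requirement and the rules in \Cref{def:encoding}, and (iii) the two identities are linearly related, so one follows from the other together with a direction-balance equation.

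First, from \Cref{def:structure}, each of the $\ell$ blocks has $t$ square vertices, each with $2$ incoming circle-to-square legs and $2$ outgoing square-to-circle legs. Hence there are exactly $2t\ell$ circle-to-square legs and $2t\ell$ square-to-circle legs in the walk, giving $\fresh_s + \paired_s + \n_s + \highmult_s + \ret_s = 2t\ell$ and $\fresh_c + \n_c + \highmult_c + \ret_c = 2t\ell$. Moreover, since every $\Fresh$ leg from circle to square is bijectively paired with a $\Paired$ leg (\Cref{def:encoding}), we have $\paired_s = \fresh_s$.

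Next, I would analyze the multiplicity of each labeled edge. Requirement~\ref{item:edge-appears-twice} of \Cref{def:valid-labeling} (which holds since decoding succeeds) ensures each labeled edge $e$ is traversed an even number $2k_e$ of times. Tracking the active/closed bookkeeping, each edge has exactly one ``creator'' leg (one of $\Fresh$, $\Paired_1$, $\NI$): a non-$\Paired_2$ edge decomposes as $1$ creator, $k_e-1$ $\Highmult$, and $k_e$ $\Ret$ legs, while a $\Paired_2$ edge decomposes (per \Cref{rem:P-legs}) as $1$ $\Fresh$, $1$ $\Paired_2$, $k_e-2$ $\Highmult$, and $k_e$ $\Ret$ legs. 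Summing Return counts over all edges and using $\sum_e 2k_e = 4t\ell$ yields $\ret = \sum_e k_e = 2t\ell$, the second equality in~(1). For the first equality, counting creators edge-by-edge gives $|E| = \fresh_s + \paired_1 + \fresh_c + \n = 2\fresh_s - \paired_2 + \fresh_c + \n$, while summing Highmult contributions gives $\highmult = \sum_e(k_e - 1) - \paired_2 = 2t\ell - |E| - \paired_2$. Eliminating $|E|$ and $\paired_2$ between these two relations yields $2\fresh_s + \fresh_c + \n + \highmult = 2t\ell$, completing~(1).

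Finally, identity~(2) is an immediate consequence: substituting $\ret_s = 2t\ell - 2\fresh_s - \n_s - \highmult_s$ (from the circle-to-square balance together with $\paired_s = \fresh_s$) into the LHS of~(2) gives $2t\ell - 2\fresh_s$, while~(1) rearranges to show that the RHS $\fresh_c + \n + \highmult$ equals the same quantity. The main delicacy is correctly handling the $\Paired_2$ case in the edge-counting step: a $\Paired_2$ leg is \emph{not} a creator (its edge is already produced by the paired $\Fresh$) and it reduces the $\Highmult$ count of its edge by one, so it contributes to neither the creator tally nor the generic-edge Highmult tally. Once this bookkeeping is set up, both identities reduce to straightforward linear algebra on the leg counts.
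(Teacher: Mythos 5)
Your proof is correct and arrives at both identities by sound reasoning, but it takes a noticeably more circuitous route to identity~(1) than the paper does. The paper proves $\ret = 2\fresh_s + \fresh_c + \n + \highmult$ in a single line via the global active/closed balance: every $\Fresh$, $\Paired$, $\NI$, and $\Highmult$ leg marks one edge active, and every $\Ret$ leg closes one, so $\ret = \fresh_s + \paired_s + \fresh_c + \n + \highmult$ (and then $\paired_s = \fresh_s$). Combining with the total leg count $4t\ell$ gives both equalities of~(1) at once. Your route instead passes through the auxiliary quantities $|E|$ and the per-edge multiplicities $k_e$, carefully distinguishing creators from $\Paired_2$ traversals; this is correct, but it introduces bookkeeping subtleties you do not fully justify --- in particular you silently rely on $k_e \geq 2$ for every $\Paired_2$ edge (so that $k_e - 2 \geq 0$ is a legitimate count of $\Highmult$ legs), which requires invoking requirement~\ref{item:square-neighbors} and the evenness constraint to rule out $2k_e = 2$. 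The global balance argument sidesteps this entirely. Your derivation of~(2) from~(1) and the direction-balance equation is the same as the paper's. On balance: same core idea, more machinery than needed for~(1); no actual gap, but you should make the $k_e \geq 2$ point explicit if you keep the edge-by-edge decomposition.
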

\begin{proof}
    First, recall from the encoding (\Cref{def:encoding}) that each $\Paired$ must be paired with an $\Fresh$ from circle to square, so $\paired_s = \fresh_s$.
    Next, each $\Fresh$, $\Paired$, $\NI$ and $\Highmult$ leg creates an active edge that needs to be closed by an $\Ret$ leg, thus
    \begin{equation*}
        \ret = \fresh + \paired_s + \n + \highmult = 2\fresh_s + \fresh_c + \n + \highmult \mper
        \numberthis \label{eq:r}
    \end{equation*}
    Next, the total number of edges is $4t\ell$, so $2\fresh_s + \fresh_c + \n + \highmult + \ret = 4t\ell$, which means that $\ret = 2\fresh_s + \fresh_c + \n + \highmult = 2t\ell$.

    Moreover, as the number of edges from circle to square is the same as that from square to circle, we have
    \begin{equation*}
        2\fresh_s + \n_s + \highmult_s + \ret_s = \fresh_c + \n_c + \highmult_c + \ret_c \mper
    \end{equation*}
    Combined with \Cref{eq:r}, we get $\ret_s + \ret_c = 2(\fresh_c + \n_c + \highmult_c) + (\ret_c - \ret_s)$, which means $\ret_s = \fresh_c + \n_c + \highmult_c$.
    Thus, $\n_s + \highmult_s + \ret_s = \fresh_c + \n + \highmult$.
\end{proof}

We next prove the crucial lemma upper bounding the success probability of decoding over random permutations of the structure.

\begin{lemma}[Decoding success probability] \label{lem:decoding-success-probability}
    Let $\xi$ be an encoding, let $\fresh_c$ be the number of $\Fresh$ legs from square to circle, and let $\n, \highmult$ be the number of $\NI, \Highmult$ legs.
    Then,
    \begin{equation*}
        \Pr_{\bpi\sim (\Permutation{2t})^{2\ell}} \bracks*{\text{$\Decode(\xi,\bpi)$ succeeds} }
        \leq \parens*{\frac{t}{e}}^{-\frac{1}{2} (\fresh_c + \n + \highmult)} \mper
    \end{equation*}
\end{lemma}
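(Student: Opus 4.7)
The approach is to exploit the randomness of $\bpi$, which consists of $2\ell$ independent uniform elements of $\Permutation{2t}$. For each permutation $\Pi \in \bpi$, what matters for decoding is only the perfect matching on $[2t]$ it induces---namely, the pairing of circle positions that connect to a common square within the corresponding block side. Each of the $(2t-1)!! \approx (2t/e)^t$ matchings is equally likely, so the probability that a specified collection of $c$ disjoint pairs is contained in the random matching is
\[
\prod_{i=0}^{c-1} \frac{1}{2t-2i-1} \leq \left(\frac{e}{t}\right)^c.
\]

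The first step is to identify the pairing constraints on each of the $2\ell$ permutations imposed by the encoding. Within each block, the $t$ squares partition the $4t$ legs into $t$ groups of four (two incoming, two outgoing per square). For every square visit after its first, the incoming pair is drawn from $\{\Ret, \NI, \Highmult\}$, and the encoding pins down (through explicit targets for $\NI$/$\Highmult$ and through return labels for $\Ret$) the two circle positions that must be paired by $\Pi_k$; an analogous statement holds for the outgoing pair under $\Pi_k'$. At a square's first visit, the $\Paired$ leg must be paired with its $\Fresh$ partner, again forcing a pair of positions.

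Writing $c_\Pi$ for the number of forced pairs on permutation $\Pi$, by independence of the $2\ell$ permutations,
\[
\Pr_\bpi[\Decode(\xi,\bpi)\text{ succeeds}] \leq \prod_{\Pi \in \bpi} \left(\frac{e}{t}\right)^{c_\Pi} = \left(\frac{e}{t}\right)^{\sum_\Pi c_\Pi}.
\]
It remains to show $\sum_\Pi c_\Pi \geq \tfrac12(\fresh_c + \n + \highmult)$. This counting uses \Cref{lem:encoding-params} (in particular $\paired_s = \fresh_s$ and $\ret_s = \fresh_c + \n_c + \highmult_c$) to attribute each of the $\fresh_c + \n + \highmult$ legs at least $1/2$ of a constraint, so that the total is at least $\tfrac12(\fresh_c + \n + \highmult)$, which yields the target bound $(t/e)^{-(\fresh_c+\n+\highmult)/2}$.

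The main obstacle is this bookkeeping, especially the treatment of $\fresh_c$ legs. A $\Fresh$ leg from square to circle does not by itself force a pairing (the newly created circle can occupy any position), but its partner on the outgoing side---which must be of type $\Ret$, $\NI$, $\Highmult$, or another $\Fresh$---governs whether a constraint arises. A case analysis over the pair of types at each square-exit, combined with the algebraic identities of \Cref{lem:encoding-params}, should verify that on average $1/2$ of a constraint is charged to each leg in $\{\fresh_c, \n, \highmult\}$, without double counting.
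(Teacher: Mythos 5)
Your probability bound for a random matching containing specified disjoint pairs and your reduction to counting ``forced pairs'' are in the right spirit, but there is a gap in the middle of the argument and the framing differs from what actually closes the proof.

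The gap is the claim that ``an analogous statement holds for the outgoing pair under $\Pi_k'$.'' The $\Decode$ algorithm outputs $\FAIL$ only when a square vertex receives two conflicting labels from its \emph{incoming} legs (or when square labels in a block collide). Each circle position in $I_{k+1}$ receives exactly one label because $\Pi_k'$ is a bijection, so the square-to-circle side of a block imposes no direct consistency constraint on $\Pi_k'$ --- the mismatch, if any, only surfaces later when $\Pi_{k+1}$ is applied. Trying to extract $c_{\Pi_k'} > 0$ directly is therefore not justified, and this is precisely the ``obstacle'' you flag about $\fresh_c$ legs: a $\Fresh$ leg from square to circle indeed forces no pairing at the time it occurs. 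The paper sidesteps this entirely: it restricts attention to the $\ell$ circle-to-square permutations $\Pi_1,\dots,\Pi_\ell$ only. On the left side of block $k$, with $i$ pairs of $\Fresh/\Paired$ legs and $2t-2i$ legs of types $\NI,\Highmult,\Ret$, the pairing of the latter is \emph{unique} (each such leg names a square, and by the block-distinctness requirement each square label can appear at most twice), so the success probability on $\Pi_k$ is $\frac{i!}{(2t-1)!!}\leq(t/e)^{-(t-i)}$. Summing $t-i$ over blocks gives exactly $\tfrac12(\n_s+\highmult_s+\ret_s)$, and then the algebraic identity $\n_s+\highmult_s+\ret_s=\fresh_c+\n+\highmult$ from \Cref{lem:encoding-params} --- the same identity you cite --- converts this into the stated exponent. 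The $\fresh_c$ count enters through $\ret_s$, not through any constraint on the $\Pi_k'$ side; the correct picture is not ``each $\fresh_c$ leg earns half a constraint somewhere'' but ``every $\fresh_c$ leg eventually gets closed by an $\ret_s$ leg, and those legs are what force pairings on the circle-to-square side.'' So the counting identity is the right tool, but the attribution should be made entirely through $\Pi_k$ and then translated algebraically. As written, your argument would need to fabricate constraints on $\Pi_k'$ that do not exist.
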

\begin{proof}
    Let $\fresh_s, \paired_s, \n_s, \highmult_s, \ret_s$ be the number of $\Fresh, \Paired, \NI, \Highmult, \Ret$ legs from circle to square, and let $\fresh_c, \n_c, \highmult_c, \ret_c$ be the number of $\Fresh, \NI, \Highmult, \Ret$ legs from square to circle.
    Consider the circle-to-square part of each block, and view the structure as a perfect matching between $2t$ circle vertices.
    The $\Fresh$ and $\Paired$ legs must be paired and the $t$ square vertices must receive distinct labels.
    Moreover, all $\NI, \Highmult, \Ret$ legs from circle to square provide labels, and recall that \Cref{alg:decode} fails if any square vertex receives two different labels.
    Therefore, there is exactly one way to pair up all the $\NI, \Highmult, \Ret$ legs.

    Suppose there are $i \leq t$ number of $\Fresh$, $i$ number of $\Paired$ (because $\Fresh$ and $\Paired$ must be paired), and $2t-2i$ other legs.
    There are $(2t-1)!! \geq (\frac{2t}{e})^t$ number of ways to group them into pairs.
    However, there are $i!$ ways to pair the $\Fresh$ and $\Paired$ legs, and there is a unique way to pair the other legs.
    Thus, via Stirling's approximation, the probability that a grouping is valid is
    \begin{equation*}
        \frac{i!}{(2t-1)!!} \leq i! \parens*{\frac{2t}{e}}^{-t}
        \leq \parens*{\frac{t}{e}}^{-(t-i)} \mper
    \end{equation*}
    Note that $t-i$ is $1/2$ of the number of $\NI, \Highmult$ and $\Ret$ legs from circle to square in the block.
    Thus, overall, the probability of success is upper bounded by $(t/e)^{-\frac{1}{2} (\n_s + \highmult_s + \ret_s)}$.
    By \Cref{lem:encoding-params}, we have $\n_s + \highmult_s + \ret_s = \fresh_c + \n + \highmult$, thus completing the proof.
\end{proof}

\subsection{Proof of \texorpdfstring{\Cref{lem:trace-bound}}{Lemma~\ref{lem:trace-bound}}}
\label{sec:proof-of-trace-bound}

We can finally finish the proof of \Cref{lem:trace-bound}.

\begin{proof}[Proof of \Cref{lem:trace-bound}]
    We start with writing the trace as a sum of encodings:
    \begin{equation*}
    \begin{aligned}
        \E_{A} \tr\parens*{\wt{M}^\ell}
        &\leq \E_{\bpi\sim (\Permutation{2t})^{2\ell}} \sum_{\sigma: \text{ labeling}} \1(\text{$\sigma$ valid for $\bpi$}) \cdot \edgefactor(\sigma,\bpi) \\
        &\leq \E_{\bpi\sim (\Permutation{2t})^{2\ell}} \sum_{\sigma: \text{ labeling}} \sum_{\xi: \text{ encoding}} \1(\Decode(\xi,\bpi)=\sigma \text{ \& valid for $\bpi$}) \cdot \edgefactor(\xi) \\
        &\leq \sum_{\xi: \text{ encoding}} \Pr_{\bpi\sim (\Permutation{2t})^{2\ell}} \bracks*{\text{$\Decode(\xi,\bpi)$ succeeds} } \cdot \edgefactor(\xi) \mper
    \end{aligned}
    \end{equation*}
    Here, the second inequality follows from \Cref{lem:labeling-to-encoding} that every labeling $\sigma$ has an encoding $\xi$ such that $\Decode(\xi, \bpi)$ successfully outputs $\sigma$, and that $\edgefactor(\sigma,\bpi) \leq \edgefactor(\xi)$.

    We bound the number of encodings as follows:
    \begin{itemize}
        \item For each $\{\Fresh,\Paired\}$ leading to a square, there are $n$ choices for the new vertex, and $2$ choices for the return label of $\Paired$. In total, $2n$ choices.

        \item For each $\Fresh$ from square to circle, there are $d$ choices for the new vertex and $\log n$ choices for the bucket index.
        In total, $d \log n$ choices.

        \item For each $\NI$ from circle to square, there are $t\ell$ choices for a previous square vertex, and $4\ell$ choices for the return label.
        In total, $4t\ell^2$ choices.

        \item For each $\NI$ or $\Highmult$ from square to circle, there are $\log n$ choices to choose a bucket index $b$, and $\frac{4t\ell}{2^b}$ choices for a vertex in bucket $b$.
        Then, there are $2^{b+1}$ choices for a return label.
        In total, $\log n \cdot \frac{4t\ell}{2^b} \cdot 2^{b+1} = 8t\ell \log n$ choices.

        \item For each $\Highmult$ from circle to square, there are $4t\ell$ choices to choose an incident edge, and $4\ell$ choices for a return label.
        In total, $16t\ell^2$ choices.

        \item \textbf{Edge factor:} Each $\Highmult$ gets a $C\log n$ factor, and each $\Paired_2$ gets a $\mu_4$ factor.
        Recall that $\mu_4$ and $C$ are the constants such that $\E[a_{ij}^4] \leq \mu_4$ and $|a_{ij}| \leq \sqrt{C \log n}$ almost surely.
    \end{itemize}

    Fix a leg type pattern and starting index in $[d]^{2t}$ (labels for the first $2t$ circle vertices in the structure), and let $\fresh_s, \paired_s, \n_s, \highmult_s, \ret_s$ be the number of $\Fresh, \Paired, \NI, \Highmult, \Ret$ legs from circle to square, and let $\fresh_c, \n_c, \highmult_c, \ret_c$ be the number of $\Fresh, \NI, \Highmult, \Ret$ legs from square to circle.
    Moreover, let $\n = \n_s + \n_c$ and $\highmult = \highmult_s + \highmult_c$.
    Then, the total encodings, weighted by $\edgefactor$, is bounded by
    \begin{equation*}
    \begin{gathered}
        n^{\fresh_s} (2\mu_4)^{\paired_s} (d\log n)^{\fresh_c} (4t\ell^2)^{\n_s} (8t\ell \log n)^{\n_c} (16Ct\ell^2 \log n)^{\highmult_s} (8Ct\ell \log^2 n)^{\highmult_c} \\
        \leq (2\mu_4 n)^{\fresh_s} (d\log n)^{\fresh_c} (8t\log^2 n)^{\n} (16Ct\log^3n)^{\highmult} \mcom
    \end{gathered}
    \end{equation*}
    where we use the fact that $\paired_s = \fresh_s$, $\mu_4\geq 1$ and $\ell \leq \log n$.

    Next, by \Cref{lem:decoding-success-probability}, 
    $\Pr_{\bpi}[\Decode(\xi,\bpi) \text{ succeeds}] \leq (t/e)^{-\frac{1}{2}(\fresh_c + \n + \highmult)}$.
    Thus, we get
    \begin{equation*}
        (2 \mu_4 n)^{\fresh_s} \parens*{\sqrt{\frac{e}{t}} d\log n}^{\fresh_c} O\parens*{\sqrt{t} \log^3 n}^{\n + \highmult}
        \leq O\parens*{\mu_4 n + \frac{d^2}{t}\log^2 n}^{\fresh_s + \frac{1}{2}\fresh_c} O\parens*{\sqrt{t} \log^3 n}^{\n + \highmult}
    \end{equation*}
    By \Cref{lem:encoding-params}, $2\fresh_s + \fresh_c = 2t\ell - \n - \highmult$.
    Thus, as long as $t \log^6 n \leq \mu_4 n + \frac{d^2}{t} \log^2 n$, the above is bounded by $O(\mu_4 n + \frac{d^2}{t}\log^2 n)^{t\ell}$.

    Finally, there are at most $2^{O(t\ell)}$ leg type patterns, and $d^{2t}$ choices for the starting index $J \in [d]^{2t}$ in the encoding.
    This gives the final bound of $d^{2t} \cdot O(\mu_4 n + \frac{d^2}{t}\log^2 n)^{t\ell}$.
\end{proof}

\section{Planted Sparse Vector}
\label{sec:planted-sparse-vector}


We first restate the planted sparse vector model.

\begin{model*}[Restatement of \Cref{model:planted-dist}]
    Fix an unknown unit vector $v\in \R^n$, and let $d \leq n \in \N$.
    Let $\wt{A}$ be a random $n \times d$ matrix sampled as follows: (1) let $A$ be the random matrix such that the first column is $v$ and the other $d-1$ columns are i.i.d.\ $\calN(0, \frac{1}{n}\Id_n)$ vectors;
    (2) let $R \in \R^{d \times d}$ be an arbitrary unknown rotation matrix;
    (3) set $\wt{A} = A R$.

    The task is that given $\wt{A}$, output a unit vector $\wh{v} \in \R^n$ such that $\angles{\wh{v}, v}^2 \geq 1 - o(1)$.
\end{model*}

We first characterize the sparse vectors that we will plant in \Cref{model:planted-dist}.
Motivated by the definition of well-spread subspaces (\Cref{def:spread-subspace}), we define the notion of compressible vectors as follows,

\begin{definition}[Compressible vector]
    Let $\rho,\gamma \in (0,1)$.
    We say that a vector $v \in \R^n$ is $(\rho,\gamma)$-compressible if 
    \begin{equation*}
        \max_{S\subseteq [n]: |S| \leq \rho n} \norm*{v_S}_2 = (1-\gamma) \norm{v}_2 \mper
    \end{equation*}
\end{definition}

In other words, $v$ is compressible if the top $\rho n$ coordinates of $v$ contains $1-\gamma$ fraction of its $\ell_2$-mass.

Next, recall our notation for the elementary symmetric polynomial (of the 4th powers):
\begin{equation*}
    Q_t(z) \coloneqq t! \sum_{S\subseteq [n]: |S| = t} \prod_{i\in S} z_i^4 \mper
\end{equation*}

We now state our main result.

\begin{theorem}[Formal version of \Cref{thm:planted-main}] \label{thm:planted-sparse-vector}
    There is an absolute positive constant $C$ such that for every $t \leq d \leq n \in \N$ such that $\frac{\rho d^2}{n} \log^C n \leq t \leq \frac{d}{\log^C n}$ and $\rho \in (0,\frac{1}{\log^C n})$,
    there is a randomized algorithm with running time $2^{\wt{O}(t)}$ with the following guarantee:
    Given $\wt{A} \in \R^{n\times d}$ drawn from \Cref{model:planted-dist} with a hidden unit vector $v \in \R^n$ such that $v$ is $(\rho,1/2)$-compressible and $Q_t(v)^{1/t} \geq (1 - \frac{1}{\log^C n}) \|v\|_4^4$, the algorithm outputs a unit vector $\wh{v}\in \colspan(\wt{A})$ such that $\angles{\wh{v}, v}^2 \geq 1 - o(1)$ with probability $1 - o(1)$ over the randomness of the algorithm and the input.
\end{theorem}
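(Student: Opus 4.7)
The plan is to set up the natural degree-$4t$ SoS relaxation that, in $n^{O(t)} = 2^{\wt{O}(t)}$ time, computes a pseudo-distribution $\mu$ over unit vectors $x \in \R^d$ maximizing $\pE_\mu[\wt{P}_t(x)]$, where $\wt{P}_t(x) := Q_t(\wt{A}x)$. Since $\wt{A} = AR$ and the first column of $A$ is $v$, the vector $r_1 := R^\top e_1$ satisfies $\wt{A} r_1 = v$ and is a feasible integral solution, so the optimum pseudo-expectation is at least $Q_t(v)$. Under the hypothesis $Q_t(v)^{1/t} \geq (1 - 1/\log^C n)\|v\|_4^4$, this optimum is within a $(1-o(1))^t$ factor of the ``ideal'' value $\|v\|_4^{4t}$.

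The heart of the argument is to show that any such $\mu$ must concentrate on the direction $\pm r_1$, namely $\pE_\mu[\langle r_1, x\rangle^{2t}] \geq (1-o(1))^t$ (this will be \Cref{lem:large-y1}). To prove it, write $y := Rx$ so that $y_1 = \langle r_1, x\rangle$ and $\wt{A}x = y_1 v + G y_{\geq 2}$, where $G \in \R^{n\times(d-1)}$ is the Gaussian tail of $A$ with $\calN(0, 1/n)$ entries (independent of $v$). Expand $Q_t(y_1 v + G y_{\geq 2})$ via multilinearity of the elementary symmetric polynomial; the dominant contribution is $y_1^{4t} Q_t(v)$, while the remaining terms are products of pure-$v$, pure-$Gy_{\geq 2}$, and mixed pieces. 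Applying the rescaled SoS upper bound of \Cref{lem:sum-of-distinct-products} to $\sqrt n\,G$ yields, as an SoS-provable inequality, $Q_s(Gy_{\geq 2}) \leq \|y_{\geq 2}\|^{4s}\cdot O\bigl(1/n + d^2/(tn^2)\log^2 n\bigr)^s = O(1/n)^s$, where the parameter assumption $\tfrac{\rho d^2}{n}\log^C n \leq t$ with $C$ large enough absorbs the polylog slack. Since $Q_t(v) \gtrsim (\rho n)^{-t}$ exceeds $(1/n)^t$ by a factor $\rho^{-t}$, the pure-$G$ and mixed cross terms are dominated by $y_1^{4t} Q_t(v)$ by a $\rho^{\Omega(t)}$ gap, the mixed ones handled by pseudo-Cauchy--Schwarz between the pure-$v$ and pure-$G$ factors. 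Combining, $\pE_\mu[y_1^{4t}]\cdot Q_t(v) \geq (1-o(1))^t Q_t(v)$, and pseudo-Jensen using the unit-sphere constraint produces $\pE_\mu[y_1^{2t}] \geq (1-o(1))^t$.

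Finally, invoke the Barak--Kelner--Steurer list-rounding procedure (\Cref{lem:rounding-powers}) on $\mu$: in $2^{\wt{O}(t)}$ time it produces a list $\calL$ of $2^{\wt{O}(t)}$ unit vectors in $\R^d$ containing some $\wh{x}$ with $\langle \wh{x}, r_1\rangle^2 \geq 1 - o(1)$, whence $\wh{v} := \wt{A}\wh{x}/\norm{\wt{A}\wh{x}}_2 \in \colspan(\wt{A})$ satisfies $\langle \wh{v}, v\rangle^2 \geq 1-o(1)$. To identify a good candidate without knowing $r_1$, apply the selection rule of \Cref{lem:selecting-the-vector}: for each $\wh{x} \in \calL$, form $\wh{v}_{\wh{x}} := \wt{A}\wh{x}/\norm{\wt{A}\wh{x}}_2$ and return the one whose top $\rho n$ coordinates carry the largest $\ell_2$-mass. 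By $(\rho,1/2)$-compressibility the correct candidate scores close to $1/2$, whereas any $\wh{x}$ with $\langle \wh{x}, r_1\rangle^2 \ll 1$ yields a vector dominated by the Gaussian projection $G(Rx)_{\geq 2}$, whose top $\rho n$ coordinates carry only $O(\rho\log(1/\rho))$ of its mass and therefore fail the test by a constant margin.

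The step I expect to be the main obstacle is the soundness analysis in the second paragraph: expanding $Q_t(y_1 v + G y_{\geq 2})$ into a sum of SoS-provable inequalities without accumulating more than a $(1-o(1))^t$ loss. This requires combining the trace-moment bound of \Cref{lem:trace-bound} inside the SoS proof system with carefully tuned pseudo-Cauchy--Schwarz splits, and choosing the constant $C$ large enough to absorb all polylogarithmic slacks generated both by the certification bound on $G$ and by the near-equality assumption $Q_t(v)^{1/t} \geq (1 - 1/\log^C n)\|v\|_4^4$.
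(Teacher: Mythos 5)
Your plan matches the paper's proof step for step: the same degree-$4t$ SoS relaxation maximizing $\wt{P}_t(x)$ over the unit sphere, the same key lemma that $\pE_\mu[\langle r_1,x\rangle^{2t}]\geq e^{-o(t)}$ (proved by decomposing $Q_t(y_1 v + Gy_{\geq 2})$ and invoking the SoS version of \Cref{lem:sum-of-distinct-products} on the Gaussian part), the same list-rounding via \Cref{lem:rounding-powers}, and the same top-$\rho n$-mass selection rule via \Cref{lem:selecting-the-vector}. The cross-term handling you flag as the main obstacle is precisely where the paper applies the per-factor SoS Jensen bound $(y_1 v_i + w_i)^4 \leq (1+\delta)^3\bigl(y_1^4 v_i^4 + w_i^4/\delta^3\bigr)$ with $\delta=1/\log^3 n$ (\Cref{lem:jensen}) to decouple the factors before expanding, followed by a case split on which $s$ dominates the resulting binomial sum --- this is exactly the pseudo-Cauchy--Schwarz route you anticipate.
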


To interpret the parameters in \Cref{thm:planted-sparse-vector}, consider $d = O(n^{\frac{1+\eps}{2}})$ and $t = \wt{O}(n^{\eps})$ (same as in \Cref{thm:spread}), and the sparsity of  $v$ is $\rho = 1/ \polylog(n)$.
Then, we can approximately recover $v$ in $2^{\wt{O}(n^{\eps})}$ time.
This establishes the same trade-off between the dimension and runtime as our certification algorithm (\Cref{thm:certification}).

Our algorithm is described in \Cref{alg:planted-sparse-vector}, and the proof of \Cref{thm:planted-sparse-vector} is completed at the end of \Cref{sec:algo-planted}.

\begin{remark}[Assumptions on $v$] \label{rem:assumption-on-v}
    The assumption that $Q_t(v)^{1/t} \geq (1 - o(1)) \|v\|_4^4$ is not standard.
    In particular, this does not hold for any vector $v$ whose mass concentrates on $<t$ coordinates.
    However, for such vectors, one can simply brute-force search over all support of size $\leq \wt{O}(t)$, which takes $2^{\wt{O}(t)}$ time.

    On the other hand, most distributions of sparse vectors used in previous works satisfy that $Q_t(v)^{1/t} \geq (1-o(1)) \|v\|_4^4$ (when the sparsity is not small enough to brute-force search over the support).
    These include the noiseless and noisy Rademacher-Bernoulli distribution (\Cref{def:nBR}) considered in \cite{MaoW21,ChendO22,DiakonikolasK22,ZadikSWB22} and the Rademacher-Gaussian distribution considered in \cite{MaoW21}.

    We note that the algorithms of \cite{HopkinsSSS16,MaoW21} require very minimal assumptions on $v$.
    In particular, the algorithm of \cite{MaoW21} succeeds as long as $\abs*{\|v\|_4^4 - \frac{3}{n}} \geq \Omega(\frac{1}{n})$ and $\frac{1}{\|v\|_4^4} \frac{d}{n^{3/2}} \ll 1$,\footnote{Here $\frac{3}{n}$ is exactly $\E[\|g\|_4^4]$ for a Gaussian vector $g \sim \calN(0, \frac{1}{n}\Id_n)$. There is also a requirement on $\|v\|_{\infty}/\|v\|_4^2$ which we omit, as it is satisfied in most cases. See \cite{MaoW21} for the full statement.}
    so when $\|v\|_4^4 \approx \frac{1}{\rho n}$, this is equivalent to $\rho d \ll \sqrt{n}$.
    We leave as future work the identification of a minimal analytic assumption on $v$ that ensures successful recovery in the subexponential-time regime.

\end{remark}

\subsection{Preliminaries for the planted sparse vector model}

It is easy to see that a compressible vector $v$ has large $4$-norm:

\begin{lemma} \label{lem:compressible-4-norm}
    A $(\rho,\gamma)$-compressible vector $v \in \R^n$ satisfies $\|v\|_4^4 \geq \frac{(1-\gamma)^4}{\rho n} \|v\|_2^2$.
\end{lemma}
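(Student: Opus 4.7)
The plan is to directly apply Cauchy--Schwarz to the ``heavy'' subset guaranteed by the compressibility definition. Let me outline the steps.

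First, by the definition of $(\rho,\gamma)$-compressibility, there is a subset $S \subseteq [n]$ with $|S| \leq \rho n$ such that $\|v_S\|_2 \geq (1-\gamma)\|v\|_2$ (the ``max'' in the definition is attained). This localizes most of the $\ell_2$ mass of $v$ onto at most $\rho n$ coordinates, and the goal of the rest of the proof is to convert this concentration of the $\ell_2$ mass into a lower bound on the $\ell_4$ mass.

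Second, I would apply Cauchy--Schwarz to $\|v_S\|_2^2 = \sum_{i \in S} v_i^2$, viewed as an inner product of the all-ones vector on $S$ with the vector $(v_i^2)_{i \in S}$. This yields
\begin{equation*}
    \|v_S\|_2^2 = \sum_{i \in S} v_i^2 \;\leq\; \sqrt{|S|}\cdot \sqrt{\sum_{i \in S} v_i^4} \;\leq\; \sqrt{\rho n}\cdot \|v_S\|_4^2 \;\leq\; \sqrt{\rho n}\cdot \|v\|_4^2,
\end{equation*}
where the last inequality uses the trivial monotonicity $\|v_S\|_4 \leq \|v\|_4$. Squaring both sides and dividing by $\rho n$ gives $\|v\|_4^4 \geq \tfrac{1}{\rho n}\|v_S\|_2^4$.

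Third, substituting the lower bound $\|v_S\|_2 \geq (1-\gamma)\|v\|_2$ into the previous display yields $\|v\|_4^4 \geq \frac{(1-\gamma)^4}{\rho n}\|v\|_2^4$, which is the claimed inequality (the $\|v\|_2^4$ factor reduces to $\|v\|_2^2$ for a unit vector, as is the case throughout the paper). There is no real obstacle here --- this is a one-line application of Cauchy--Schwarz, and the only subtlety is the order of the operations (first apply Cauchy--Schwarz on $S$ to bound $\|v_S\|_2^2$ by $\sqrt{|S|}\,\|v_S\|_4^2$, and then monotonicity to pass from $\|v_S\|_4$ to $\|v\|_4$).
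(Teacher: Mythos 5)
Your proof is correct and is essentially the same argument as the paper's: take a subset $S$ with $|S| \leq \rho n$ achieving the compressibility bound, apply Cauchy--Schwarz to get $\|v_S\|_2^2 \leq \sqrt{|S|}\,\|v_S\|_4^2$, and pass from $\|v_S\|_4$ to $\|v\|_4$ by monotonicity. You were also right to notice that the argument naturally produces $\|v\|_2^4$ on the right-hand side rather than the $\|v\|_2^2$ in the statement (the two agree since the lemma is only ever invoked for unit vectors).
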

\begin{proof}
    Let $S^* \subseteq [n]$ be such that $|S^*| \leq \rho n$ and $\|v_{S^*}\|_2 \geq (1-\gamma)\|v\|_2$.
    Then, by Cauchy-Schwarz, $(1-\gamma)^4 \|v\|_2^4 \leq \|v_{S^*}\|_2^4 \leq |S| \cdot \|v_S\|_4^4 \leq \rho n \cdot \|v\|_4^4$.
\end{proof}

The following lemma is also straightforward given the standard singular value concentration bounds on Gaussian matrices (\Cref{fact:gaussian-matrix}).

\begin{lemma} \label{lem:A-norm}
    Let $d \leq n/\polylog(n)$.
    Let $A \in \R^{n\times d}$ be drawn from \Cref{model:planted-dist} with an arbitrary unit vector $v$.
    Then, with probability $1 - o(1)$, we have $1 - o(1) \leq \sigma_{\min}(A) \leq \sigma_{\max}(A) \leq 1 + o(1)$.
\end{lemma}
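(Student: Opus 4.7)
Proof plan. Decompose $A = [\,v \mid G\,]$ where $G \in \R^{n \times (d-1)}$ has i.i.d.\ $\calN(0, \tfrac{1}{n})$ entries (so $\sqrt{n}\,G$ is a standard Gaussian matrix). The goal is to control $\|Ax\|_2$ uniformly over unit vectors $x \in \R^d$, writing $x = (x_1, x')$ with $x_1 \in \R$, $x' \in \R^{d-1}$, so that $Ax = x_1 v + Gx'$.

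Step 1: singular values of $G$. Apply \Cref{fact:gaussian-matrix} to $\sqrt{n}\,G$ with $t = \sqrt{\log n}$. Since $d \leq n/\polylog(n)$, we have $\sqrt{(d-1)/n} + \sqrt{(\log n)/n} = o(1)$, so with probability $1 - 1/\poly(n)$,
\begin{equation*}
    1 - o(1) \;\leq\; \sigma_{\min}(G) \;\leq\; \sigma_{\max}(G) \;\leq\; 1 + o(1).
\end{equation*}

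Step 2: control the cross term. The vector $G^\top v \in \R^{d-1}$ has independent $\calN(0, 1/n)$ entries (since $\|v\|_2 = 1$, each $\langle g_j, v\rangle$ is $\calN(0, 1/n)$). Standard Gaussian concentration gives $\|G^\top v\|_2^2 = (d-1)/n \cdot (1 \pm o(1)) = o(1)$ with probability $1 - o(1)$.

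Step 3: expand the quadratic form. Condition on the two events above. For any unit $x = (x_1, x')$,
\begin{equation*}
    \|Ax\|_2^2 \;=\; x_1^2 \|v\|_2^2 \;+\; 2 x_1 \langle G^\top v, x'\rangle \;+\; \|Gx'\|_2^2.
\end{equation*}
Using $\|v\|_2 = 1$, Cauchy--Schwarz on the cross term (with $|x_1| \leq 1$ and $\|x'\|_2 \leq 1$), and Step~1 for $\|Gx'\|_2^2$:
\begin{equation*}
    x_1^2 + (1-o(1))\|x'\|_2^2 - o(1) \;\leq\; \|Ax\|_2^2 \;\leq\; x_1^2 + (1+o(1))\|x'\|_2^2 + o(1).
\end{equation*}
Since $x_1^2 + \|x'\|_2^2 = 1$, both sides give $\|Ax\|_2^2 = 1 \pm o(1)$, uniformly in $x$. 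Taking $\sup$ and $\inf$ over the unit sphere in $\R^d$ yields the claimed bounds on $\sigma_{\max}(A)$ and $\sigma_{\min}(A)$.

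No step presents a real obstacle: the only ingredient beyond the cited Gaussian matrix fact is the observation that the planted column $v$, being a fixed unit vector independent of $G$, contributes an $O(\sqrt{d/n})$-norm cross term $G^\top v$ that is negligible under the hypothesis $d = o(n)$. One minor care point is that \Cref{model:planted-dist} allows an arbitrary rotation $R$, but the statement of \Cref{lem:A-norm} is stated for $A$ (the pre-rotation matrix), and even if one wanted the bound for $\wt{A} = AR$, singular values are rotation-invariant, so the same conclusion holds.
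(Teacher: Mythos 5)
Your proposal is correct and is essentially the same argument as the paper's: decompose $A$ into the planted column $v$ and the Gaussian block, apply the standard singular-value concentration to the Gaussian block, show the cross term $G^\top v$ has norm $o(1)$, and expand the quadratic form $\|Ax\|_2^2$ uniformly over unit $x$. The observation about rotation invariance is a nice aside but not needed since the lemma is stated for the pre-rotation matrix $A$.
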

\begin{proof}
    Recall that the first column of $A$ is $v$, and let $\ol{A} \in \R^{n\times (d-1)}$ be the rest of the matrix with i.i.d.\ $\calN(0,1/n)$ entries.
    By \Cref{fact:gaussian-matrix}, we know that $\sigma_{\max}(\ol{A}) \leq 1 + o(1)$ with high probability.
    Moreover, since $v$ is a unit vector, $\ol{A}^\top v$ is distributed as a $(d-1)$-dimensional vector with $\calN(0,1/n)$ entries, thus with high probability we have $\|\ol{A}^\top v\|_2^2 \leq O(d/n) \leq o(1)$.

    For any unit vector $x = (x_1, \ol{x})$, we have $Ax = x_1 v + \ol{A} \ol{x}$, thus
    \begin{align*}
        \|Ax\|_2^2 &= x_1^2 \|v\|_2^2 + \norm*{\ol{A} \ol{x}}_2^2 + 2x_1 v^\top \ol{A} \ol{x}
        \leq x_1^2 \|v\|_2^2 + \norm*{\ol{A} \ol{x}}_2^2 + 2|x_1| \|\ol{x}\|_2 \norm*{\ol{A}^\top v}_2 \\
        &\leq (1+o(1)) \parens*{x_1^2 + \norm*{\ol{x}}_2^2}
        = (1+o(1)) \norm{x}_2^2 \mper
    \end{align*}
    Similarly, the minimum singular value lower bound follows from the fact that  $\sigma_{\min}(\ol{A}) \geq 1-o(1)$
    and $\|Ax\|_2^2 \geq x_1^2 \|v\|_2^2 + \|\ol{A} \ol{x}\|_2^2 - 2|x_1| \|\ol{x}\|_2 \norm{\ol{A}^\top v}_2$.
\end{proof}

The following lemma is important in our algorithm to identify a ``good'' vector to output.

\begin{lemma} \label{lem:selecting-the-vector}
    Let $n, d\in \N$ such that $d \leq n/\polylog(n)$, let $\rho \leq 1/\polylog(n)$, and let $\gamma,\delta \in (0,1)$.
    Let $v \in \R^n$ be a $(\rho,\gamma)$-compressible unit vector, and let $A \in \R^{n\times d}$ be drawn from \Cref{model:planted-dist} with planted vector $v$.
    Then, with probability $1- o(1)$ over $A$, the following holds:
    let $y\in \R^d$ be any unit vector and let $\wh{v} = A y$,
    \begin{enumerate}[(1)]
        \item If $\|y - e_1\|_2 \leq \delta$ or $\|y + e_1\|_2 \leq \delta$, then $\max_{S: |S| \leq \rho n} \|\wh{v}_S\|_2 \geq 1 - \gamma - (1+o(1))\delta$.
        \label{item:close-to-e1-implies-sparse}

        \item If $\|y - e_1\|_2 \geq \delta$ and $\|y + e_1\|_2 \geq \delta$, then $\max_{S:|S|\leq \rho n} \|\wh{v}_S\|_2 \leq (1 - \frac{1}{2}\delta^2)(1 - \gamma) + \frac{1}{\polylog(n)}$.
        \label{item:far-from-e1-implies-spread}
    \end{enumerate}
\end{lemma}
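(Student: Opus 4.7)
The plan is to exploit the Gaussianity of the last $d-1$ columns of $A$. Decompose $y = y_1 e_1 + \ol{y}$ with $y_1 = \angles{y, e_1}$ and $\ol{y}\perp e_1$, so that
\[
    \wh{v} = Ay = y_1 v + g, \qquad g \coloneqq \ol{A}\,\ol{y},
\]
where $\ol{A} \in \R^{n\times(d-1)}$ denotes the submatrix of columns $2,\dots,d$ (all i.i.d.\ $\calN(0,\tfrac{1}{n}\Id_n)$). Thus $\wh{v}$ is a scaled copy of the planted sparse signal $v$ plus a Gaussian noise $g$, and both claims reduce to controlling $\|g_S\|_2$ for the relevant subsets $S$. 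Throughout I use that $\|\ol{A}\|_2 \leq 1+o(1)$ with high probability, which follows from \Cref{lem:A-norm} (or directly from \Cref{fact:gaussian-matrix}).

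For part~\ref{item:close-to-e1-implies-sparse}, by symmetry assume $\|y-e_1\|_2 \leq \delta$, which gives $y_1 \geq 1-\delta^2/2$ and $\|\ol{y}\|_2 = \sqrt{1-y_1^2} \leq \delta$. Let $S^*\subseteq[n]$ with $|S^*|\leq \rho n$ attain $\|v_{S^*}\|_2 = 1-\gamma$. Then by the triangle inequality,
\[
    \|\wh{v}_{S^*}\|_2 \geq y_1 \|v_{S^*}\|_2 - \|g_{S^*}\|_2 \geq (1-\tfrac{\delta^2}{2})(1-\gamma) - (1+o(1))\delta \geq 1-\gamma - (1+o(1))\delta,
\]
where the $O(\delta^2)$ correction is absorbed into the $(1+o(1))\delta$ term (the relevant regime has $\delta = o(1)$). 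Since $|S^*| \leq \rho n$, this lower bounds $\max_{|S|\leq \rho n}\|\wh{v}_S\|_2$.

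For part~\ref{item:far-from-e1-implies-spread}, the hypotheses $\|y\pm e_1\|_2 \geq \delta$ translate to $|y_1| \leq 1-\delta^2/2$. For any $S$ with $|S|\leq \rho n$,
\[
    \|\wh{v}_S\|_2 \leq |y_1|\,\|v_S\|_2 + \|g_S\|_2 \leq (1-\tfrac{\delta^2}{2})(1-\gamma) + \|\ol{A}_S\|_2 \cdot \|\ol{y}\|_2,
\]
using $\|v_S\|_2 \leq \|v_{S^*}\|_2 = 1-\gamma$ and $\|\ol{y}\|_2\leq 1$. It remains to bound $\|\ol{A}_S\|_2$ uniformly by $1/\polylog(n)$ over all $|S|\leq \rho n$. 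The matrix $\ol{A}_S$ has i.i.d.\ $\calN(0,1/n)$ entries of dimensions $|S|\times(d-1)$; applying \Cref{fact:gaussian-matrix} to $\sqrt{n}\,\ol{A}_S$ with deviation parameter $\tau = \Theta(\sqrt{\rho n\log(1/\rho)})$ yields
\[
    \|\ol{A}_S\|_2 \leq \tfrac{1}{\sqrt{n}}\bigl(\sqrt{|S|} + \sqrt{d-1} + \tau\bigr) \leq O\parens*{\sqrt{\rho\log(1/\rho)} + \sqrt{d/n}}
\]
with per-$S$ failure probability $\exp(-\Omega(\rho n\log(1/\rho)))$. Since $\binom{n}{\rho n} \leq (e/\rho)^{\rho n} = \exp(O(\rho n\log(1/\rho)))$, a union bound with a sufficiently large constant in $\tau$ keeps the total failure at $o(1)$. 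Plugging in $\rho\leq 1/\polylog(n)$ and $d\leq n/\polylog(n)$ yields $\|\ol{A}_S\|_2 \leq 1/\polylog(n)$ (with a possibly smaller polylog exponent), giving the claim.

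The principal technical subtlety is this uniform control in Case~\ref{item:far-from-e1-implies-spread}: a pointwise Gaussian-tail bound on $\|\ol{A}_S\|_2$ is cheap, but to cover all exponentially many subsets $S$ simultaneously one must tune the deviation parameter $\tau$ to scale with $\sqrt{\rho n\log(1/\rho)}$ so the Gaussian tail beats the combinatorial factor $(e/\rho)^{\rho n}$. Everything else reduces to triangle inequalities and the standard spectrum of a Gaussian matrix.
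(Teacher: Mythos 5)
Your proof is essentially correct and follows the same strategy as the paper's: decompose $\wh{v}$ into signal-plus-Gaussian-noise, control the noise spectrally, and union bound over subsets. Part~\ref{item:far-from-e1-implies-spread} is identical in both, and your explicit calibration of the deviation parameter $\tau = \Theta(\sqrt{\rho n \log(1/\rho)})$ to beat $\binom{n}{\rho n}$ is exactly the computation the paper elides when it writes $\sigma_{\max}(\ol{A}_S) \leq \wt{O}(\sqrt{d/n} + \sqrt{\rho})$ ``by a union bound.''

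The one place you deviate is part~\ref{item:close-to-e1-implies-sparse}. The paper keeps $A(y - e_1)$ together and bounds $\|\wh{v} - v\|_2 = \|A(y-e_1)\|_2 \leq \sigma_{\max}(A)\,\|y-e_1\|_2 \leq (1+o(1))\delta$, after which the triangle inequality $\|\wh{v}_{S^*}\|_2 \geq \|v_{S^*}\|_2 - \|v_{S^*}-\wh{v}_{S^*}\|_2$ gives $1-\gamma-(1+o(1))\delta$ with no further loss. You instead split $\wh{v} = y_1 v + g$ and bound the two pieces separately; since $y_1 \geq 1 - \delta^2/2$, this introduces an extra $(1-\gamma)\delta^2/2$ term, so your chain reads $(1-\delta^2/2)(1-\gamma) - (1+o(1))\delta$. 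As you acknowledge, this is absorbed into $(1+o(1))\delta$ only when $\delta = o(1)$; the lemma as stated permits any $\delta \in (0,1)$, and for constant $\delta$ your bound is genuinely weaker than the paper's (e.g.\ $\delta = 1/2$ gives a slack of $5/8$ instead of $1/2$). This doesn't matter for the downstream application (which sets $\delta = 1/\log n$), but the paper's one-line bound avoids the issue entirely and is worth adopting.
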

\begin{proof}
    Let $S^* = \argmax_{S:|S| \leq \rho n} \|v_S\|_2$, and recall that $Ae_1 = v$.

    Suppose $\|y - e_1\|_2 \leq \delta$.
    Then, $\|\wh{v} - v\|_2 = \|A(y - e_1)\|_2 \leq \sigma_{\max}(A) \delta \leq (1+o(1))\delta$ with high probability by \Cref{lem:A-norm}.
    By the triangle inequality,
    $\|\wh{v}_{S^*}\|_2 \geq \|v_{S^*}\|_2 - \|v_{S^*} - \wh{v}_{S^*}\|_2 \geq 1 - \gamma - (1+o(1))\delta$ since $\|v_{S^*} - \wh{v}_{S^*}\|_2 \leq \|v - \wh{v}\|_2 \leq (1+o(1))\delta$.
    The same is true if we flip the sign of $y$ so that $\wh{v}$ is close to $-v$.
    This proves the first statement.

    Suppose $\|y - e_1\|_2 \geq \delta$ and $\|y+e_1\|_2 \geq \delta$.
    Denote $y = (y_1, \ol{y})$, where $\ol{y}\in \R^{d-1}$ and $\norm{\ol{y}}_2^2 = 1 - y_1^2$.
    Then, we have $\delta^2 \leq (1-y_1)^2 + \norm{\ol{y}}_2^2 = 2 - 2y_1$ and $\delta^2 \leq (1+y_1)^2 + \norm{\ol{y}}_2^2 = 2 + 2y_1$, which means that $|y_1| \leq 1 - \delta^2/2$.
    For any $S \subseteq [n]$ with $|S| \leq \rho n$, we have $\sigma_{\max}(\ol{A}_S) \leq \wt{O}(\sqrt{d/n} + \sqrt{\rho}) \leq 1/\polylog(n)$ by \Cref{fact:gaussian-matrix} and a union bound over all $S$ (here we need $\rho, d/n \leq \frac{1}{\polylog(n)}$).
    Then, since $\wh{v}_S = y_1 v_S + \ol{A}_S \ol{y}$, by the triangle inequality, $\|\wh{v}_S\|_2 \leq |y_1| \norm{v_S}_2 + \norm*{\ol{A}_S \ol{y}}_2 \leq (1 - \frac{1}{2}\delta^2) \norm{v_{S^*}}_2 + \frac{1}{\polylog(n)}$, which proves the second statement.
\end{proof}

\subsection{Background on Sum-of-Squares}
\label{sec:sos}

In this section, we give an overview of the Sum-of-Squares (SoS) framework.
We refer the reader to the monograph~\cite{FlemingKP19} and the lecture notes~\cite{BarakS16} for a detailed exposition of the SoS method and its usage in algorithm design. 

\paragraph{Pseudo-distributions.}
Pseudo-distributions are generalizations of probability distributions and are represented by their \emph{pseudo-expectation} operators.
Formally, a degree-$t$ pseudo-distribution $\mu$ over variables $x_1, , \dots, x_n$ corresponds to a linear operator $\pE_{\mu}$ that maps polynomials of degree $\leq t$ to real numbers and satisfies $\pE_\mu[1]=1$ and $\pE_{\mu}[p^2] \geq 0$ for every polynomial $p(x_1,\ldots, x_n)$ of degree $\leq t/2$.

Let $\calA = \{f_1 \geq 0, f_2 \geq 0, \dots, f_m\geq 0\}$ be a system of $m$ polynomial inequality constraints.
We say that \emph{$\mu$ satisfies the system of constraints $\calA$ at degree $t$} if for every sum-of-squares polynomial $h$ and any $T \subseteq [m]$ such that $\deg(h) + \sum_{i\in T} \deg(f_i) \leq t$, $\pE_{\mu} [h \cdot \prod _{i\in T} f_i] \ge 0$.
In particular, if $\calA$ contains an \emph{equality} constraint $f = 0$, then $\pE_{\mu}[f\cdot p] = 0$ for any polynomial $p$ with $\deg(f) + \deg(p) \leq t$.
Specific to our application, we say that a pseudo-distribution $\mu$ satisfies the \emph{unit sphere} constraints if $\pE_{\mu}[ \norm{x}_2^2 \cdot p(x)] =  \pE_{\mu}[p(x)]$ for every $p$ of degree $\leq t-2$. 

Unlike true distributions, there is a $n^{O(t)}$-time weak separation oracle for degree-$t$ pseudo-distributions, which allows us to efficiently optimize over pseudo-distributions that satisfy a given set of polynomial constraints (approximately) via the ellipsoid method --- this is called the Sum-of-Squares algorithm.
Therefore, given a polynomial $p$ (with the $\ell_1$-norm of the coefficients being $\norm{p}_1$) over $x_1,\dots,x_n$, a degree-$t$ pseudo-distribution satisfying the unit sphere constraint that maximizes $p$ within an additive $\varepsilon \norm{p}_1$ error can be found in time $n^{O(t)} \polylog(n/\varepsilon)$.

\paragraph{Sum-of-Squares proofs.}
Let $f_1,f_2,\dots,f_m$ and $g$ be multivariate polynomials in $x$.
A \emph{sum-of-squares} proof that the constraints $\calA = \{f_1 \geq 0,\dots, f_m \geq 0\}$ imply $g \geq 0$ consists of sum-of-squares polynomials $(p_S)_{S\subseteq[m]}$ such that $g(x) = \sum_{S\subseteq[m]} p_S(x) \prod_{i\in S} f_i(x)$.
The \emph{degree} of such an SoS proof equals the maximum of the degree of $p_S \prod_{i \in S} f_i$ over all $S$ appearing in the sum above.
We write
\begin{equation*}
    \calA \sststile{t}{x} \{g(x) \geq 0\}
\end{equation*}
where $t$ is the degree of the SoS proof.

The following fact is the crucial connection between SoS proofs and pseudo-distributions.

\begin{fact}
Suppose $\calA \sststile{t}{x} \{g(x) \geq 0\}$ for some polynomial constraints $\calA$ and a polynomial $g$.
Let $\mu$ be any pseudo-distribution of degree $\geq t$ satisfying $\calA$. Then, $\pE_\mu[g] \geq 0$.
\end{fact}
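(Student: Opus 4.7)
The plan is to unpack the definition of an SoS proof, apply the pseudo-expectation operator term by term using its linearity, and then invoke the defining nonnegativity property of $\mu$ satisfying $\calA$ at degree $t$. Since the statement is essentially the soundness direction of the SoS/pseudo-distribution duality, the proof should be short and bookkeeping-heavy rather than involve any genuinely new ideas.

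First I would write out the hypothesis $\calA \sststile{t}{x} \{g(x) \geq 0\}$ explicitly: by definition of an SoS proof there exist sum-of-squares polynomials $(p_S)_{S \subseteq [m]}$ such that
\begin{equation*}
    g(x) \;=\; \sum_{S \subseteq [m]} p_S(x) \prod_{i \in S} f_i(x),
\end{equation*}
with $\deg\bigl(p_S \prod_{i \in S} f_i\bigr) \leq t$ for every $S$ appearing in the sum. Each $p_S$ being a sum of squares means $p_S(x) = \sum_j h_{S,j}(x)^2$ for some polynomials $h_{S,j}$, and for the degree bound to hold, each such $h_{S,j}$ must satisfy $2\deg(h_{S,j}) + \sum_{i \in S} \deg(f_i) \leq t$.

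Next I would apply $\pE_\mu$ to both sides of the identity. Since $\pE_\mu$ is a linear operator on polynomials of degree $\leq t$ (and every term on the right has degree $\leq t$ by construction), linearity yields
\begin{equation*}
    \pE_\mu[g] \;=\; \sum_{S \subseteq [m]} \sum_j \pE_\mu\!\left[ h_{S,j}^2 \cdot \prod_{i \in S} f_i \right].
\end{equation*}
Each inner term is of the form $\pE_\mu[h^2 \cdot \prod_{i \in T} f_i]$ with $\deg(h^2) + \sum_{i \in T} \deg(f_i) \leq t$, which is exactly the situation covered by the hypothesis that $\mu$ satisfies $\calA$ at degree $t$. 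Hence every inner term is nonnegative, and summing nonnegative quantities gives $\pE_\mu[g] \geq 0$, as desired.

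There is essentially no obstacle here; the only subtlety worth being careful about is making sure the degree accounting in the SoS decomposition matches the degree at which $\mu$ is guaranteed to act correctly on products of the $f_i$'s with sum-of-squares multipliers. As long as I cite the "degree of an SoS proof" definition from the paragraph above the fact and the defining property of "$\mu$ satisfies $\calA$ at degree $t$," the argument closes cleanly in a few lines.
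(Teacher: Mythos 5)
Your proof is correct and is exactly the standard soundness argument for SoS proofs. The paper states this as a bare \emph{Fact} without a proof (deferring to the cited monograph and lecture notes), so there is no in-paper proof to compare against; your argument — unpack the SoS certificate, split each sum-of-squares multiplier into its constituent squares, apply $\pE_\mu$ by linearity, and invoke the defining nonnegativity condition for ``$\mu$ satisfies $\calA$ at degree $t$'' term by term — is precisely what the paper is implicitly relying on, with the degree bookkeeping handled correctly.
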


In other words, for polynomials $f, g$, in order to prove that $\pE_{\mu}[f] \leq \pE_{\mu}[g]$ for any degree-$t$ pseudo-distribution $\mu$ satisfying $\calA$, it suffices to show an SoS proof that $\calA \sststile{t}{x} \{f(x) \leq g(x)\}$.



We will need the following SoS versions of Cauchy-Schwarz and \Holder's inequalities.

\begin{fact}[Fact 3.9 and 3.11 in \cite{ODonnelW13}] \label{fact:holder-deg4}
    For any $\delta > 0$,
    \begin{itemize}
        \item $\sststile{2}{x,y} \braces*{ xy \leq \frac{\delta}{2}x^2 + \frac{1}{2\delta} y^2 }$.

        \item $\sststile{4}{x,y} \braces*{ x^3 y \leq \frac{3\delta}{4}x^4 + \frac{1}{4\delta^3} y^4 }$.
    \end{itemize}
\end{fact}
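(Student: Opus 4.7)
The plan is to establish both parts of the fact by exhibiting explicit sum-of-squares decompositions of the differences, since an identity $p(x,y) = \sum_{i} q_i(x,y)^2$ with $\max_i \deg(q_i) \leq t/2$ immediately yields $\sststile{t}{x,y} \{p(x,y) \geq 0\}$ by the definition of an SoS proof. For part~(1), the single-square identity
\begin{equation*}
\frac{\delta}{2}x^2 + \frac{1}{2\delta}y^2 - xy = \frac{1}{2}\parens*{\sqrt{\delta}\,x - \frac{1}{\sqrt{\delta}}\,y}^2
\end{equation*}
is manifestly a degree-$2$ SoS decomposition, and the verification is a one-line expansion of the standard AM-GM inequality.

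For part~(2), my strategy is to first handle the scale-invariant case $\delta = 1$ and then recover the general case by a homogeneous rescaling. In the scale-invariant case I need $3x^4 - 4x^3 y + y^4 \geq 0$. Since the cross monomial $-4x^3 y$ must come from a degree-$4$ square, and $x^2(x-y)^2$ is essentially the only candidate producing $-2x^3 y$ without introducing unwanted $xy^3$ or $y^3$ terms, I would try the two-term ansatz $\alpha\, x^2(x-y)^2 + \beta\,(x^2-y^2)^2$; matching coefficients forces $\alpha = 2$, $\beta = 1$, and a direct expansion confirms the identity
\begin{equation*}
3x^4 - 4x^3 y + y^4 = 2\,x^2(x-y)^2 + (x^2 - y^2)^2 \mper
\end{equation*}
To obtain the claim for arbitrary $\delta$, I would apply the substitution $x \mapsto \delta^{1/4}x$, $y \mapsto \delta^{-3/4}y$, which preserves the monomial $x^3 y$ and transforms the coefficients of $x^4$ and $y^4$ to exactly $\tfrac{3\delta}{4}$ and $\tfrac{1}{4\delta^3}$. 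The resulting degree-$4$ SoS decomposition is
\begin{equation*}
\frac{3\delta}{4}x^4 + \frac{1}{4\delta^3}y^4 - x^3 y = \frac{\sqrt{\delta}}{2}\, x^2\parens*{\delta^{1/4}x - \delta^{-3/4}y}^2 + \frac{1}{4}\parens*{\sqrt{\delta}\,x^2 - \delta^{-3/2}y^2}^2 \mper
\end{equation*}

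The only mildly delicate step is guessing the right ansatz in the $\delta = 1$ case, but as explained above the shape of the cross monomial essentially forces the choice. An alternative route, which avoids guesswork, is to chain two applications of part~(1): first multiply the part~(1) bound by the SoS polynomial $x^2$ to obtain $x^3 y \leq \tfrac{\delta'}{2}x^4 + \tfrac{1}{2\delta'}x^2 y^2$, then apply part~(1) again to bound $x^2 y^2 \leq \tfrac{\delta''}{2}x^4 + \tfrac{1}{2\delta''}y^4$. Selecting $\delta' = \delta$ and $\delta'' = \delta^2$ yields precisely the coefficients $\tfrac{3\delta}{4}$ and $\tfrac{1}{4\delta^3}$, and since multiplying SoS proofs by SoS polynomials and composing them preserves SoS-ness at the correct degree, this gives an independent derivation of the desired degree-$4$ SoS proof.
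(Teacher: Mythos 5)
Your proposal is correct. Note that the paper itself does not prove this fact --- it simply cites it as Fact~3.9 and 3.11 of \cite{ODonnelW13} --- so there is no in-paper argument to compare against; you have supplied a self-contained proof. Both of your decompositions check out: for part~(1), $\tfrac{\delta}{2}x^2 + \tfrac{1}{2\delta}y^2 - xy = \tfrac{1}{2}\bigl(\sqrt{\delta}\,x - \tfrac{1}{\sqrt{\delta}}\,y\bigr)^2$ is a single square of degree $1$, giving a degree-$2$ SoS certificate. For part~(2), the identity $3x^4 - 4x^3y + y^4 = 2x^2(x-y)^2 + (x^2-y^2)^2$ holds (the $x^2y^2$ terms cancel exactly: $+2x^2y^2$ from the first square, $-2x^2y^2$ from the second), and the substitution $x \mapsto \delta^{1/4}x$, $y \mapsto \delta^{-3/4}y$ is homogeneous of the right weights to leave the cross term $x^3y$ fixed while rescaling $x^4 \mapsto \delta x^4$ and $y^4 \mapsto \delta^{-3}y^4$; dividing by $4$ gives precisely the stated inequality, with the positive prefactors $\tfrac{\sqrt{\delta}}{2}$ and $\tfrac{1}{4}$ absorbable into the squares since $\delta > 0$. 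Your alternative derivation by chaining part~(1) twice is also valid --- the coefficient bookkeeping $\tfrac{\delta'}{2} + \tfrac{\delta''}{4\delta'} = \tfrac{3\delta}{4}$ and $\tfrac{1}{4\delta'\delta''} = \tfrac{1}{4\delta^3}$ is satisfied by $\delta'=\delta$, $\delta''=\delta^2$, and multiplying a degree-$2$ SoS inequality by the SoS polynomial $x^2$, then adding a nonnegative multiple of another degree-$4$ SoS inequality, stays within degree-$4$ SoS --- and is arguably the more instructive route since it avoids any ansatz-guessing and makes the degree bookkeeping transparent.
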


Using \Cref{fact:holder-deg4}, we can prove the following lemma, which is basically the SoS version of Jensen's inequality applied to the $x^4$ function.
\begin{lemma} \label{lem:jensen}
    For any $\delta > 0$,
    \begin{equation*}
        \sststile{4}{x,y} \braces*{ (x+y)^4 \leq (1+\delta)^3 \parens*{x^4 + \frac{y^4}{\delta^3}} } \mper
    \end{equation*}
\end{lemma}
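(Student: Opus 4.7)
The plan is to expand $(x+y)^4$ via the binomial theorem and bound each cross term by a degree-4 SoS inequality, tuning parameters so that the coefficients of $x^4$ and $y^4$ collapse to $(1+\delta)^3$ and $(1+\delta)^3/\delta^3$ respectively.

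First I would write
\[
 (x+y)^4 = x^4 + 4x^3 y + 6 x^2 y^2 + 4 x y^3 + y^4,
\]
and then bound the three cross terms separately:
(i) For $4x^3 y$, apply the second inequality in \Cref{fact:holder-deg4} with parameter $\delta$ to obtain
$4 x^3 y \leq 3\delta\, x^4 + \tfrac{1}{\delta^3} y^4$.
(ii) For $6 x^2 y^2$, use the square completion $3(\delta x^2 - y^2/\delta)^2 \geq 0$, which rearranges to
$6 x^2 y^2 \leq 3\delta^2 x^4 + \tfrac{3}{\delta^2} y^4$.
(iii) For $4 xy^3$, apply \Cref{fact:holder-deg4} again but with the roles of $x,y$ swapped and parameter $1/\delta$, giving
$4xy^3 \leq \delta^3 x^4 + \tfrac{3}{\delta} y^4$.

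Summing these three bounds together with the two extremal terms $x^4$ and $y^4$ yields a coefficient of $1 + 3\delta + 3\delta^2 + \delta^3 = (1+\delta)^3$ in front of $x^4$, and a coefficient of $1 + \tfrac{3}{\delta} + \tfrac{3}{\delta^2} + \tfrac{1}{\delta^3} = (1+\delta)^3/\delta^3$ in front of $y^4$, which is exactly the desired bound. Since each individual inequality above is a degree-4 SoS identity (each ``slack'' is manifestly a sum of squares in $x,y$), the combined proof is also a degree-4 SoS proof.

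There is no real obstacle here beyond the parameter choice: the only slightly nontrivial step is recognizing that the three parameters in the applications of \Cref{fact:holder-deg4} and the square completion should be $\delta$, $\delta^2$ (in the form $(\delta x^2 - y^2/\delta)^2$), and $1/\delta$ respectively, since these are precisely the values that make the binomial expansion $(1+\delta)^3$ appear on the $x^4$ side and, by duality, $(1+\delta)^3/\delta^3$ appear on the $y^4$ side.
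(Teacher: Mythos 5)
Your proposal is correct and matches the paper's proof almost exactly: expand $(x+y)^4$ and bound the three cross terms via \Cref{fact:holder-deg4} with parameters $\delta$, $\delta^2$, and $1/\delta$, which is precisely what the paper does. The only cosmetic difference is that for the $6x^2y^2$ term you write out the underlying square $3(\delta x^2 - y^2/\delta)^2 \geq 0$ explicitly rather than invoking the first bullet of \Cref{fact:holder-deg4}, but these are the same inequality.
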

\begin{proof}
    We expand $(x+y)^4 = x^4 + 4x^3 y + 6x^2 y^2 + 4 x y^3 + y^4$ and apply \Cref{fact:holder-deg4} to the middle 3 terms:
    $(x+y)^4 \leq x^4 + (3\delta x^4 + \frac{1}{\delta^3} y^4) + (3\delta^2 x^4 + \frac{3}{\delta^2} y^4) + (\delta^3 x^4 + \frac{3}{\delta} y^4) + y^4 = (1+\delta)^3 x^4 + (1 + \frac{1}{\delta})^3 y^4$.
\end{proof}

\paragraph{Rounding.}
In the following, we state an important technique for rounding pseudo-distributions from \cite{BarakKS15}.

\begin{lemma}[Theorem 5.1 of \cite{BarakKS15}] \label{lem:rounding-powers}
    For every even $t\in \N$ and $\eps \in (0,1)$, there exists a randomized algorithm with running time $n^{O(t)}$ and success probability $2^{-t/\poly(\eps)}$ for the following problem:
    Fix an unknown unit vector $q\in \R^n$. Given a degree-$t$ pseudo-distribution $\mu$ over $x\in \R^n$ satisfying the constraint $\|x\|_2^2 = 1$ such that $\pE_{\mu}[\angles{q, x}^t] \geq e^{-\eps t}$, output a unit vector $\wh{x} \in \R^n$ with $\angles{q, \wh{x}} \geq 1 - O(\eps)$.
\end{lemma}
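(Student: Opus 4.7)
The plan is to adapt the Gaussian rounding scheme of Barak--Kelner--Steurer. The algorithm samples $g \sim \calN(0, I_n)$ and outputs $\hat{x} := y/\|y\|_2$, where
\[
    y := \pE_\mu\bigl[\angles{g, x}^{t-1}\, x\bigr] \in \R^n.
\]
The quantity $y$ is a linear combination of degree-$t$ pseudo-moments and is computable in $n^{O(t)}$ time. The intuition is that if $\mu$ were an actual distribution on the sphere, then the hypothesis $\pE_\mu[\angles{q,x}^t] \geq e^{-\eps t}$ combined with $\angles{q,x}^t \in [0,1]$ would force $\Omega(e^{-\eps t})$ of its mass to lie on vectors $x$ with $\angles{q,x} \geq 1 - O(\eps)$; direct sampling would then recover such an $x$ with probability $2^{-t/\poly(\eps)}$. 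The Gaussian rounding is the pseudo-distributional analogue of this: the factor $\angles{g,x}^{t-1}$ acts as an importance reweighting that amplifies the ``close-to-$q$'' mode whenever $g$ happens to be well aligned with $q$.

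To analyze this, decompose $g = \tau q + g_\perp$ with $\tau := \angles{g, q} \sim \calN(0,1)$ and $g_\perp \sim \calN(0, I_n - qq^\top)$ independent, and condition on the favorable event $\mathcal{E} := \{\tau \geq T\}$ for $T = \Theta(\sqrt{t}/\eps^{c})$ with $c$ a constant to be tuned. Since $\Pr[\mathcal{E}] = \exp(-\Omega(T^2)) = 2^{-t/\poly(\eps)}$, this conditioning supplies the claimed success probability. Expanding
\[
    \angles{g, x}^{t-1} = \sum_{k=0}^{t-1}\binom{t-1}{k}\,\tau^{t-1-k}\,\angles{q,x}^{t-1-k}\,\angles{g_\perp, x}^{k},
\]
I split $y = y_{\mathrm{sig}} + y_{\mathrm{noise}}$, with $y_{\mathrm{sig}} := \tau^{t-1}\pE_\mu[\angles{q,x}^{t-1} x]$ the $k=0$ term. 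For each noise term ($k \geq 1$), SoS Cauchy--Schwarz gives $\|\pE_\mu[\angles{q,x}^{t-1-k}\angles{g_\perp,x}^k\,x]\|_2^2 \leq \pE_\mu[\angles{g_\perp, x}^{2k}]$, and the $g_\perp$-expectation of the right-hand side is bounded by $(2k-1)!! \leq O(t)^k$; a Markov bound over $g_\perp$ (with a $\poly(t)$ loss) then shows that, on a high-probability subevent of $\mathcal{E}$, $\|y_{\mathrm{noise}}\|_2$ is dwarfed by $\|y_{\mathrm{sig}}\|_2$ as long as $\tau \gtrsim \sqrt{t}/\poly(\eps)$.

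The hardest step is controlling the signal's own $q^\perp$-to-$q$ ratio, which directly determines $\angles{\hat{x}, q}$. The $q$-component equals $\tau^{t-1}\pE_\mu[\angles{q,x}^t] \geq \tau^{t-1}e^{-\eps t}$, while SoS Cauchy--Schwarz combined with the SoS identity $\angles{q,x}^{2(t-1)} \leq \angles{q,x}^t$ (from $\|x\|_2^2 = 1$ and $t \geq 2$) yields
\[
    \|P_{q^\perp}\, \pE_\mu[\angles{q,x}^{t-1} x]\|_2^2 \;\leq\; \pE_\mu[\angles{q,x}^t]\,\bigl(1 - \pE_\mu[\angles{q,x}^2]\bigr).
\]
A naive use of this inequality only gives $\angles{\hat{x}, q}^2 \geq \pE_\mu[\angles{q,x}^t] \geq e^{-\eps t}$, far weaker than the desired $1 - O(\eps)$. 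Closing this gap is the main obstacle, and requires the BKS15 concentration-of-pseudo-distributions argument: conditional on $\mathcal{E}$, the Gaussian factor $\tau^{t-1}$ reweights the pseudo-distribution so that its ``effective second moment'' of $\angles{q,x}$ is $1 - O(\eps)$, via an iterated SoS H\"older / power-mean estimate that propagates the $t$-th moment hypothesis downward. Once this key lemma is established, $\angles{\hat x, q} = \angles{y, q}/\|y\|_2 \geq 1 - O(\eps)$ follows from the signal/noise decomposition above, and the overall success probability is $\Pr[\mathcal{E}] = 2^{-t/\poly(\eps)}$, as claimed.
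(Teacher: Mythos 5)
The paper does not prove this statement; it is quoted verbatim as Theorem~5.1 of \cite{BarakKS15} and used as a black box, so there is no in-paper argument to compare against. Your proposal therefore has to be judged as an independent reconstruction of the BKS15 proof.

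You correctly identify the BKS15 mechanism --- Gaussian reweighting by $\angles{g,x}^{t-1}$ (or $\angles{g,x}^{t-2}xx^\top$ in the matrix form BKS15 actually use) combined with conditioning on $\angles{g,q}$ being large --- and you correctly observe that the easy Cauchy--Schwarz bound only gives $\angles{\hat x, q}^2 \gtrsim e^{-\eps t}$, which is exponentially weaker than the required $1-O(\eps)$. But then you stop exactly at the step that carries the entire theorem: you write that closing this gap ``requires the BKS15 concentration-of-pseudo-distributions argument'' and gesture at ``an iterated SoS H\"older / power-mean estimate,'' without actually carrying it out. That step --- showing that conditioning on $\tau \geq T$ makes the reweighted pseudo-second-moment of $\angles{q,x}$ be $1-O(\eps)$, via the reverse-H\"older / moment-interpolation estimates that propagate $\pE_\mu[\angles{q,x}^t] \geq e^{-\eps t}$ to lower moments after reweighting --- is the heart of the proof, and omitting it leaves the proposal as a restatement of the difficulty rather than a resolution of it.

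A secondary but real issue: several of your intermediate SoS inequalities implicitly use moments of degree roughly $2(t-1)$ (e.g.\ the Cauchy--Schwarz bound $\|\pE_\mu[\angles{q,x}^{t-1-k}\angles{g_\perp,x}^k\,x]\|_2^2 \leq \pE_\mu[\angles{g_\perp,x}^{2k}]$, and the inequality $\angles{q,x}^{2(t-1)} \leq \angles{q,x}^t$), but the lemma only assumes a degree-$t$ pseudo-distribution. BKS15 handle this by working at an appropriately chosen (and slightly higher) SoS degree relative to the target exponent; as written, your proposal does not account for the mismatch. So even the parts you do fill in need the degree bookkeeping fixed before they go through.
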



\subsection{Algorithm for planted sparse vector}
\label{sec:algo-planted}

\begin{mdframed}
    \begin{algorithm}[Recover hidden sparse vector]
    \label{alg:planted-sparse-vector}
    \mbox{}
      \begin{description}
      \item[Input:] A matrix $\wt{A} \in \R^{n\times d}$ drawn from \Cref{model:planted-dist}, parameters $t\in \N$ and $\rho \in (0,1)$.

      \item[Output:] A unit vector $\wh{v} \in \colspan(\wt{A})$.

      \item[Operation:] \mbox{}
        \begin{enumerate}
            \item Solve the degree-$4t$ SoS relaxation of the following program over $x\in \R^{d}$;
            \begin{equation*}
            \begin{aligned}
                \max\quad & \wt{P}_t(x) \coloneqq t! \sum_{S\subseteq [n]: |S|=t} \prod_{i\in S} \angles*{\wt{a}_i,x}^4 \\
                \text{s.t.} \quad  & \|x\|_2^2 = 1
            \end{aligned}
            \end{equation*}
            where $\wt{a}_1,\dots, \wt{a}_n$ are the rows in $\wt{A}$.

            \item Repeat the algorithm of \Cref{lem:rounding-powers} $N = 2^{\wt{O}(t)}$ times and obtain unit vectors $\wh{x}^{(1)}, \wh{x}^{(2)},\dots,\wh{x}^{(N)}$.

            \item Let $\wh{v}^{(i)} = \wt{A} \wh{x}^{(i)}$, and let $j = \argmax_{i\in[N]} \max_{S \subseteq [n]: |S|\leq \rho n} \norm{\wh{v}^{(i)}_S}_2$, i.e., the vector whose top $\rho n$ entries have the largest norm.
            Output $\wh{v}^{(j)} / \|\wh{v}^{(j)}\|_2$.
        \end{enumerate}
      \end{description}
    \end{algorithm}
\end{mdframed}

The main ingredient in the analysis is \Cref{lem:sum-of-distinct-products}.
Below, we state it as a degree-$4t$ SoS proof.

\begin{lemma}[SoS version of \Cref{lem:sum-of-distinct-products}] \label{lem:sum-of-distinct-products-sos}
    Let $t \leq d \leq n$ be integers such that $t \log^6 n \leq n + \frac{d^2}{t} \log^2 n$.
    Let $A \sim \calN(0, \frac{1}{n})^{n\times d}$ with rows $a_1,\dots,a_n \in \R^d$.
    Then, with probability $1 - \frac{1}{\poly(n)}$ over $a_1,\dots,a_n$,
    \begin{equation*}
        \sststile{4t}{y}
        \braces*{
        P_t(y) \coloneqq t!\sum_{S \subseteq [n]: |S|=t} \prod_{i\in S} \angles*{a_i, y}^4 \leq O\parens*{\frac{1}{n} + \frac{d^2}{n^2 t}\log^2 n}^t \norm{y}_2^{4t} } \mper
    \end{equation*}
\end{lemma}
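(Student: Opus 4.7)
The plan is to observe that the proof of \Cref{lem:sum-of-distinct-products} already yields a degree-$4t$ sum-of-squares certificate; the only additional bookkeeping is the rescaling from variance-$1$ to variance-$1/n$ Gaussian entries, together with the inspection that the matrix manipulations of \Cref{sec:symmetrization-lowers-trace} are \emph{polynomial identities} in $y$ (not merely numerical equalities valid on the unit sphere).

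First, I would rescale: writing $a_i = \frac{1}{\sqrt n}\, b_i$ with $b_i \sim \calN(0,1)^d$, one has $\angles{a_i, y}^4 = \frac{1}{n^2}\angles{b_i, y}^4$, so after dividing by $n^{2t}$ it suffices to produce a degree-$4t$ SoS proof that $t!\sum_{|S|=t}\prod_{i \in S}\angles{b_i, y}^4 \leq O\!\parens*{n + \frac{d^2}{t}\log^2 n}^t \|y\|_2^{4t}$. Let $M$ as in \Cref{eq:M-matrix} and $\wt M = \E_{\Pi,\Pi'\sim\Permutation{2t}}[\Pi M \Pi']$ be constructed from the $b_i$'s.

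I would then record three polynomial identities in $y$: (i) $t!\sum_{|S|=t}\prod_{i\in S}\angles{b_i, y}^4 = (y^{\otimes 2t})^\top M (y^{\otimes 2t})$, by the same expansion as in \Cref{eq:M-quad-form}; (ii) $(y^{\otimes 2t})^\top M (y^{\otimes 2t}) = (y^{\otimes 2t})^\top \wt M (y^{\otimes 2t})$, because $\Pi y^{\otimes 2t} = y^{\otimes 2t}$ holds entrywise as a polynomial in $y$ for every $\Pi \in \Permutation{2t}$; and (iii) $\|y\|_2^{4t} = (y^{\otimes 2t})^\top (y^{\otimes 2t})$, since both sides expand to $\sum_{j_1,\dots,j_{2t}} y_{j_1}^2\cdots y_{j_{2t}}^2$. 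Crucially, these are identities between polynomials, so no unit-sphere constraint is needed for the SoS proof.

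By \Cref{lem:trace-bound} (applied with $\mu_4 = 3$, $\ell = \log n$) and Markov's inequality, with probability $1 - 1/\poly(n)$ we have $\|\wt M\|_2 \leq \lambda \coloneqq O\!\parens*{n + \frac{d^2}{t}\log^2 n}^t$. Then $\lambda I - \wt M \succeq 0$, so factoring $\lambda I - \wt M = U^\top U$ yields
\begin{equation*}
    \lambda \|y\|_2^{4t} - (y^{\otimes 2t})^\top \wt M (y^{\otimes 2t}) = (y^{\otimes 2t})^\top (\lambda I - \wt M)(y^{\otimes 2t}) = \|U y^{\otimes 2t}\|_2^2,
\end{equation*}
which is manifestly a sum of squares of degree-$2t$ polynomials in $y$ and hence a degree-$4t$ SoS proof. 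Rescaling by $n^{-2t}$ gives the stated bound. There is no real technical obstacle: all the substantive work is already carried out by \Cref{lem:trace-bound}, and the conversion from an operator-norm bound on $\wt M$ to an SoS certificate is just the standard PSD-to-Gram lift applied to the lifted monomial vector $y^{\otimes 2t}$.
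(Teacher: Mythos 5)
Your proposal is correct and takes essentially the same approach the paper intends: the paper states \Cref{lem:sum-of-distinct-products-sos} without a separate proof, noting only the variance rescaling, and relies on the reader to observe that the spectral-norm bound on $\wt M$ from \Cref{lem:trace-bound} is automatically a degree-$4t$ SoS certificate via the PSD-to-Gram lift on $y^{\otimes 2t}$, together with the fact that the shift and symmetrization identities and $\|y\|_2^{4t} = \|y^{\otimes 2t}\|_2^2$ are genuine polynomial identities in $y$. You spell out exactly these points, so nothing substantive is missing (the one small detail left implicit is the conditioning on $|a_{ij}| \le \sqrt{C\log n}$ needed to invoke \Cref{lem:trace-bound}, which the paper handles in the proof of \Cref{lem:sum-of-distinct-products} and which carries over unchanged here).
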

Note the scaling of $1/n$ here because we assume the entries to be $\calN(0,1/n)$ as opposed to $\calN(0,1)$ in \Cref{lem:sum-of-distinct-products}.

Using \Cref{lem:sum-of-distinct-products-sos}, we prove our main lemma.

\begin{lemma} \label{lem:large-y1}
    Assume the same setting as \Cref{thm:planted-sparse-vector}.
    Let $r_1 \in \R^d$ be the first row of the unknown rotation matrix $R$.
    Further, let $\mu$ be the degree-$4t$ pseudo-distribution over $x\in \R^n$ with the unit sphere constraint that maximizes $\wt{P}_t(x)$.
    Then, $\pE_{\mu}[\angles{r_1, x}^{2t}] \geq e^{- O(t/\log n)}$.
\end{lemma}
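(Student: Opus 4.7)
The plan is to couple the planted lower bound on $\pE_\mu[\wt P_t(x)]$ with a matching SoS upper bound in which $\pE_\mu[\alpha^{4t}]$ (where $\alpha := \angles{r_1,x}$) appears as the leading coefficient, and then deduce the claim about $\pE_\mu[\alpha^{2t}]$ via the SoS inequality $\alpha^{2t} \geq \alpha^{4t}$ (which is valid since $1 - \alpha^2 = \|R_{-1}x\|_2^2$ is SoS under the unit sphere constraint). Since $r_1\in\R^d$ is a unit vector and $\wt{A}r_1 = ARr_1 = Ae_1 = v$, plugging $x = r_1$ into the SoS program gives $\wt{P}_t(r_1) = Q_t(v)$, so by feasibility $\pE_\mu[\wt{P}_t(x)] \geq Q_t(v) \geq (1 - 1/\log^C n)^t \|v\|_4^{4t}$ by hypothesis on $v$. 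Once a matching upper bound of the form $\pE_\mu[\wt{P}_t(x)] \leq (1+o(1))^t Q_t(v)\,\pE_\mu[\alpha^{4t}] + \tfrac{1}{2}Q_t(v)$ is established, rearranging gives $\pE_\mu[\alpha^{4t}] \geq \tfrac{1}{2}(1+\delta)^{-3t} \geq e^{-O(t/\log n)}$.

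To build the upper bound I decompose $\wt{a}_i\cdot x = v_i\alpha + g_i\cdot\wt{x}$, where $\wt{x} = R_{-1}x \in \R^{d-1}$ is the projection of $x$ onto $r_1^\perp$ and $g_i$ is the Gaussian part of the $i$-th row of $A$; note that $\|\wt{x}\|_2^2 = 1-\alpha^2$ under the unit sphere constraint. Applying \Cref{lem:jensen} with parameter $\delta = 1/\log n$ to each factor $(v_i\alpha + g_i\wt{x})^4$, taking the product over $i \in S$, and summing over $|S|=t$, the binomial expansion yields the SoS inequality
\begin{equation*}
\wt{P}_t(x) \leq (1+\delta)^{3t}\sum_{k=0}^{t}\binom{t}{k}\,\delta^{-3(t-k)}\,\alpha^{4k}\,Q_k(v)\,Q_{t-k}(g\wt{x}).
\end{equation*}
The $k = t$ summand is the leading $(1+\delta)^{3t}Q_t(v)\,\alpha^{4t}$. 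For the remaining $k < t$ summands, bounding the Gaussian factor via \Cref{lem:sum-of-distinct-products-sos} (applied to the $n\times(d-1)$ Gaussian matrix $G$, as a polynomial inequality substituted with $y = \wt{x}$) gives $Q_{t-k}(g\wt{x}) \leq \beta_{t-k}^{t-k}(1-\alpha^2)^{2(t-k)}$ with $\beta_j = O(1/n + d^2/(n^2 j)\log^2 n)$.

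Taking pseudo-expectations, the cross moment $\pE_\mu[\alpha^{4k}(1-\alpha^2)^{2(t-k)}]$ is bounded by its pointwise maximum $M_{t-k} := ((t-k)/t)^{2(t-k)}(k/t)^{2k}$, which admits a degree-$4t$ SoS proof under the unit sphere constraint by a standard Positivstellensatz argument on the interval $\alpha^2 \in [0,1]$ (using $\alpha^2 \geq 0$ and $1-\alpha^2 = \|\wt{x}\|_2^2 \geq 0$ as SoS). Combined with $Q_k(v) \leq \|v\|_4^{4k}$, the lower bound $\|v\|_4^4 \geq 1/(16\rho n)$ from $(\rho,1/2)$-compressibility (\Cref{lem:compressible-4-norm}), and the parameter assumptions $t \geq \rho d^2\log^C n/n$ and $\rho \leq 1/\log^C n$, the ratio $\delta^{-3}\beta_j/\|v\|_4^4$ is at most $(1+t/j)/\log^{C-O(1)}n$; together with the Stirling identity $\binom{t}{j}M_j = (j/t)^j((t-j)/t)^{t-j}$, for $C$ sufficiently large the entire error sum over $k<t$ is bounded by $\tfrac{1}{2}Q_t(v)$, closing the argument. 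The main obstacle is this final combinatorial bound: the terms must be controlled uniformly across $j \in [1,t]$, with small-$j$ terms dominated by the Stirling-type $M_j$ factor and terms with $j$ close to $t$ dominated by the shrinking $(\beta_j/\|v\|_4^4)^j$ ratio, which requires the constant $C$ to be chosen large enough (relative to the absorbed factors $(1+\delta)^{3t}(1-1/\log^C n)^{-t} = e^{O(t/\log n)}$) so that all contributions from the error sum fit inside the $Q_t(v)/2$ budget.
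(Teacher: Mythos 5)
Your decomposition, your use of \Cref{lem:jensen} and \Cref{lem:sum-of-distinct-products-sos}, and the overall plan (lower bound $\pE_\mu[\wt{P}_t(x)]$ by $Q_t(v)$, then upper bound it in a way that isolates the pure $\angles{r_1,x}$ moment) are all close to the paper's own proof. However, there is a genuine gap in the final absorption step that makes the argument fail for $t$ beyond roughly $\mathrm{polylog}(n)$, which is far short of the theorem's range $t \leq d/\log^C n$.

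The problem is the Jensen factor $(1+\delta)^{3t}$. Applying \Cref{lem:jensen} factor by factor and taking products over $|S|=t$ gives
\begin{equation*}
    \wt{P}_t(x) \;\leq\; (1+\delta)^{3t}\sum_{k=0}^{t}\binom{t}{k}\,\delta^{-3(t-k)}\,\angles{r_1,x}^{4k}\,Q_k(v)\,Q_{t-k}(w),
\end{equation*}
where the factor $(1+\delta)^{3t}$ multiplies \emph{every} summand, including the $k<t$ error terms. Your target inequality ``$\pE_\mu[\wt{P}_t(x)] \leq (1+o(1))^t Q_t(v)\,\pE_\mu[\angles{r_1,x}^{4t}] + \tfrac12 Q_t(v)$'' silently drops the Jensen factor from the error. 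After bounding the error terms exactly as you describe (via \Cref{lem:sum-of-distinct-products-sos}, the pointwise maximum $M_j$, and $Q_k(v)\leq \|v\|_4^{4k}$), the base error sum divided by $\|v\|_4^{4t}$ is dominated by the $j=1$ term and is of order $1/\log^{C-O(1)}n$; one cannot push it below this, since the coefficient of the $j=1$ term is genuinely $\Theta\bigl(\rho\,d^2\log^{O(1)}n / n\bigr) \cdot \|v\|_4^{-4} \approx t/\log^{C-O(1)}n$ after multiplying by $\binom{t}{1}M_1 \approx 1/(e^2 t)$. So the full error term is $(1+\delta)^{3t}\cdot \Omega\bigl(Q_t(v)/\log^{C-O(1)}n\bigr)$. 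For this to be at most $\tfrac12 Q_t(v)$ you need $(1+\delta)^{3t}\leq \tfrac12\log^{C-O(1)}n$, i.e.\ $\delta t \leq O(\log\log n)$; with your $\delta = 1/\log n$ this forces $t = O(\log n \log\log n)$. Shrinking $\delta$ further to make $\delta t$ small blows up $\delta^{-3}$ and makes the base error sum too large, so there is no choice of $\delta$ that saves the argument across the full range of $t$. In short, the inequality you want to ``rearrange'' cannot be established.

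The paper circumvents this precisely because it never tries to show the $s=0$ summand is large. After taking pseudo-expectations, it moves $(1+\delta)^{3t}$ and $(1-\eta)^{-t}$ to the left to obtain a lower bound $e^{-O(\delta)t}$ on the entire sum $\sum_s \binom{t}{s} O(\rho/\delta^3 + \alpha t/s)^s\,\pE_\mu[y_1^{4(t-s)}(1-y_1^2)^{2s}]$, then uses a pigeonhole argument to pick one index $s^*$ whose summand is at least $e^{-O(\delta)t}/t$, and shows (by contradiction, comparing $\binom{t}{s^*}O(\cdot)^{s^*}$ against $e^{-\beta t}$ for $\beta=1/\log^2 n$) that $s^* \leq \beta t$. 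It then bounds $\pE_\mu[y_1^{2t}]$ from below by $\pE_\mu[y_1^{4(t-s^*)}(1-y_1^2)^{2s^*}]$ using the SoS inequality $y_1^{4(t-s^*)}(1-y_1^2)^{2s^*}\leq y_1^{2t}$ (valid for $s^*\leq t/2$ under the unit-sphere constraint). This yields $\pE_\mu[y_1^{2t}]\geq e^{-O(t/\log n)}$ directly, without ever controlling the pure $\pE_\mu[y_1^{4t}]$ term; the paper also takes the smaller $\delta = 1/\log^3 n$ so that the $(1+\delta)^{3t}$ blow-up is dominated by the $\beta\log n$ term in the exponent. Your reduction from $\pE_\mu[\angles{r_1,x}^{2t}]$ to $\pE_\mu[\angles{r_1,x}^{4t}]$ via the SoS inequality $\angles{r_1,x}^{2t}\geq\angles{r_1,x}^{4t}$ is valid but commits you to proving a stronger bound than the lemma requires, and that stronger bound is what the absorption step cannot deliver.
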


\begin{proof}
    Recall from \Cref{model:planted-dist} that $\wt{A} = AR$ and $v$ is the first column of $A$.
    Denote $y = Rx$ such that $\angles{\wt{a}_i,x} = \angles{a_i, Rx} = \angles{a_i, y}$ and $\wt{P}_t(x) = P_t(y)$.
    Note also that $y_1 = \angles{r_1,x}$.
    Our goal is to prove that $\pE_{\mu}[y_1^{2t}]$ is large.

    For each row $a_i$ in $A$, we write $a_i = (v_i, \ol{a}_i)$, and for simplicity denote $w_i \coloneqq \angles{\ol{a}_i, \ol{y}}$ such that $\angles{a_i, y} = y_1 v_i + w_i$.
    We first apply \Cref{lem:jensen} with $\delta \coloneqq \frac{1}{\log^3 n}$ to each $(y_1 v_i + w_i)^4$:
    \begin{equation*}
        \sststile{4t}{y}
        \braces*{ P_t(y)
        \leq t! (1+\delta)^{3t} \sum_{S\subseteq[n]: |S|=t} \prod_{i\in S} \parens*{y_1^4 v_i^4 + \frac{w_i^4}{\delta^3}} } \mper
    \end{equation*}
    Next, we expand the above by splitting $S$ into disjoint $S'$ and $S''$:
    \begin{equation*}
        t! (1+\delta)^{3t} \sum_{s=0}^t \frac{y_1^{4(t-s)}}{\delta^{3s}}  \sum_{\substack{S' \subseteq[n]: \\ |S'|=t-s}} \prod_{i\in S'} v_i^4 \sum_{\substack{S'' \subseteq [n] \setminus S': \\ |S''|=s}} \prod_{j\in S''} w_j^4 \mper
        \numberthis \label{eq:split-S}
    \end{equation*}

    Recall that $Q_s(z) \coloneqq s! \sum_{S\subseteq[n]: |S|=s} \prod_{i\in S} z_i^4$.
    For any $s > 0$ and $S' \subseteq [n]$ with $|S'| = t-s$,
    by \Cref{lem:sum-of-distinct-products-sos}, with high probability we have
    \begin{equation*}
        \sststile{y}{4s}
        \braces*{ s! \sum_{\substack{S'' \subseteq[n] \setminus S': \\ |S''|=s}} \prod_{j\in S''} w_j^4
        \leq Q_s(w)
        \leq O\parens*{\frac{1}{n} + \frac{d^2}{n^2 s} \log^2 n}^{s} \cdot \norm*{\ol{y}}_2^{4s} } \mper
    \end{equation*}
    Note that in the first inequality, we upper bound the summation over $S'' \subseteq [n] \setminus S'$ by the summation over $S \subseteq [n]$, resulting in $Q_s(w)$.

    Next, $(t-s)!\sum_{|S'| = t-s} \prod_{i\in S'} v_i^4 = Q_{t-s}(v) \leq \|v\|_4^{4(t-s)}$.
    Moreover, by \Cref{lem:compressible-4-norm}, we have $\|v\|_4^4 \geq \Omega(\frac{1}{\rho n})$ since $v$ is a $(\rho, 1/2)$-compressible unit vector.
    Then, from \Cref{eq:split-S} we have
    \begin{align*}
        \sststile{4t}{y}\ 
        \Bigg \{ P_t(y)
        &\leq (1+\delta)^{3t} \sum_{s=0}^{t} \binom{t}{s} \frac{y_1^{4(t-s)}}{\delta^{3s}} \cdot Q_{t-s}(v) Q_s(w) \\
        &\leq (1+\delta)^{3t} \sum_{s=0}^{t} \binom{t}{s} y_1^{4(t-s)} \|v\|_4^{4(t-s)} \cdot O\parens*{\frac{1}{\delta^3 n} + \frac{d^2 \log^2 n}{\delta^3 n^2 s}}^s \norm*{\ol{y}}_2^{4s} \\
        &\leq (1+\delta)^{3t} \norm{v}_4^{4t} \sum_{s=0}^t \binom{t}{s} y_1^{4(t-s)} O\parens*{\frac{\rho}{\delta^3} + \frac{\rho d^2 \log^2 n}{\delta^3 n s}}^s \norm*{\ol{y}}_2^{4s} \Bigg\} \mper
        \numberthis \label{eq:Pt-ub}
    \end{align*}

    We now take the pseudo-expectation $\pE_{\mu}$ of both sides above.
    For simplicity, we denote $\alpha \coloneqq \frac{\rho d^2 \log^2 n}{\delta^3 n t} \leq \frac{1}{\log^C n}$ for some large enough constant $C$ (since $t \geq \frac{\rho d^2}{n} \polylog(n)$).
    Moreover, $\mu$ is the pseudo-distribution that maximizes $\pE_\mu[\wt{P}_t(x)]$, and in particular the distribution supported on $x = r_1$ is feasible, so $\pE_{\mu}[\wt{P}_t(x)] \geq \wt{P}_t(r_1) = P_t(e_1) = Q_t(v)$.
    Moreover, by assumption we have $Q_t(v)^{1/t} \geq (1-\eta) \|v\|_4^4$ with $\eta = \frac{1}{\log^C(n)} \leq \delta$.
    Thus, since $\|\ol{y}\|_2^2 = 1-y_1^2$ and  $\eta \leq \delta \leq o(1)$, from \Cref{eq:Pt-ub} it follows that
    
    \begin{equation*}
        e^{-O(\delta)t} \leq \frac{(1-\eta)^t}{(1+\delta)^{3t}}  \leq \sum_{s=0}^{t} \binom{t}{s} O\parens*{\frac{\rho}{\delta^3} + \frac{\alpha t}{s}}^s \pE_{\mu} \bracks*{y_1^{4(t-s)} (1-y_1^2)^{2s}} \mper
    \end{equation*}
    Thus, there must be an $s^* \in \{0,1,\dots, t\}$ such that
    \begin{equation*}
        \binom{t}{s^*} O\parens*{\frac{\rho}{\delta^3} + \frac{\alpha t}{s^*}}^{s^*} \pE_{\mu} \bracks*{y_1^{4(t-s^*)} (1-y_1^2)^{2s^*}} \geq \frac{1}{t} e^{-O(\delta)t} \geq e^{- O(\delta) t} \mper
        \numberthis \label{eq:one-s-large}
    \end{equation*}
    Here we use the fact that $\delta t \gg \log t$.

    Let $\beta \coloneqq \frac{1}{\log^2 n} \ll 1/2$.
    If $s^* \leq \beta t$, then $\binom{t}{s^*} O(\frac{\rho}{\delta^3} + \frac{\alpha t}{s^*})^{s^*} \leq t^{O(s^*)} \leq e^{O(\beta t\log n)}$ since $t \leq n$.
    Then, as $4(t-s^*) \geq 2t$, we have
    \begin{equation*}
        \pE_{\mu}[y_1^{2t}]
        \geq \pE_{\mu} \bracks*{y_1^{4(t-s^*)} (1-y_1^2)^{2s^*}}
        \geq e^{-O(\delta+ \beta \log n)t} \geq e^{-O(t/\log n)} \mper
    \end{equation*}

    On the other hand, we claim that $s^*$ cannot be larger than $\beta t$.
    If $s^* > \beta t$, then
    \begin{equation*}
        \binom{t}{s^*} O\parens*{\frac{\rho}{\delta^3} + \frac{\alpha t}{s^*}}^{s^*}
        \leq O\parens*{\frac{1}{\beta} \parens*{\frac{\rho}{\delta^3} + \frac{\alpha}{\beta}}}^{s^*}
        \leq e^{-\beta t}
        < e^{- O(\delta) t} \mcom
    \end{equation*}
    since our parameters satisfy $\rho, \alpha \leq \frac{1}{\log^{C} n}$ for some large enough constant $C$ so that $\frac{1}{\beta}(\frac{\rho}{\delta^3} + \frac{\alpha}{\beta}) \ll 1$, and the last inequality follows from $\delta = \frac{1}{\log^3 n} \ll \beta$.
    This contradicts \Cref{eq:one-s-large}.
    Thus, it must be that $s^* \leq \beta t$, and we have $\pE_\mu[y_1^{2t}] \geq e^{-O(t/\log n)}$, completing the proof.
\end{proof}

With \Cref{lem:selecting-the-vector,lem:large-y1} and \Cref{lem:rounding-powers} (the rounding algorithm of \cite{BarakKS15}) in hand, we can now prove that \Cref{alg:planted-sparse-vector} succeeds in recovering $v$, completing the proof of \Cref{thm:planted-sparse-vector}.

\begin{proof}[Proof of \Cref{thm:planted-sparse-vector}]
    Let $r_1\in \R^d$ be the first row of the unknown rotation matrix $R$ from \Cref{model:planted-dist}, and note that $r_1$ is a unit vector.
    By \Cref{lem:large-y1}, the pseudo-distribution $\mu$ obtained from step (1) of \Cref{alg:planted-sparse-vector} satisfies that $\pE_{\mu}[\angles{r_1,x}^{2t}] \geq e^{-O(t/\log n)}$.

    By repeating the algorithm in \Cref{lem:rounding-powers} $2^{\wt{O}(t)}$ times, with high probability at least one of the unit vectors $\wh{x}$ satisfies $\angles{\wh{x}, r_1} \geq 1 - O(\delta)$ where $\delta \coloneqq \frac{1}{\log n}$, which means that $\|R(\wh{x} - r_1)\|_2 = \|R\wh{x} - e_1\|_2 \leq O(\delta)$.
    Since $v$ is a $(\rho,\gamma)$-compressible unit vector (for $\gamma$ bounded away from $1$), by \ref{item:close-to-e1-implies-sparse} of \Cref{lem:selecting-the-vector}, it follows that $\wh{v} = \wt{A} \wh{x}$ satisfies $\max_{S: |S|\leq \rho n} \|\wh{v}_S\|_2 \geq 1 - \gamma - O(\delta)$.

    Thus, in step (3) of \Cref{alg:planted-sparse-vector}, we will choose an $\wh{x}^*$ from the list such that $\wh{v}^* = \wt{A} \wh{x}^*$ satisfies $\max_{S: |S|\leq \rho n} \|\wh{v}_S\|_2 \geq 1 - \gamma - O(\delta)$.
    By \ref{item:far-from-e1-implies-spread} of \Cref{lem:selecting-the-vector}, we have that either $\|\wh{x}^* - r_1\|_2$ or $\|\wh{x}^* + r_1\|_2 \leq o(1)$.
    Since the singular values of $\wt{A}$ are all $1 \pm o(1)$ by \Cref{lem:A-norm}, it follows that $\wh{v}^*$ is $o(1)$-close to $\pm v$ and $\|\wh{v}^*\|_2 = 1 \pm o(1)$.
    This completes the proof.
\end{proof}

\section*{Acknowledgements}
We would like to thank Jeff Xu for inspiring technical discussions on the encoding scheme and PUR factors, Pravesh K.\ Kothari and Sidhanth Mohanty for discussions on this and related problems, and last but not least Hongjie Chen and Tommaso d'Orsi for discussions on their previous works \cite{dOrsiKNS20,ChendO22}.

\bibliographystyle{alpha}
\bibliography{main}

\end{document}